\documentclass[reprint,superscriptaddress,
 amsmath,amssymb,
 aps,
pra,
longbibliography
]{revtex4-2}

\usepackage{graphicx}
\usepackage{dcolumn}
\usepackage{bm}
\usepackage{physics}
\usepackage{amsfonts, amsmath}
\usepackage{dsfont}
\usepackage{amsthm}
\usepackage{bm}
\usepackage{hyperref}
\usepackage{bbold}
\usepackage{color}
\usepackage{xcolor}
\usepackage{lipsum,babel}
\usepackage[normalem]{ulem}
\usepackage{enumerate}
\usepackage{comment}

\newtheorem{theorem}{Theorem}
\newtheorem{corollary}{Corollary}

\newtheorem{lemma}{Lemma}
\newtheorem{property}{Property}

\begin{document}

\definecolor{navy}{RGB}{46,72,102}
\definecolor{pink}{RGB}{219,48,122}
\definecolor{grey}{RGB}{184,184,184}
\definecolor{yellow}{RGB}{255,192,0}
\definecolor{grey1}{RGB}{217,217,217}
\definecolor{grey2}{RGB}{166,166,166}
\definecolor{grey3}{RGB}{89,89,89}
\definecolor{red}{RGB}{255,0,0}

\preprint{APS/123-QED}

\title{Criteria for unbiased estimation: applications to noise-agnostic sensing and {quantum channel estimation}}
\author{Hyukgun Kwon}
\thanks{These authors contributed equally to this work: HK (\href{mailto:kwon37hg@sejong.ac.kr}{kwon37hg@sejong.ac.kr}); KT (\href{mailto:tsubouchi@noneq.t.u-tokyo.ac.jp}{tsubouchi@noneq.t.u-tokyo.ac.jp}).}
\affiliation{Department of Physics and Astronomy, Sejong University, 209 Neungdong-ro Gwangjin-gu, Seoul 05006, Republic of Korea}
\affiliation{Pritzker School of Molecular Engineering, University of Chicago, Chicago, Illinois 60637, USA}
\affiliation{Center for Quantum Technology, Korea Institute of Science and Technology, Seoul 02792, Republic of Korea}

\author{Kento Tsubouchi}
\thanks{These authors contributed equally to this work: HK (\href{mailto:kwon37hg@sejong.ac.kr}{kwon37hg@sejong.ac.kr}); KT (\href{mailto:tsubouchi@noneq.t.u-tokyo.ac.jp}{tsubouchi@noneq.t.u-tokyo.ac.jp}).}
\affiliation{Department of Applied Physics, University of \mbox{Tokyo, 7-3-1} Hongo, Bunkyo-ku, Tokyo 113-8656, Japan}

\author{Chia-Tung Chu}
\affiliation{Pritzker School of Molecular Engineering, University of Chicago, Chicago, Illinois 60637, USA}

\author{Liang Jiang}
\email{liangjiang@uchicago.edu}
\affiliation{Pritzker School of Molecular Engineering, University of Chicago, Chicago, Illinois 60637, USA}

\begin{abstract} 
We establish the necessary and sufficient conditions for {local} unbiased estimation in multi-parameter estimation tasks.
More specifically, we first consider quantum state estimation, where multiple parameters are encoded in a quantum state, and derive simple and intuitive necessary and sufficient conditions for a {local} unbiased estimation based on the derivatives of the encoded state.
To demonstrate the utility of our framework, we consider phase estimation under unknown Pauli noise.
We show that while {local} unbiased phase estimation is infeasible with a naive scheme, employing an entangled probe with a noiseless ancilla enables {local} unbiased estimation.
{We then extend our analysis to quantum channel estimation and derive a necessary and sufficient condition for local unbiased estimability of channel parameters, allowing noiseless ancillae, general CPTP controls, and multiple uses of the channel. As a concrete application, we analyze unbiased estimation of noise parameters affecting non-Clifford gates via cycle benchmarking under SPAM errors.}
\end{abstract}

\maketitle

\section{Introduction}
In many practical applications, quantum systems are influenced by multiple parameters, and accurate estimation of the parameters is essential for the development of quantum science and technology.
For instance, in quantum metrology, phase estimation—relevant to many important physical applications such as gravitational wave detection and magnetic field sensing~\cite{intro-caves1981quantum, intro-abadie2011gravitational,  intro-aasi2013enhanced,  intro-PhysRevLett.71.1355,intro-taylor2008high, intro-baumgratz2016quantum}—is governed not only by the phase itself but also by the noise parameters of the system~\cite{phase-PhysRevA.89.023845, phase-vidrighin2014joint}.
In quantum channel learning, characterizing the Pauli coefficients of the given Pauli channel is essential for advancing quantum hardware development and realizing the quantum computation~\cite{wallman2016noise}.

Multi-parameter estimation, however, presents significant challenges that are not present in single-parameter estimation scenarios.
One of the critical issues arises in quantum state estimation, where multiple parameters are encoded in a quantum state: certain parameters cannot be {locally} unbiasedly estimated when the associated quantum Fisher information matrix (QFIM) is non-invertible~\cite{lemma-PhysRevX.10.031023, albarelli2019upper}.
{Locally} unbiased estimation is essential for meaningful and reliable parameter estimation, as it ensures that the estimated value equals the true parameter on average.
In this context, Refs.~\cite{lemma-PhysRevX.10.031023, albarelli2019upper} establish a necessary and sufficient condition for the existence of {locally} unbiased estimators; however, verifying this condition requires diagonalizing the QFIM—a procedure that is computationally demanding and may pose challenges for practical applications.

{Another critical issue arises in quantum channel estimation, where the purpose is to estimate parameters characterizing an unknown quantum channel: certain parameters of a quantum channel are fundamentally unlearnable. For instance, Ref.~\cite{learnability-chen2023learnability} showed that, in the presence of state-preparation and measurement (SPAM) errors, certain Pauli noise parameters affecting Clifford gates cannot be identified, since different combinations of SPAM errors and Pauli noise can generate the same effective noise process. Such exact non-identifiability immediately precludes unbiased estimation of the corresponding parameters. Motivated by this observation, we address the following general question: \textit{For an arbitrary parameterized quantum channel, which parameters admit locally unbiased estimators with finite variance, and which do not?}}

{In this work, we address these challenges by establishing necessary and sufficient conditions for local unbiased estimation that are formulated independently of the QFIM, relying instead directly on derivatives of the quantum state and channel with respect to the parameters.}
In the first part, we focus on quantum state estimation.
We derive a necessary and sufficient condition for the {local} unbiased estimation of the parameters characterizing a quantum state. In particular, in Theorem \ref{theorem1}, we show that a {locally} unbiased estimator of a parameter exists if and only if the derivative of the encoded state with respect to that parameter cannot be expressed as a linear combination of the derivatives with respect to the other parameters. Conversely, no {locally} unbiased estimator exists for the parameters that violate the condition.
Notably, this condition does not require the explicit calculation of the QFIM, therefore, it is readily verifiable.
As an application of our main results in quantum metrology, we investigate phase estimation in the presence of Pauli noise, focusing on the scenario where all the noise parameters are unknown. We show that {local} unbiased phase estimation is infeasible by naively using a fixed quantum probe without ancilla.
However, employing an entangled quantum probe with a noiseless ancilla enables {local} unbiased phase estimation.

{In the second part, we consider quantum channel estimation. We extend Theorem \ref{theorem1} beyond a quantum state, to the quantum channel.
In Corollary \ref{corollary1}, we establish the necessary and sufficient condition for the local unbiased estimation of the parameters characterizing a quantum channel.
If the condition is violated, local unbiased estimation of the corresponding parameters of the quantum channel becomes fundamentally impossible, even when noiseless ancillae, general CPTP control, and multiple uses of the channel are allowed. Importantly, the applicability of Corollary \ref{corollary1} does not rely on any special structural property of the channel under consideration. Since the criterion is formulated solely in terms of derivatives of the quantum channel with respect to the parameters, it applies broadly to general quantum channel estimation scenarios. As an application, we examine the unbiased estimation of the noise parameters affecting non-Clifford gates via cycle benchmarking.}

\section{Quantum state estimation and noise agnostic sensing}
\subsection{Unbiased quantum state estimation}
{We impose the following standard regularity assumptions. 
Let $\{\hat{\rho}_{\vb*{\theta}}\}_{\vb*{\theta}\in\Theta}$ be a twice continuously differentiable parametric family of quantum states on a finite-dimensional Hilbert space, where $\Theta\subset\mathbb{R}^M$ with $M<\infty$ denotes a finite-dimensional parameter space.
Throughout this work, we assume that the functional form of the map $\vb*{\theta}\mapsto \hat{\rho}_{\vb*{\theta}}$ is known, whereas the true value of the parameter vector $\vb*{\theta}=(\theta_1,\theta_2,\ldots,\theta_M)^T$ is unknown.
Thus, our setting is parameter estimation within a specified quantum statistical model.}
To estimate $\vb*{\theta}$, a POVM measurement denoted as $\{\hat{\Pi}_{\vb*{x}}\}_{\vb*{x}{\in X}}$ is performed, where each element $\hat{\Pi}_{\vb*{x}}$ corresponds to a possible measurement outcome ${\vb*{x}\in X}$ {from the outcome set $X$}.
Based on the measurement outcome $\vb*{x}$, an estimator $\tilde{\theta}_{i}=\tilde{\theta}_{i}(\vb*{x})$ for each parameter $\theta_{i}$ is constructed.
{We define} locally unbiased estimator $\tilde{\theta}_i$ of $\theta_i$ {as an} estimator satisfying 
\begin{align}
    \big<\tilde{\theta_i}\big>=\theta_i,~~\partial_{\theta_j}\big<\tilde{\theta_i}\big>=\delta_{ij}
\end{align}
for some specific value of $\vb*{\theta}$ \cite{helstrom1969quantum, holevo2011probabilistic, est-paris2009quantum}.
{Local unbiasedness is a standard benchmark in local quantum estimation theory and is the condition under which the standard quantum Cramér--Rao bound (QCRB) introduced below is formulated. 
It ensures that, at the parameter point under consideration, the estimator has the correct zeroth- and first-order response to the target parameter while being first-order insensitive to nuisance parameters.}

{For a locally unbiased estimator $\tilde{\theta}_i$,} the estimation error can be lower bounded by the QCRB as
\begin{align}
    \Delta^{2}\theta_i:=\big<\big(\tilde{\theta}_i-\theta_i\big)^{2}\big> \geq \vb*{w}_i^{\mathrm{T}}\mathbf{J}^{-1}\vb*{w}_i, \label{direQCRB}
\end{align}
where $\mathbf{J}$ is the QFIM with elements defined as $\mathbf{J}_{ij} = \mathrm{Tr}[\hat{\rho}_{\vb*{\theta}}\{\hat{L}_{i},\hat{L}_{j}\}{/2}]$, $\vb*{w}_i\in\mathbb{R}^M$ is a column vector with its $j$-th element defined as $(\vb*{w}_i)_j=\delta_{ij}$, and $\big<\cdot \big>$ denotes the average over all possible measurement outcomes~{\cite{helstrom1969quantum, holevo2011probabilistic, est-paris2009quantum}}.
Here, $\hat{L}_{i}$ is the symmetric logarithmic derivative operators satisfying $\pdv{\hat{\rho}_{\vb*{\theta}}}{\theta_{i}}=\frac{1}{2}\{\hat{L}_{i},\hat{\rho}_{\vb*{\theta}}\}$.
The equality in the QCRB can be attained by the POVM measurement that consists of the eigenbasis of $\hat{L}_{i}$~\cite{est-Suzuki_2020}. This guarantees that there exists a locally unbiased estimator of $\theta_i$ with finite estimation error $\vb*{w}_i^{\mathrm{T}}\mathbf{J}^{-1}\vb*{w}_i$, whenever the QFIM $\mathbf{J}$ is invertible. Here, it is worth emphasizing that although such a POVM saturates the equality for a specific $i$, a single POVM that simultaneously saturates the equality for all parameters generally does not exist, since the symmetric logarithmic derivative operators $\hat{L}_{i}$ and $\hat{L}_{j\neq i}$ may not commute. Nevertheless, achievability of Eq. \eqref{direQCRB} for each $\theta_i$ guarantees the {local} unbiased estimation of all parameters when we have access to multiple copies of $\hat{\rho}_{\vb*{\theta}}$, even though it may not yield the optimal estimation error for their simultaneous estimation. A straightforward approach in the multiparameter setting is to prepare $M$ copies of $\hat{\rho}_{\vb*{\theta}}$ and estimate each parameter $\theta_{i}$ individually by performing the POVM corresponding to the eigenbasis of $\hat{L}_{i}$ that saturates Eq. \eqref{direQCRB} for that parameter. The corresponding estimation error is then given by $\vb*{w}_i^{\mathrm{T}}\mathbf{J}^{-1}\vb*{w}_i$, which remains finite whenever the QFIM $\mathbf{J}$ is invertible. Repeating this procedure for all $1\leq i \leq M$ is equivalent to independent estimation of each parameter within the locally unbiased framework.

However, when the QFIM $\mathbf{J}$ is not invertible, the existence of a locally unbiased estimator for $\theta_i$ with finite estimation error is no longer guaranteed.
In particular, such an estimator exists if and only if $\vb*{w}_i$ lies in the support of $\mathbf{J}$~\cite{lemma-PhysRevX.10.031023, albarelli2019upper}; that is described by the following lemma:
\begin{lemma}\label{lemma1}
There exists a {locally} unbiased estimator of $\phi=\vb*{w}^{\mathrm{T}}\vb*{\theta}$ with finite estimation error if and only if $\vb*{w} \in \mathrm{supp}\left(\mathbf{J}\right)$.
In other words
\begin{align}
    \mathbf{J}^{+}\mathbf{J}\vb*{w}  = \vb*{w}. \label{NScondit1}
\end{align}
When Eq.~\eqref{NScondit1} is satisfied, the achievable lower bound of the estimation error of $\phi=\vb*{w}^{\mathrm{T}}\vb*{\theta}$ is given by generalized QCRB:
\begin{align}
    \Delta^{2}\phi=\vb*{w}^{\mathrm{T}}\mathbf{C}\vb*{w} \geq \vb*{w}^{\mathrm{T}}\mathbf{J}^{+}\vb*{w},\label{gqcrb} 
\end{align}
Here, $\mathbf{J}^{+}$ is the Moore-Penrose pseudo inverse of $\mathbf{J}$ and $\mathrm{supp}\left(\mathbf{J}\right)$ is support of $\mathbf{J}$.
\end{lemma}
{We note that Lemma~\ref{lemma1} is a parameter-wise statement. In particular, a singular QFIM does not imply that all parameters fail to admit locally unbiased estimators. Among the parameters \((\theta_1,\ldots,\theta_M)\), those satisfying Eq.~\eqref{NScondit1} admit locally unbiased estimators with finite variance, whereas those violating the condition do not.}
This result was originally proved in Refs. \cite{lemma-PhysRevX.10.031023, albarelli2019upper}. For completeness, we also provide independent proof in the Supplemental Material (SM) Sec. S2 \cite{Note1}.

While Eq.~\eqref{NScondit1} establishes the necessary and sufficient condition for the existence of a locally unbiased estimator, its application involves the diagonalization of the QFIM $\mathbf{J}$, which requires demanding calculation in general. 
{We now present an equivalent tangent-space formulation in Theorem \ref{theorem1} that is independent of QFIM. This formulation translates the condition $\vb*{w} \in \mathrm{supp}\left(\mathbf{J}\right)$ into a linear-independence condition among the state derivatives. It does not require the explicit calculation or diagonalization of the QFIM and will also be useful for formulating the channel-estimation criterion discussed below.} 
Notably, Theorem \ref{theorem1} offers a more accessible formulation compared to Eq.~\eqref{NScondit1} in certain estimation scenarios, particularly in the \emph{noise-agnostic sensing}, which will be discussed in the later part of this section. {Nevertheless, for the clarification, we note that this does not imply that verifying Theorem \ref{theorem1} is universally easier than checking the support condition in Lemma \ref{lemma1}; the relative convenience depends on the specific estimation problem under consideration.}
\begin{theorem}\label{theorem1}
    There exists a locally unbiased estimator of $\theta_{i}$ with finite estimation error if and only if $\pdv{\hat{\rho}_{\vb*{\theta}}}{\theta_{i}}$ cannot be expressed with a linear combinations of $\pdv{\hat{\rho}_{\vb*{\theta}}}{\theta_{j}}\big\vert_{j \neq i}$, i.e., for all $c_j\in\mathbb{C}$,
    \begin{align}
    \pdv{\hat{\rho}_{\vb*{\theta}}}{\theta_{i}} \neq \sum_{j\neq i}^{M} c_{j} \pdv{\hat{\rho}_{\vb*{\theta}}}{\theta_{j}}. \label{NScondit2}
    \end{align}
\end{theorem}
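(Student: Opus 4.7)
The plan is to reduce Theorem \ref{theorem1} to Lemma \ref{lemma1} by translating the algebraic support condition $\vb*{w}\in\mathrm{supp}(\mathbf{J})$ into a statement purely about the state derivatives $\partial\hat{\rho}_{\vb*{\theta}}/\partial\theta_{i}$. Specializing Lemma \ref{lemma1} to $\vb*{w}=\vb*{e}_{1}=(1,0,\dots,0)^{\mathrm{T}}$, the question becomes: when does $\vb*{e}_{1}$ lie in $\mathrm{supp}(\mathbf{J})$? Since $\mathbf{J}$ is positive semi-definite and Hermitian, $\mathrm{supp}(\mathbf{J})=\ker(\mathbf{J})^{\perp}$, so $\vb*{e}_{1}\in\mathrm{supp}(\mathbf{J})$ iff every $\vb*{v}\in\ker(\mathbf{J})$ satisfies $v_{1}=0$. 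Thus the core task is to characterize $\ker(\mathbf{J})$ in terms of the state derivatives.

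The key bridge I would establish is the equivalence $\vb*{v}\in\ker(\mathbf{J}) \iff \sum_{i}v_{i}\,\partial\hat{\rho}_{\vb*{\theta}}/\partial\theta_{i}=0$. For the forward direction, I would use that $\vb*{v}^{\mathrm{T}}\mathbf{J}\vb*{v}\propto \mathrm{Tr}[\hat{\rho}_{\vb*{\theta}}\hat{L}_{v}^{2}]$ with $\hat{L}_{v}:=\sum_{i}v_{i}\hat{L}_{i}$; vanishing of this trace implies $\hat{L}_{v}\hat{\rho}_{\vb*{\theta}}^{1/2}=0$, hence $\{\hat{L}_{v},\hat{\rho}_{\vb*{\theta}}\}=0$, which upon using the SLD relation gives $\sum_{i}v_{i}\partial\hat{\rho}_{\vb*{\theta}}/\partial\theta_{i}=\tfrac{1}{2}\{\hat{L}_{v},\hat{\rho}_{\vb*{\theta}}\}=0$. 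For the converse, vanishing of $\{\hat{L}_{v},\hat{\rho}_{\vb*{\theta}}\}$ implies, after multiplying by $\hat{L}_{v}$ and taking the trace, that $\mathrm{Tr}[\hat{\rho}_{\vb*{\theta}}\hat{L}_{v}^{2}]=0$, i.e.\ $\vb*{v}\in\ker(\mathbf{J})$.

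With this equivalence in hand, the theorem follows by contrapositive. If $\partial\hat{\rho}_{\vb*{\theta}}/\partial\theta_{1}=\sum_{i\geq 2}c_{i}\,\partial\hat{\rho}_{\vb*{\theta}}/\partial\theta_{i}$, then $\vb*{v}=(-1,c_{2},\dots,c_{M})^{\mathrm{T}}\in\ker(\mathbf{J})$ with $v_{1}\neq 0$, so $\vb*{e}_{1}\notin\mathrm{supp}(\mathbf{J})$ and Lemma \ref{lemma1} rules out a finite-variance unbiased estimator. Conversely, if no such linear combination exists, every $\vb*{v}\in\ker(\mathbf{J})$ must have $v_{1}=0$ (otherwise one could solve for $\partial_{1}\hat{\rho}_{\vb*{\theta}}$), so $\vb*{e}_{1}\in\mathrm{supp}(\mathbf{J})$ and Lemma \ref{lemma1} supplies an unbiased estimator with finite error. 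The complex coefficients $c_{i}\in\mathbb{C}$ in Eq.~\eqref{NScondit2} cause no problem: since $\partial_{i}\hat{\rho}_{\vb*{\theta}}$ are Hermitian, Hermitian conjugation of any alleged complex relation yields one with coefficients $c_{i}^{*}$, and averaging reduces to the real case.

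The main obstacle is the intermediate step $\mathrm{Tr}[\hat{\rho}_{\vb*{\theta}}\hat{L}_{v}^{2}]=0\Rightarrow \{\hat{L}_{v},\hat{\rho}_{\vb*{\theta}}\}=0$, which needs care when $\hat{\rho}_{\vb*{\theta}}$ is rank-deficient, because the SLD operators $\hat{L}_{i}$ are only uniquely defined on $\mathrm{supp}(\hat{\rho}_{\vb*{\theta}})$. The clean way to handle this is to work in the eigenbasis of $\hat{\rho}_{\vb*{\theta}}$, restrict to its support, and use that $\partial_{i}\hat{\rho}_{\vb*{\theta}}=\tfrac{1}{2}\{\hat{L}_{i},\hat{\rho}_{\vb*{\theta}}\}$ is independent of the off-support ambiguity in $\hat{L}_{i}$; this makes the equivalence $\vb*{v}\in\ker(\mathbf{J})\Leftrightarrow \sum_{i}v_{i}\partial_{i}\hat{\rho}_{\vb*{\theta}}=0$ unambiguous and completes the proof.
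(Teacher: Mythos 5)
Your proof is correct, and it reaches the paper's conclusion by a genuinely cleaner technical route even though the top-level strategy is the same: like the paper, you reduce Theorem~\ref{theorem1} to Lemma~\ref{lemma1} by characterizing when $\vb*{e}_1\in\mathrm{supp}(\mathbf{J})$. Where you differ is in how the bridge between $\ker(\mathbf{J})$ and linear dependence of the derivatives is built. The paper splits the two directions: for ``only if'' it uses a pseudoinverse computation ($\mathbf{J}\vb*{w}_1=\mathbf{J}\vb*{c}$ forces $\vb*{w}_1^{\mathrm{T}}\mathbf{J}^{+}\mathbf{J}\vb*{w}_1=\vb*{w}_1^{\mathrm{T}}\vb*{c}=0$, contradicting $\mathbf{J}^{+}\mathbf{J}\vb*{w}_1=\vb*{w}_1$), and for ``if'' it expands $\partial_1\hat{\rho}_{\vb*{\theta}}$ in the eigenbasis of $\mathbf{J}$ and proves a dedicated Property ($\hat{\rho}_{\vb*{\lambda}_k}=0$ for null eigenvectors) via the explicit matrix-element formula for the QFIM in the eigenbasis of $\hat{\rho}_{\vb*{\theta}}$. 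Your single equivalence $\vb*{v}\in\ker(\mathbf{J})\Leftrightarrow\sum_i v_i\,\partial_i\hat{\rho}_{\vb*{\theta}}=0$, obtained from the factorization $\vb*{v}^{\mathrm{T}}\mathbf{J}\vb*{v}\propto\mathrm{Tr}[\hat{\rho}_{\vb*{\theta}}\hat{L}_v^2]=\|\hat{\rho}_{\vb*{\theta}}^{1/2}\hat{L}_v\|_F^2$, dispatches both directions at once and also makes the reduction of complex to real coefficients transparent. What your route buys is brevity and unification; what it costs is that the whole argument hinges on the SLD relation $\sum_i v_i\,\partial_i\hat{\rho}_{\vb*{\theta}}=\tfrac12\{\hat{L}_v,\hat{\rho}_{\vb*{\theta}}\}$ holding as an exact operator identity, i.e.\ on the kernel--kernel block of each $\partial_i\hat{\rho}_{\vb*{\theta}}$ vanishing when $\hat{\rho}_{\vb*{\theta}}$ is rank-deficient. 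You correctly flag this as the delicate point but only sketch its resolution; the paper supplies the missing detail explicitly in its proof of Property~3 of the Supplemental Material, where positivity of $\hat{\rho}_{\vb*{\theta}+\varepsilon\vb*{\lambda}_k}$ to first order in $\varepsilon$, combined with tracelessness, forces that block to be the zero matrix. Adding that two-line positivity argument (which also follows from differentiating $\langle\psi^0|\hat{\rho}_{\vb*{\theta}}|\psi^0\rangle\ge 0$ at an interior point of the parameter domain) would make your proof fully self-contained.
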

See Methods Sec. \ref{methods:prooftheorem1} for the proof. 
The intuitive explanation of Theorem \ref{theorem1} is as follows: in the context of local parameter estimation, where all $\theta_{i}$ varies near $0$, the encoded state can be approximated as 
\begin{align}
    \hat{\rho}_{\vb*{\theta}} \approx \hat{\rho}_{\vb*{0}}+\sum_{i = 1}^{M} \theta_{i} \pdv{\hat{\rho}_{\vb*{\theta}}}{\theta_{i}},
\end{align}
with $\vb*{0}$ denoting the zero vector. 
{If Eq. \eqref{NScondit2} is violated, the first-order change induced by $\theta_{i}$ can be mimicked by suitable first-order changes of the nuisance parameters. Therefore, $\theta_{i}$ has no distinguishable tangent-space signature in the presence of the other unknown parameters. In this differential local sense, $\theta_{i}$ is not identifiable, and Theorem \ref{theorem1} shows that a locally unbiased estimator with finite variance does not exist.}

Let us provide remarks on Theorem~\ref{theorem1}.
First, although Eq. \eqref{NScondit2} appears to concern a single parameter $\theta_{i}$, however, by applying Theorem~\ref{theorem1} to each parameter individually, we can classify which parameters can be {locally} unbiasedly estimated and which cannot be even though we have access to multiple copies of $\hat{\rho}_{\vb*{\theta}}$. Second, while we focus on the {local} unbiased estimation of $\theta_{i}$ for simplicity, Theorem \ref{theorem1} can be straightforwardly generalized to an arbitrary parameter $\phi_{i}$, which is an $i$th element of a new parameter set $\vb*{\phi}:=\mathbf{O}\vb*{\theta}$, where $\mathbf{O} \in \mathbb{R}^{M \times M}$ is an orthogonal matrix, by invoking the standard change-of-variables rule \cite{est-Suzuki_2020}. {Third, we note that Theorem \ref{theorem1} is equivalent to the support condition in Lemma \ref{lemma1}; its purpose is to express the same estimability criterion directly in the quantum tangent space, rather than introducing a fundamentally different criterion. Finally, Eq.~\eqref{NScondit2} can be viewed as a differential, or tangent-space, local identifiability condition: the infinitesimal effect of $\theta_i$ cannot be reproduced by variations of the other parameters. This first-order notion is distinct from ordinary local or global identifiability \cite{iden-rothenberg1971identification}. For instance, the model
\begin{align}
\hat{\rho}_\theta=\left(\frac12+\theta^3\right)|0\rangle\langle0|
+\left(\frac12-\theta^3\right)|1\rangle\langle1|
\end{align}
is locally one-to-one at $\theta=0$, whereas $\partial_\theta\hat{\rho}_\theta|_{\theta=0}=0$; hence it is not differentially identifiable and admits no locally unbiased estimator with finite variance at $\theta=0$.}

\subsection{Application to noise-agnostic sensing}
As an application of our main results, we consider noise-agnostic sensing. More specifically, we consider $n$-qubit canonical phase estimation scenario, where an unknown signal $\phi$ is encoded by the quantum channel 
\begin{align}
    \mathcal{U}_{\phi}(\cdot):=\hat{U}_{\phi}(\cdot)\hat{U}_{\phi}^{\dagger}
\end{align}
with unitary operator $\hat{U}_\phi := e^{-i\phi/2 \sum_{i=1}^{n} \hat{Z}_i}$. Here, $\hat{Z}_{i}$ denotes the Pauli-$Z$ operator acting on $i$th qubit, and $\phi$ is the parameter that we want to estimate. To estimate $\phi$, a quantum probe $\hat{\rho}_0$ is prepared and evolves under the unitary dynamics, resulting in the ideal signal state $\hat{\rho}_{\mathrm{id}}:=\mathcal{U}_\phi(\hat{\rho}_{0})$. Here, we consider the scenario where Pauli noise $\mathcal{N}$ occurs after the channel $\mathcal{U}_{\phi}$, (resulting in a noisy state given by $\hat{\rho}:=\mathcal{N}\circ \mathcal{U}_{\phi}(\hat{\rho}_{0})$,) and all the relevant parameters of $\mathcal{N}$ are unknown to us. 

Let us clarify the meaning of \emph{unknown parameters} of the Pauli noise, in terms of the Pauli error rates and the Pauli eigenvalues, which are introduced below. The Pauli noise can be described as 
\begin{align}
    \mathcal{N}(\cdot) = (1 - \sum_{\vb*{a}\neq (\vb*{0},\vb*{0})} p_{\vb*{a}})\cdot + \sum_{\vb*{a}\neq (\vb*{0},\vb*{0})} p_{\vb*{a}} \hat{P}_{\vb*{a}} \cdot \hat{P}_{\vb*{a}},
\end{align}
where $\hat{P}_{\vb*{a}}$ is the Pauli operator that is expressed as 
\begin{align}
    \hat{P}_{\vb*{a}} = \prod_{{j}=1}^n i^{x_{{j}}z_{{j}}}\hat{X}_{{j}}^{x_{{j}}}\hat{Z}_{{j}}^{z_{{j}}},    
\end{align}
and the relevant coefficients $\{p_{\vb*{a}}\}_{\vb*{a} \neq (\vb*{0},\vb*{0})}$ are referred to as the \emph{Pauli error rates}. Here, the index $\vb*{a}:=(\vb*{x},\vb*{z})$ consists of the pair of $n$-bit binary vectors $\vb*{x} = (x_1, \ldots, x_n)$ and $\vb*{z} = (z_1, \ldots, z_n)$, with $x_{{j}},z_{{j'}} \in \{0,1\}, ~\forall {j},{j'}$, and $\sum_{\vb*{a}\neq (\vb*{0},\vb*{0})}$ denotes the summation over all possible $\vb*{x}$ and $\vb*{z}$ except $(\vb*{x}, \vb*{z})=(\vb*{0},\vb*{0})$. Alternatively, the Pauli noise $\mathcal{N}$ can also be characterized using {\it Pauli eigenvalues} 
$\{\lambda_{\vb*{a}}\}_{\vb*{a} \neq (\vb*{0},\vb*{0})}$ which satisfy 
\begin{align}
    \mathcal{N}(\hat{P}_{\vb*{a}}) = \lambda_{\vb*{a}} \hat{P}_{\vb*{a}}.
\end{align}
\emph{Unknown parameters} indicates that all the Pauli error rates (or equivalently, eigenvalues) are unknown to us. 

Let us first consider the naive estimation protocol {under an unknown noise channel $\mathcal{N}$} depicted in Fig.~\ref{fig:metrology} (a), where we simply apply the noisy unitary $\mathcal{N}\circ \mathcal{U}_\phi$ to a fixed $n$-qubit probe state $\hat{\rho}_0$.
Without the noise, the ideal state $\hat{\rho}_{\mathrm{id}}=\mathcal{U}_\phi(\hat{\rho}_{0})$ {for an arbitrary input probe $\hat{\rho}_0$} can be represented as 
\begin{align}
    \hat{\rho}_{\mathrm{id}} = \frac{1}{2^n}\qty(\hat{I} + \sum_{\vb*{a}\neq (\vb*{0},\vb*{0})} u_{\vb*{a}}(\phi)\hat{P}_{\vb*{a}}),    
\end{align}
where $u_{\vb*{a}}(\phi)$ is a real function of $\phi$.
{This is simply the Pauli-basis expansion of a general density operator and does not imply that the ideal state is necessarily mixed; for a pure input probe \(\hat{\rho}_{0}\), \(\hat{\rho}_{\mathrm{id}}=\mathcal{U}_\phi(\hat{\rho}_0)\) is also pure.}
When Pauli noise $\mathcal{N}$ occurs on the ideal state, we obtain the noisy state 
\begin{align}
    \label{eq_noisy_probe}
    \hat{\rho} = \frac{1}{2^n}\qty(\hat{I} + \sum_{\vb*{a} \neq (\vb*{0},\vb*{0})} \lambda_{\vb*{a}}u_{\vb*{a}}(\phi)\hat{P}_{\vb*{a}}),
\end{align}
which has $4^{n}$ unknown parameters $\phi$ and $\{\lambda_{\vb*{a}}\}_{\vb*{a}\neq (\vb*{0},\vb*{0})}$. Next, let us apply Theorem \ref{theorem1} to detect whether there exists a {locally} unbiased estimator of $\phi$ with finite estimation error. One can easily find that $\partial_{\phi}\hat{\rho}$ can be expressed as
\begin{align}
    \partial_{\phi}\hat{\rho} =  \sum_{\vb*{a} \neq (\vb*{0},\vb*{0})} \frac{\partial_{\phi}u_{\vb*{a}}(\phi)}{u_{\vb*{a}}(\phi)}\lambda_{\vb*{a}}\partial_{\lambda_{\vb*{a}}}\hat{\rho}. \label{naiveapptheo}
\end{align}
As a consequence, according to Theorem 1, we conclude that the {locally} unbiased estimator of $\phi$ does not exist {for arbitrary probe state $\hat{\rho}_{0}$}. 

In contrast to the naive estimation, leveraging an entangled quantum probe with a noiseless ancilla enables a {locally} unbiased estimation of $\phi$ with finite estimation error, {even under the unknown noise channel $\mathcal{N}$}. As depicted in Fig.~\ref{fig:metrology} (b), let us prepare an $(n+1)$-qubit GHZ state 
\begin{align}
    \frac{1}{\sqrt{2}}(\ket{0^n}\ket{0}_a + \ket{1^n}\ket{1}_a)
\end{align}
as a quantum probe, where the last qubit denotes the noiseless ancilla. In the absence of noise, we obtain the ideal signal state
\begin{align}
    \frac{1}{\sqrt{2}}(\ket{0^n}\ket{0}_a + e^{in\phi}\ket{1^n}\ket{1}_a).
\end{align}
For a simpler description, let us adopt a quantum error correction perspective and interpret the state as 
\begin{align}
    \frac{1}{\sqrt{2}}(\ket{0}_\mathrm{L} + e^{in\phi}\ket{1}_\mathrm{L})
\end{align}
in the $(n+1)$-qubit repetition code, where $\ket{0,1}_{\mathrm{L}}$ denote the logical states of the repetition code.
The repetition code has a set of stabilizer generators $\{\hat{Z}_i\hat{Z}_{i+1}\}_{i=1}^n$ and logical operators $\hat{X}_\mathrm{L} = \hat{X}_1\hat{X}_2\cdots \hat{X}_{n+1}$ and $\hat{Y}_\mathrm{L} = \hat{X}_1\cdots\hat{X}_n\hat{Y}_{n+1}$.
Therefore, {in terms of the density operator}, the ideal state of the entangled case can be represented as
\begin{align}
    \hat{\rho}^{\mathrm{ent}}_{\mathrm{id}} = \hat{\Pi}_{\vb*{0}}\frac{1}{2} (\hat{I} + \cos(n\phi) \hat{X}_\mathrm{L} +  \sin(n\phi) \hat{Y}_\mathrm{L} ).
\end{align}

\begin{figure}[t]
    \centering
    \includegraphics[width=0.9\linewidth]{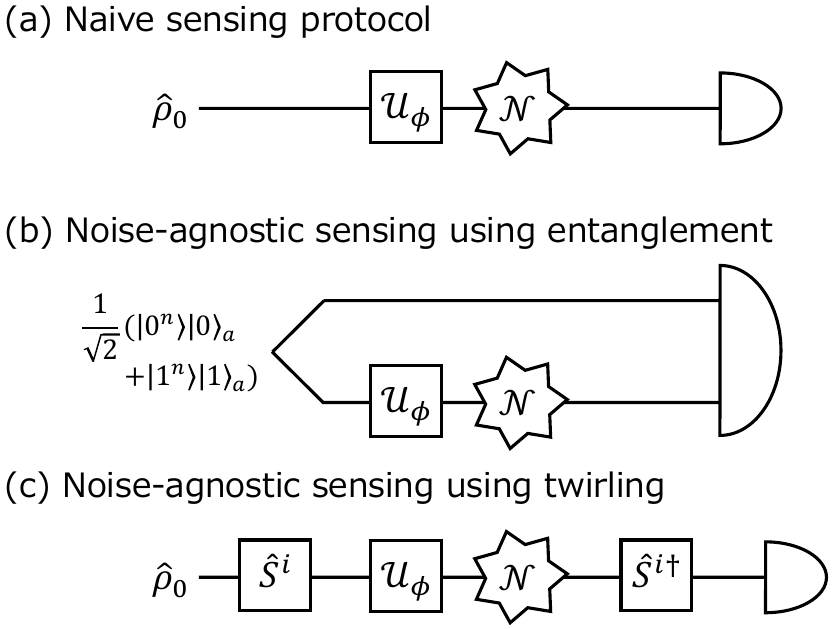}
    \caption{(a) Schematic of the naive estimation. (b) Schematic of the estimation that exploits an entangled state with a noiseless ancilla. {(c) Schematic of the estimation using symmetric Clifford twirling.}
    } 
    \label{fig:metrology}
\end{figure}

A Pauli operator $\hat{P}_{(\vb*{x}, \vb*{z})}$ commutes with the logical operators $\hat{X}_\mathrm{L}$ and $\hat{Y}_\mathrm{L}$ when $\vb*{z}\cdot\vb*{1}$ is even, while it anti-commutes with both of them when $\vb*{z}\cdot\vb*{1}$ is odd.
Here, $\vb*{1} = (1,\ldots,1)$.
Moreover, conjugating $\hat{\Pi}_{\vb*{0}}$ with $\hat{P}_{(\vb*{x}, \vb*{z})}$ results in $\hat{\Pi}_{\vb*{x}}$, where we define the projector to $2^n$ orthogonal code space as $\hat{\Pi}_{\vb*{x}} :=\prod_{i=1}^{n} \frac{1}{2} (\hat{I} +(-1)^{x_i+x_{i+1}} \hat{Z}_i\hat{Z}_{i+1})$ with $x_{n+1} = 0$.
Therefore, when the ideal state $\hat{\rho}^{\mathrm{ent}}_{\mathrm{id}}$ is affected by Pauli error $\hat{P}_{(\vb*{x}, \vb*{z})}$, the resulting state $\hat{P}_{(\vb*{x}, \vb*{z})}\hat{\rho}^{\mathrm{ent}}_{\mathrm{id}}\hat{P}_{(\vb*{x}, \vb*{z})}$ is represented as
\begin{equation}
     \hat{\Pi}_{\vb*{x}}\frac{1}{2} (\hat{I} + (-1)^{\vb*{z}\cdot\vb*{1}}\cos(n\phi) \hat{X}_\mathrm{L} + (-1)^{\vb*{z}\cdot\vb*{1}}\sin(n\phi) \hat{Y}_\mathrm{L}).
\end{equation}
Thus, when unknown Pauli noise occurs on $\hat{\rho}^{\mathrm{ent}}_{\mathrm{id}}$, we obtain a noisy state $\hat{\rho}^{\mathrm{ent}}$ represented as
\begin{align}
    \label{eq_noisy_entangled}
    \hat{\rho}^{\mathrm{ent}}:=\sum_{\vb*{x}\in\{0,1\}^n}p_{\vb*{x}}\hat{\Pi}_{\vb*{x}}\frac{1}{2} \left(\hat{I} + \lambda_{\vb*{x}}\cos(n\phi) \hat{X}_\mathrm{L} +  \lambda_{\vb*{x}}\sin(n\phi) \hat{Y}_\mathrm{L} \right),
\end{align}
where $p_{\vb*{x}} := \sum_{\vb*{z}\in\{0,1\}^n} p_{(\vb*{x},\vb*{z})}$ and $\lambda_{\vb*{x}} = \sum_{\vb*{z}\in\{0,1\}^n}$ $(-1)^{\vb*{z}\cdot\vb*{1}}p_{(\vb*{x},\vb*{z})}/p_{\vb*{x}}$ with $p_{(\vb*{0},\vb*{0})} = 1-\sum_{\vb*{a} \neq (\vb*{0},\vb*{0})}p_{\vb*{a}}$. 

Unlike the naive noisy state Eq.~\eqref{eq_noisy_probe}, two logical Pauli operators $\hat{X}_\mathrm{L}$ and $\hat{Y}_\mathrm{L}$ are affected by the same factor $\lambda_{\vb*{x}}$ for individual code space $\vb*{x}$.
Because of this, in contrast to the naive estimation case,  $\partial_{\phi}\hat{\rho}^{\mathrm{ent}}$ cannot be expressed with a linear combination of the derivatives with respect to other parameters $p_{\vb*{x}}$ and $\lambda_{\vb*{x}}$.
As a consequence, according to Theorem \ref{theorem1}, one can perform a {local} unbiased estimation of $\phi$.

{As another example of a noise-agnostic sensing protocol, we briefly describe a twirling-based protocol using \textit{symmetric Clifford twirling}~\cite{tsubouchi2024symmetric}. 
The key idea of symmetric Clifford twirling is to twirl the noisy sensing channel \(\mathcal{N}\circ\mathcal{U}_{\phi}\) using Clifford operators that commute with the noiseless signal unitary \(\hat{U}_{\phi}\). 
Because these twirling operators commute with \(\hat{U}_{\phi}\), the twirling effectively acts only on the noise channel while leaving the noiseless signal channel unchanged.

For clarity, we focus on the single-qubit case \(n=1\) and on a Pauli noise channel, as in the preceding example. 
The same idea extends to arbitrary \(n\) by using the \(Z\)-symmetric Clifford group. 
Let \(\hat{S}=\mathrm{diag}(1,i)\) and \(\mathcal{S}(\cdot)=\hat{S}(\cdot)\hat{S}^{\dagger}\). 
Since \([\hat{S},\hat{U}_{\phi}]=0\), choosing \(i\in\{0,1\}\) uniformly at random and applying \(\mathcal{S}^{i}\) before and \(\mathcal{S}^{-i}\) after the noisy sensing channel gives
\begin{equation}
    \frac{1}{2}\sum_{i=0}^{1}
    \mathcal{S}^{-i}\circ\mathcal{N}\circ\mathcal{U}_{\phi}\circ\mathcal{S}^{i}
    =
    \mathcal{N}^{\mathrm{twirl}}\circ\mathcal{U}_{\phi},
\end{equation}
where
\begin{equation}
    \mathcal{N}^{\mathrm{twirl}}
    :=
    \frac{1}{2}\sum_{i=0}^{1}
    \mathcal{S}^{-i}\circ\mathcal{N}\circ\mathcal{S}^{i}
\end{equation}
is the twirled noise channel.

Suppose that the original Pauli noise channel \(\mathcal{N}\) has Pauli eigenvalues
\(\{\lambda_x,\lambda_y,\lambda_z\}\), i.e.,
\(\mathcal{N}(\hat{P})=\lambda_P\hat{P}\) for
\(\hat{P}\in\{\hat{X},\hat{Y},\hat{Z}\}\). 
Since \(\{\lambda_x, \lambda_y, \lambda_z\}\) are unknown, locally unbiased estimation of \(\phi\) is impossible, as we have seen in Eqs.~\eqref{eq_noisy_probe} and \eqref{naiveapptheo}.
Meanwhile, the Pauli eigenvalues of the twirled channel \(\mathcal{N}^{\mathrm{twirl}}\) become \(\{\lambda_{xy}, \lambda_{xy}, \lambda_z\}\), where \(\lambda_{xy}=(\lambda_x+\lambda_y)/2\).
Thus, the two independent transverse nuisance parameters \(\lambda_x\) and \(\lambda_y\) are reduced to a single nuisance parameter \(\lambda_{xy}\).
This allows locally unbiased estimation: by taking the input state to be the \(\ket{+}\) state with density matrix \(\hat{\rho}_0 = \frac{1}{2} (\hat{I} + \hat{X})\), the output state \(\hat{\rho}^{\mathrm{twirl}} = (\mathcal{N}^{\mathrm{twirl}} \circ \mathcal{U}_{\phi})(\hat{\rho}_0)\) satisfies
\begin{align}
    \label{eq_noisy_twirled}
    \hat{\rho}^{\mathrm{twirl}} = \frac{1}{2}(\hat{I} + \lambda_{xy}\cos(\phi)\hat{X} + \lambda_{xy}\sin(\phi)\hat{Y}).
\end{align}
Unlike the naive case in Eq.~\eqref{eq_noisy_probe}, the two Pauli operators \(\hat{X}\) and \(\hat{Y}\) are affected by the same factor \(\lambda_{xy}\).
Because of this, \(\partial_{\phi}\hat{\rho}^{\mathrm{twirl}}\) cannot be expressed as a linear combination of the derivatives with respect to the parameter \(\lambda_{xy}\).
As a consequence, according to Theorem~\ref{theorem1}, one can perform a {locally} unbiased estimation of \(\phi\), even though the noise is unknown.}

{Notably, Eqs.~\eqref{naiveapptheo}, \eqref{eq_noisy_entangled}, and \eqref{eq_noisy_twirled} illustrate rather special settings in which Theorem \ref{theorem1} is substantially more convenient to apply than Lemma \ref{lemma1}. In both cases, the locally unbiased estimability of $\phi$ can be identified directly from the Pauli-basis tangent relations. In particular, Eq.~\eqref{naiveapptheo} immediately implies the impossibility of local unbiased estimation of $\phi$, whereas Eqs.~\eqref{eq_noisy_entangled} and \eqref{eq_noisy_twirled} directly establish the existence of a locally unbiased estimator. By contrast, a direct application of Lemma \ref{lemma1} would require constructing the QFIM involving $\phi$ together with all nuisance Pauli-noise parameters and subsequently verifying the corresponding support condition. We emphasize, however, that such simplifications are problem-dependent and do not occur in general.
We leave further discussions of the noise-agnostic sensing protocols in SM Sec. S4, S5, and S6~\cite{Note1}}

\section{Unbiased quantum channel estimation}
Thus far, we have primarily focused on estimating the parameters in quantum {\it states}.
Expanding the scope, in many applications, such as noise characterization, the main purpose is to estimate the parameters of quantum {\it channels}~\cite{learnability-chuang1997prescription, learnability-mohseni2008quantum}.
Therefore, given access to a parameterized quantum channel, it is crucial to determine whether a {local} unbiased estimation of the parameters is feasible.

{As in the state-estimation setting, we impose the corresponding standard regularity assumptions. 
Let $\{\mathcal{E}_{\vb*{\theta}}\}_{\vb*{\theta}\in\Theta}$ be a twice differentiable parametric family of CPTP maps between fixed input and output finite-dimensional Hilbert spaces, where $\Theta\subset\mathbb{R}^M$ with $M<\infty$ denotes the finite-dimensional parameter space. 
We assume that the functional form of the map $\vb*{\theta}\mapsto\mathcal{E}_{\vb*{\theta}}$ is known, whereas the true value of $\vb*{\theta}$ is unknown. 
The CPTP controls used in the sequential protocols below are taken to be known and independent of $\vb*{\theta}$.}
Given access to the channel $\mathcal{E}_{\vb*{\theta}}$, our goal is to determine whether a {locally} unbiased estimation of a certain parameter $\theta_1$ is possible. Notably, Theorem \ref{theorem1} makes no assumptions regarding the choice of a quantum probe or the specific encoding protocols for the parameters. Therefore, it can be directly extended to the channel perspective. This leads to the following corollary, where its proof is provided in SM Sec. S7~\cite{Note1}.

\begin{corollary}\label{corollary1}
    Let $\mathcal{E}_{\vb*{\theta}}$ be an unknown quantum channel parameterized by unknown parameters $\vb*{\theta}$.
    Then, given access to the unknown channel $\mathcal{E}_{\vb*{\theta}}$, $\theta_i$ can be locally unbiasedly estimated with a finite error if and only if for all $c_i\in\mathbb{C}$,
    \begin{align}
    \pdv{\mathcal{E}_{\vb*{\theta}}}{\theta_{i}} \neq \sum_{j \neq i} c_{j} \pdv{\mathcal{E}_{\vb*{\theta}}}{\theta_{j}}.\label{corollary1eq1}
    \end{align}
    If this condition is violated, {local} unbiased estimation of the parameters is fundamentally impossible even when the use of noiseless ancillae, general CPTP control, and multiple uses of the channel are allowed. Mathematically, {local} unbiased estimation remains impossible even when one has access to  the quantum channel $\mathcal{C}_{N+1}\circ(\mathcal{E}_{\vb*{\theta}}\otimes \mathcal{I})\circ \cdots\circ\mathcal{C}_2\circ(\mathcal{E}_{\vb*{\theta}}\otimes \mathcal{I})\circ \mathcal{C}_1$ where $\mathcal{C}_{i}$ are arbitrary CPTP channel and $N$ is arbitrary natural number. 
\end{corollary}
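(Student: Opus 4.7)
The plan is to reduce the channel-level statement to the state-level result already proven in Theorem~\ref{theorem1} by identifying, for every adaptive protocol, an effective output state whose derivative structure inherits the linear dependence (or independence) of the channel derivatives. I would organize the argument as two implications.

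For the ``if'' direction, assume Eq.~\eqref{corollary1eq1} holds, i.e.\ $\partial_{\theta_1}\mathcal{E}_{\vb*{\theta}}$ is linearly independent from $\{\partial_{\theta_i}\mathcal{E}_{\vb*{\theta}}\}_{i\neq 1}$. I would exhibit a single concrete protocol whose output state satisfies the hypothesis of Theorem~\ref{theorem1}. The natural candidate is the Choi protocol: prepare a maximally entangled state $|\Phi\rangle$ on the system plus a noiseless ancilla, apply $\mathcal{E}_{\vb*{\theta}}\otimes\mathcal{I}$, and obtain the Choi state $\hat{\rho}^{\mathrm{Choi}}_{\vb*{\theta}}=(\mathcal{E}_{\vb*{\theta}}\otimes\mathcal{I})(|\Phi\rangle\!\langle\Phi|)$. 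Because the Choi--Jamio{\l}kowski correspondence $\mathcal{E}\mapsto(\mathcal{E}\otimes\mathcal{I})(|\Phi\rangle\!\langle\Phi|)$ is a linear bijection between superoperators and operators on the doubled space, linear (in)dependence of $\{\partial_{\theta_i}\mathcal{E}_{\vb*{\theta}}\}$ transfers verbatim to linear (in)dependence of $\{\partial_{\theta_i}\hat{\rho}^{\mathrm{Choi}}_{\vb*{\theta}}\}$. Theorem~\ref{theorem1} then yields an unbiased estimator of $\theta_1$ with finite error.

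For the ``only if'' direction, which I expect to be the main obstacle, I need to show that if $\partial_{\theta_1}\mathcal{E}_{\vb*{\theta}}=\sum_{i\neq 1}c_i\,\partial_{\theta_i}\mathcal{E}_{\vb*{\theta}}$, then \emph{every} adaptive protocol of the prescribed form outputs a state whose derivatives are likewise linearly dependent. Fix an arbitrary input state $\hat{\sigma}_0$ and let
\begin{align}
\hat{\rho}^{\mathrm{out}}_{\vb*{\theta}}=\mathcal{C}_{N+1}\circ(\mathcal{E}_{\vb*{\theta}}\otimes\mathcal{I})\circ\cdots\circ\mathcal{C}_2\circ(\mathcal{E}_{\vb*{\theta}}\otimes\mathcal{I})\circ\mathcal{C}_1(\hat{\sigma}_0).
\end{align}
Applying the Leibniz rule to this composition, where only the $N$ copies of $\mathcal{E}_{\vb*{\theta}}$ depend on $\vb*{\theta}$, gives
\begin{align}
\pdv{\hat{\rho}^{\mathrm{out}}_{\vb*{\theta}}}{\theta_j}=\sum_{k=1}^{N}\mathcal{F}^{(k)}_{\vb*{\theta}}\!\left[\pdv{\mathcal{E}_{\vb*{\theta}}}{\theta_j}\right],
\end{align}
where $\mathcal{F}^{(k)}_{\vb*{\theta}}[\,\cdot\,]$ is the $\vb*{\theta}$-dependent but $j$-independent linear map obtained by substituting the argument for the $k$th copy of $\mathcal{E}_{\vb*{\theta}}\otimes\mathcal{I}$ in the above composition. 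Substituting the assumed linear relation into the $j=1$ case and exchanging summations yields $\partial_{\theta_1}\hat{\rho}^{\mathrm{out}}_{\vb*{\theta}}=\sum_{i\neq 1}c_i\,\partial_{\theta_i}\hat{\rho}^{\mathrm{out}}_{\vb*{\theta}}$ with the \emph{same} coefficients $c_i$. Since the choice of $\hat{\sigma}_0$, the interleaved CPTP maps $\{\mathcal{C}_k\}$, any auxiliary ancillae, and the number of channel uses $N$ were arbitrary, Theorem~\ref{theorem1} applied to $\hat{\rho}^{\mathrm{out}}_{\vb*{\theta}}$ forbids an unbiased estimator of $\theta_1$ from any such protocol. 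The delicate point is merely bookkeeping: verifying that the Leibniz expansion faithfully collects every $\vb*{\theta}$-dependence and that the ancilla and measurement stage (absorbed into $\mathcal{C}_{N+1}$ followed by a POVM) cannot introduce new $\vb*{\theta}$-dependence beyond what is inherited from the $\mathcal{E}_{\vb*{\theta}}$'s. Combining the two directions proves the corollary.
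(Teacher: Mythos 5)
Your proposal is correct and follows essentially the same route as the paper's own proof: the paper likewise reduces to Theorem~\ref{theorem1}, establishes the ``if'' direction by feeding half of a maximally entangled state through $\mathcal{E}_{\vb*{\theta}}\otimes\mathcal{I}$ and invoking the linearity/bijectivity of the Choi--Jamio{\l}kowski map, and establishes the ``only if'' direction by a Leibniz-rule expansion showing the linear dependence propagates with the same coefficients $d_i=c_i$ to the output of any sequential strategy. No substantive differences.
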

{Corollary~\ref{corollary1} provides a channel-level criterion for local unbiased estimability. Since the condition is formulated directly in terms of derivatives of the channel map, it does not rely on a particular probe state or measurement choice. If the derivative condition is violated, then no allowed protocol constructed from the channel, including protocols with noiseless ancillae, adaptive CPTP controls, and multiple channel uses, can yield a locally unbiased estimator with finite variance for the corresponding parameter.%
As in the state estimation, we note that Eq. \eqref{corollary1eq1} is the differential local identifiability condition for the channel parameter $\theta_{i}$, since it requires that the infinitesimal change of the channel generated by $\theta_{i}$ cannot be reproduced by infinitesimal changes of the nuisance channel parameters.}

\subsection{{Application to cycle benchmarking under SPAM errors}}
{As an example, we consider unbiased estimation of noise-channel parameters for non-Clifford gates via cycle benchmarking under state preparation and measurement (SPAM) errors. Specifically, we consider estimating the noise affecting the Pauli-Z rotation gate \(\hat{U}=\hat{R}_z(\phi)=e^{-i\hat{Z}\phi/2}\). As shown in SM Sec. S8~\cite{Note1}, we assume that the noise channel $\mathcal{N}$ affecting $\hat{U}$ has been twirled, such that its Pauli transfer matrix is expressed as}
\begin{align}
    \begin{pmatrix}
        1 & 0 & 0 & 0 \\
        0 & \lambda_1\cos\theta & \lambda_1\sin\theta & 0 \\
        0 & -\lambda_1\sin\theta & \lambda_1\cos\theta & 0 \\
        \alpha & 0 & 0 & \lambda_2 \\
    \end{pmatrix},
\end{align}
which is parameterized by four independent parameters $\lambda_1, \lambda_2, \alpha$, and $\theta$.
The noise channel $\mathcal{N}$ represents a mixture of depolarizing, dephasing, amplitude damping noise, and coherent noise due to over-rotation. In the absence of SPAM errors, noise channels can be characterized straightforwardly via quantum process tomography~\cite{learnability-chuang1997prescription, learnability-mohseni2008quantum}.
However, in the presence of SPAM errors, it remains unclear whether all parameters of the noise channel can be fully characterized~\cite{learnability-merkel2013self, learnability-blume2013robust, learnability-nielsen2021gate, learnability-chen2023learnability}.
State preparation errors $\mathcal{N}_S$ and measurement errors $\mathcal{N}_M$ are assumed to be Pauli-twirled and are represented by Pauli transfer matrices $\mathrm{diag}(1, \lambda_{1S}, \lambda_{2S}, \lambda_{3S})$ and $\mathrm{diag}(1, \lambda_{1M}, \lambda_{2M}, \lambda_{3M})$, respectively.

\begin{figure}[t]
    \centering
    \includegraphics[width=0.9\linewidth]{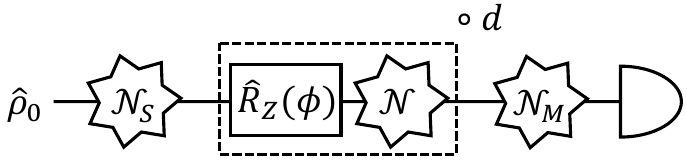}
    \caption{Schematic of cycle benchmarking. We aim to estimate the noise $\mathcal{N}$ affecting the $\hat{R}_z(\phi)$ gate by repeatedly applying the noisy $\hat{R}_z(\phi)$ gate. Here, we consider the case where state preparation of the initial state $\rho_0$ and measurement are affected by unknown Pauli noise $\mathcal{N}_S$ and $\mathcal{N}_M$.}
    \label{fig:cyclecircuit}
\end{figure}

Let us consider the {local unbiased estimability} of the noise parameters $\lambda_1, \lambda_2, \alpha, \theta$ of $\mathcal{N}$ via cycle benchmarking~\cite{learnability-erhard2019characterizing}. Cycle benchmarking identifies noise parameters by repeatedly applying a gate sequence and analyzing the resulting outputs to extrapolate the parameters. The application of the cycle benchmarking in our scenario can be described as follows. To estimate the noise parameters, we repeatedly apply the noisy $\hat{R}_z(\phi)$ gate, with the state-preparation error $\mathcal{N}_S$ occurring before the gate applications and the measurement error $\mathcal{N}_M$ taking place at the end. (See Fig.~\ref{fig:cyclecircuit}.) Mathematically, this sequence of operations for $d$-number of gate applications is described by the channel 
\begin{align}
    \mathcal{N}_{d} = \mathcal{N}_M \circ (\mathcal{N} \circ \mathcal{U})^{\circ d} \circ \mathcal{N}_S,
\end{align}
where 
\begin{align}
    \mathcal{U}(\cdot) := \hat{U}\cdot \hat{U}^\dag
\end{align}
represents the noiseless $\hat{R}_z(\phi)$ gate. To encapsulate $\mathcal{N}_{d}$ for all integer values of $d$ in a single description, we introduce an augmented channel $\mathcal{E}_{\mathrm{cycle}}$, and a classical register $\hat{\rho}_{c}$. This register keeps track of the number of applied gates $d$. Formally, we define
\begin{align}
    \mathcal{E}_{\mathrm{cycle}} : \hat{\rho} \otimes \hat{\rho}_c \mapsto \sum_{d}\mathcal{N}_{d}(\hat{\rho}) \otimes \ketbra{d}\hat{\rho}_c\ketbra{d}.
\end{align}
In this representation, each outcome $\ketbra{d}$ in the register corresponds to the channel $\mathcal{N}_{d}$ acting on the initial state $\hat{\rho}$.
For the quantum channel $\mathcal{E}_{\mathrm{cycle}}$, we find that
\begin{align}
    \alpha \partial_\alpha \mathcal{E}_{\mathrm{cycle}} = \lambda_{3M} \partial_{\lambda_{3M}} \mathcal{E}_{\mathrm{cycle}} - \lambda_{3S} \partial_{\lambda_{3S}} \mathcal{E}_{\mathrm{cycle}}. \label{augucha}
\end{align}
By Corollary~\ref{corollary1}, this implies that a {local} unbiased estimation of $\alpha$ is not feasible using $\mathcal{E}_{\mathrm{cycle}}$. Hence, $\alpha$ {does not admit a locally unbiased estimator} through cycle benchmarking when SPAM errors are unknown. See Methods Sec. \ref{methods:cycle} for the detailed proof. We further show that all the other parameters are {unbiasedly estimated} in the detailed analysis in SM Sec. S8~\cite{Note1}. As another example of {unbiased channel estimation}, we analyze Pauli noise affecting CNOT gate in the presence of SPAM errors in SM Sec. S9~\cite{Note1}. While this case was originally examined in Ref. \cite{learnability-chen2023learnability}, we provide an independent proof based on Collorary \ref{corollary1}.

\section{Conclusion}
We derive necessary and sufficient conditions for the {local} unbiased estimation for both the state and channel estimation. Our analysis implies the following two insights: First, much of the quantum metrology papers focus on analyzing the quantum Fisher information, as it provides a fundamental lower bound on estimation error for {locally} unbiased estimators. However, as illustrated by our noise-agnostic sensing example, when the noise affecting the estimation process is not fully characterized, {local} unbiased estimation may become infeasible. Therefore, before analyzing the quantum Fisher information, it is crucial to first assess the feasibility of {local} unbiased estimation. 
{Second, our analysis provides a local unbiased estimability in quantum channel estimation. The criterion determines whether a given channel parameter admits a locally unbiased estimator with finite variance, even in the presence of nuisance parameters and under general protocols involving noiseless ancillae, adaptive CPTP controls, and multiple channel uses. In the cycle-benchmarking application, this allows us to identify which noise parameters can be unbiasedly estimated under unknown SPAM errors and which parameter directions are obstructed by the effective channel structure. Future work may extend this unbiased-estimation perspective to broader classes of gates, noise models, and benchmarking protocols.}

\section{Acknowledgments}
We acknowledge useful discussions with Yuxiang Yang on state estimation and Senrui Chen on channel estimation.

H.K. was supported by IITP (RS-2025-02263264, RS-2025-25464252, RS-2024-00437191), the Education and Training Program of the Quantum Information Research Support Center (2021M3H3A1036573), and the NRF (RS-2025-25464492, RS-2024-00442710) funded by the Ministry of Science and ICT (MSIT), Korea.
K.T. is supported by the Program for Leading Graduate Schools (MERIT-WINGS), JST BOOST Grant Number JPMJBS2418, JST ASPIRE Grant Number JPMJAP2316, and JST ERATO Grant Number JPMJER2302.
L.J. and C.C. acknowledges support from the ARO(W911NF-23-1-0077), ARO MURI (W911NF-21-1-0325), AFOSR MURI (FA9550-19-1-0399, FA9550-21-1-0209, FA9550-23-1-0338), DARPA (HR0011-24-9-0359, HR0011-24-9-0361), NSF (OMA-1936118, ERC-1941583, OMA-2137642, OSI-2326767, CCF-2312755), NTT Research, and the Packard Foundation (2020-71479). This work was supported in part with funding from Google.org.

\section{Methods}
\subsection{Proof of Theorem \ref{theorem1}}\label{methods:prooftheorem1}
In this section, we provide a proof of Theorem~\ref{theorem1} in the main text, which is fomally stated as follows:
\setcounter{theorem}{0}
\begin{theorem}\label{M:theorem1}
    There exists a locally unbiased estimator of $\theta_{1}$ at $\vb*{\theta} = \vb*{\theta}_0$ with finite estimation error if and only if $\partial_{\theta_{1}}{\hat{\rho}_{\vb*{\theta}}}\vert_{\vb*{\theta} = \vb*{\theta}_0}$ cannot be expressed with a linear combinations of $\partial_{\theta_{i}}{\hat{\rho}_{\vb*{\theta}}}\vert_{\vb*{\theta} = \vb*{\theta}_0}$ for $i \neq 1$, i.e.,
    \begin{align}
    \left.\pdv{\hat{\rho}_{\vb*{\theta}}}{\theta_{1}}\right|_{\vb*{\theta} = \vb*{\theta}_0} \neq \sum_{i = 2}^{M} c_{i} \left.\pdv{\hat{\rho}_{\vb*{\theta}}}{\theta_{i}}\right|_{\vb*{\theta} = \vb*{\theta}_0}
    \end{align}
    for all $c_i\in\mathbb{C}$.
\end{theorem}

\subsubsection{proof of \emph{if} direction}
We aim to show that if $\left.\pdv{\hat{\rho}_{\vb*{\theta}}}{\theta_{1}}\right|_{\vb*{\theta} = \vb*{\theta}_0} \neq \sum_{i = 2}^{M} c_{i} \left.\pdv{\hat{\rho}_{\vb*{\theta}}}{\theta_{i}}\right|_{\vb*{\theta} = \vb*{\theta}_0}$, there exists a locally unbiased estimator of $\theta_{1}$ at $\vb*{\theta} = \vb*{\theta}_0$ with finite estimation error.
\begin{proof}
In this proof, we set $\vb*{\theta} = \vb*{\theta}_0$.
According to Lemma \ref{lemma1}, the existence of a {locally} unbiased estimator of $\theta_{1}$ with finite estimation error is equivalent to the condition $\vb*{w}_{1} \in \mathrm{supp}(\mathbf{J})$. Therefore, the statement is equivalent to proving if $\pdv{\hat{\rho}_{\vb*{\theta}}}{\theta_{1}} \neq \sum_{i \neq 1} c_{i} \pdv{\hat{\rho}_{\vb*{\theta}}}{\theta_{i}}$, then $\vb*{w}_{1} \in \mathrm{supp}(\mathbf{J})$. For this proof, we use the contrapositive method. Specifically, we demonstrate the contrapositive statement: if $\vb*{w}_{1} \notin \mathrm{supp}(\mathbf{J})$, then $\pdv{\hat{\rho}_{\vb*{\theta}}}{\theta_{1}} = \sum_{i \neq 1} c_{i} \pdv{\hat{\rho}_{\vb*{\theta}}}{\theta_{i}}$.

Let us consider $\pdv{\hat{\rho}_{\vb*{\theta}}}{\theta_{1}}$:
\begin{align}
    \begin{split}
    &\pdv{\hat{\rho}_{\vb*{\theta}}}{\theta_{1}} = \sum_{i=1}^{M}\delta_{1i} \pdv{\hat{\rho}_{\vb*{\theta}}}{\theta_{i}} =  \sum_{i=1}^{M}\big(\vb*{w}_{1}^{\mathrm{T}}\vb*{w}_{i}\big) \pdv{\hat{\rho}_{\vb*{\theta}}}{\theta_{i}} \\
    &= \sum_{i,k=1}^{M}  \big(\vb*{w}_{1}^{\mathrm{T}}\vb*{\lambda}_{k}\big) \big(\vb*{\lambda}_{k}^{\mathrm{T}}\vb*{w}_{i}\big)\pdv{\hat{\rho}_{\vb*{\theta}}}{\theta_{i}} = \sum_{k=1}^{M}  \big(\vb*{w}_{1}^{\mathrm{T}}\vb*{\lambda}_{k}\big) \hat{\rho}_{\vb*{\lambda}_{k}},
    \end{split}
\end{align}
where we define 
\begin{align}
    \hat{\rho}_{\vb*{\lambda}_{k}} :=\sum_{i=1}^{M}\big(\vb*{\lambda}_{k}^{\mathrm{T}}\vb*{w}_{i} \big)\partial_{i}\hat{\rho}_{\vb*{\theta}}= \sum_{i=1}^{M}\big(\vb*{w}_{i}^{\mathrm{T}} \vb*{\lambda}_{k}\big)\partial_{i}\hat{\rho}_{\vb*{\theta}}.
\end{align}
Here, $\vb*{w}_{i}$ is $M \times 1$ vector with elements $[\vb*{w}_{i}]_{k}=\delta_{ik}$ for all $1 \leq i \leq M$, and $\{\vb*{\lambda}_{i}\}_{i=1}^{M}$ is the eigenvector set of the QFIM $\mathbf{J}$. Each of the two vector sets forms an orthonormal basis for an $M$-dimensional real vector. 
Next, by leveraging Property \ref{M:prop4}, we complete our proof. The proof of Property \ref{M:prop4} will be provided at the end of this subsection.
\begin{property}\label{M:prop4}
    $\hat{\rho}_{\vb*{\lambda}_{k}}=0$ for $k \geq (s+1)$.
\end{property}
Using \emph{Property} \ref{M:prop4}, we further obtain
\begin{align}
    \begin{split}
    &\pdv{\hat{\rho}_{\vb*{\theta}}}{\theta_{1}}=\sum_{k=1}^{s}  \big(\vb*{w}_{1}^{\mathrm{T}}\vb*{\lambda}_{k}\big) \hat{\rho}_{\vb*{\lambda}_{k}} = \sum_{i=1}^{M}\sum_{k=1}^{s} \big(\vb*{w}_{1}^{\mathrm{T}} \vb*{\lambda}_{k})\big(\vb*{\lambda}_{k}^{\mathrm{T}} \vb*{w}_{i})\pdv{\hat{\rho}_{\vb*{\theta}}}{\theta_{i}}\\
    &= \sum_{k=1}^{s} \big(\vb*{w}_{1}^{\mathrm{T}} \vb*{\lambda}_{k})\big(\vb*{\lambda}_{k}^{\mathrm{T}} \vb*{w}_{1})\pdv{\hat{\rho}_{\vb*{\theta}}}{\theta_{1}} + \sum_{i=2}^{M}\sum_{k=1}^{s} \big(\vb*{w}_{1}^{\mathrm{T}} \vb*{\lambda}_{k})\big(\vb*{\lambda}_{k}^{\mathrm{T}} \vb*{w}_{i})\pdv{\hat{\rho}_{\vb*{\theta}}}{\theta_{i}}.
    \end{split}
\end{align}
Since we are considering the case where $\vb*{w}_{1} \notin \mathrm{supp}(\mathbf{J})$, it follows that $\sum_{k=1}^{s} \big(\vb*{w}_{1}^{\mathrm{T}} \vb*{\lambda}_{k})\big(\vb*{\lambda}_{k}^{\mathrm{T}} \vb*{w}_{1}) <1$. As a consequence, we can express $\pdv{\hat{\rho}_{\vb*{\theta}}}{\theta_{1}}$ as
\begin{align}
    \pdv{\hat{\rho}_{\vb*{\theta}}}{\theta_{1}} = \sum_{i=2}^{M}\left( \frac{\sum_{k=1}^{s} \big(\vb*{w}_{1}^{\mathrm{T}} \vb*{\lambda}_{k})\big(\vb*{\lambda}_{k}^{\mathrm{T}} \vb*{w}_{i})}{1-\sum_{k=1}^{s} \big(\vb*{w}_{1}^{\mathrm{T}} \vb*{\lambda}_{k})\big(\vb*{\lambda}_{k}^{\mathrm{T}} \vb*{w}_{1})}\right)\pdv{\hat{\rho}_{\vb*{\theta}}}{\theta_{i}}.
\end{align}
Therefore, if $\vb*{w}_{1} \notin \mathrm{supp}(\mathbf{J})$, there exists a set of coefficient $\{c_{i}\}_{i=1}^{M}$ that satisfies $\pdv{\hat{\rho}_{\vb*{\theta}}}{\theta_{1}} = \sum_{i = 2}^{M} c_{i} \pdv{\hat{\rho}_{\vb*{\theta}}}{\theta_{i}}$.
\end{proof}

\emph{proof of Property \ref{M:prop4}}.---Let us express the components of the QFIM using the eigenvectors of $n$-qubit encoded state $\hat{\rho}_{\vb*{\theta}}$. The diagonal form of the $d$-dimensional encoded state $\hat{\rho}_{\vb*{\theta}}$ is expressed as
\begin{align}
    \hat{\rho}_{\vb*{\theta}}=\sum_{\alpha=1}^{d}\rho_{\alpha}\dyad{\psi_{\alpha}}=\sum_{\alpha=1}^{S}\rho_{\alpha}\dyad{\psi_{\alpha}}+\sum_{\beta=S+1}^{d}0\dyad{\psi_{\beta}},
\end{align}
where $\rho_{\alpha}>0$ for all $ 1\leq \alpha \leq S$. If $\hat{\rho}_{\vb*{\theta}}$ is not full-rank in the Hilbert space, then $S < d$. In addition, we emphasize that $\{\ket{\psi_{\alpha}}\}_{\alpha=1}^{d}$ forms an orthonormal basis for the $d$-dimensional Hilbert space. By using $\{\ket{\psi_{\alpha}}\}_{\alpha=1}^{d}$ , the elements of the QFIM can be expressed as
\begin{align}
    \vb*{w}_{i}^{\mathrm{T}}\mathbf{J}\vb*{w}_{j} = [\mathbf{J}]_{ij}=\sum_{\rho_{\alpha}+\rho_{\beta} \neq 0}\frac{2\Re(\langle \psi_{\alpha} \vert \partial_{i}\hat{\rho}_{\vb*{\theta}} \vert \psi_{\beta} \rangle \langle \psi_{\beta} \vert \partial_{j}\hat{\rho}_{\vb*{\theta}} \vert \psi_{\alpha} \rangle)}{\rho_{\alpha}+\rho_{\beta}},
\end{align}
where the summation is over all $1 \leq \alpha,\beta \leq d$ such that $\rho_{\alpha}+\rho_{\beta} \neq 0$.

Next, let us consider $\vb*{\lambda}_{k}^{\mathrm{T}}\mathbf{J} \vb*{\lambda}_{k}$. For $k\geq (s+1)$, $\vb*{\lambda}_{k}$ is the eigenvector of $\mathbf{J}$ with eigenvalue $0$, therefore, $\vb*{\lambda}_{k}^{\mathrm{T}}\mathbf{J} \vb*{\lambda}_{k}=0$. This can be expressed in terms of the basis set $\{\vb*{w}_{i}\}_{i=1}^{M}$ as follows:
\begin{align}
    \begin{split}
    &0=\vb*{\lambda}_{k}^{\mathrm{T}}\mathbf{J} \vb*{\lambda}_{k} = \vb*{\lambda}_{k}^{\mathrm{T}}\left(\sum_{i=1}^{M}\vb*{w}_{i}\vb*{w}_{i}^{\mathrm{T}}\right) \mathbf{J} \left(\sum_{j=1}^{M}\vb*{w}_{j}\vb*{w}_{j}^{\mathrm{T}}\right)\vb*{\lambda}_{k}\\
    &=\sum_{i,j=1}^{M} (\vb*{\lambda}_{k}^{\mathrm{T}} \vb*{w}_{i})[\mathbf{J}]_{ij} (\vb*{w}_{j}^{\mathrm{T}}\vb*{\lambda}_{k} ) = 2\sum_{\rho_{\alpha}+\rho_{\beta} \neq 0}\frac{\abs{\langle \psi_{\alpha} \vert \hat{\rho}_{\vb*{\lambda}_{k}} \vert \psi_{\beta} \rangle}^{2}}{\rho_{\alpha}+\rho_{\beta}}, \label{QFIlambda}
    \end{split}
\end{align}
Here, the numerator is equal or greater than $0$ and the denominator is positive. Therefore, for $\vb*{\lambda}_{k}^{\mathrm{T}}\mathbf{J} \vb*{\lambda}_{k}=0$ to hold for $(s+1) \leq  k \leq M$, it must be true that $\langle \psi_{\alpha} \vert \hat{\rho}_{\vb*{\lambda}_{k}} \vert \psi_{\beta} \rangle=0$ for all $1 \leq \alpha,\beta \leq d$ such that $\rho_{\alpha}+\rho_{\beta} \neq 0$.

{Next, let us show 
\(\bra{\psi_{\alpha}}\hat{\rho}_{\vb*{\lambda}_{k}}\ket{\psi_{\beta}}=0\)
for \(\alpha\) and \(\beta\) satisfying
\(\rho_{\alpha}+\rho_{\beta}=0\). We note that
\(\rho_{\alpha}+\rho_{\beta}=0\) if and only if
\(\rho_{\alpha}=0\) and \(\rho_{\beta}=0\).
From the definition of \(\hat{\rho}_{\vb*{\lambda}_{k}}\), we have
\begin{align}
    \hat{\rho}_{\vb*{\theta}+\varepsilon\vb*{\lambda}_k}
    =
    \hat{\rho}_{\vb*{\theta}}
    +
    \varepsilon \hat{\rho}_{\vb*{\lambda}_k}
    +
    O(\varepsilon^2)
\end{align}
for sufficiently small \(\varepsilon\).
When we sandwich this state with an arbitrary normalized vector
\(\ket{\psi^0}\in\ker(\hat{\rho}_{\vb*{\theta}})\), we obtain
\begin{align}
    0
    \leq
    \ev*{\hat{\rho}_{\vb*{\theta}+\varepsilon\vb*{\lambda}_k}}{\psi^0}
    =
    \varepsilon
    \ev*{\hat{\rho}_{\vb*{\lambda}_k}}{\psi^0}
    +
    O(\varepsilon^2).
\end{align}
Dividing by \(\varepsilon>0\) and taking
\(\varepsilon\to0^+\), we obtain
\begin{align}
\ev*{\hat{\rho}_{\vb*{\lambda}_k}}{\psi^0}\ge 0 .
\end{align}
Let \(\hat{\Pi}_0\) denote the projector onto
\(\ker(\hat{\rho}_{\vb*{\theta}})\), and define
\begin{align}
\hat{A}_0
:=
\hat{\Pi}_0
\hat{\rho}_{\vb*{\lambda}_k}
\hat{\Pi}_0 .
\end{align}
Since the above inequality holds for every normalized
\(\ket{\psi^0}\in\ker(\hat{\rho}_{\vb*{\theta}})\),
the null-space block \(\hat{A}_0\) is positive semidefinite.

Moreover, since \(\hat{\rho}_{\vb*{\lambda}_k}\) is a linear
combination of derivatives of a normalized density operator,
we have
\begin{align}
\mathrm{Tr}\!\left[\hat{\rho}_{\vb*{\lambda}_k}\right]=0 .
\end{align}
From the previous argument,
\begin{align}
\bra{\psi_\alpha}
\hat{\rho}_{\vb*{\lambda}_k}
\ket{\psi_\beta}
=0
\end{align}
for all \(\alpha,\beta\) satisfying
\(\rho_\alpha+\rho_\beta\neq0\). In particular, all diagonal
elements on \(\operatorname{supp}(\hat{\rho}_{\vb*{\theta}})\)
vanish. Hence,
\begin{align}
0
=
\mathrm{Tr}\!\left[\hat{\rho}_{\vb*{\lambda}_k}\right]
=
\mathrm{Tr}\!\left[
\hat{\Pi}_0
\hat{\rho}_{\vb*{\lambda}_k}
\hat{\Pi}_0
\right]
=
\mathrm{Tr}\!\left[\hat{A}_0\right].
\end{align}
Since \(\hat{A}_0\) is positive semidefinite and has zero trace,
it must be the zero matrix. Therefore,
\begin{align}
\bra{\psi_\alpha}
\hat{\rho}_{\vb*{\lambda}_k}
\ket{\psi_\beta}
=0
\end{align}
also for \(\rho_\alpha=0\) and \(\rho_\beta=0\).
Combining this with the case \(\rho_\alpha+\rho_\beta\neq0\),
we conclude that
\begin{align}
\hat{\rho}_{\vb*{\lambda}_k}=0 .
\end{align}
}
\qed

\subsubsection{proof of \emph{only if} direction}
We aim to show that when there exists a locally unbiased estimator of $\theta_{1}$ at $\vb*{\theta} = \vb*{\theta}_0$ with finite estimation error, then $\left.\pdv{\hat{\rho}_{\vb*{\theta}}}{\theta_{1}}\right|_{\vb*{\theta} = \vb*{\theta}_0} \neq \sum_{i = 2}^{M} c_{i} \left.\pdv{\hat{\rho}_{\vb*{\theta}}}{\theta_{i}}\right|_{\vb*{\theta} = \vb*{\theta}_0}$.
\begin{proof}
In this proof, we set $\vb*{\theta} = \vb*{\theta}_0$ and show the proof by contradiction.
Let us assume $\mathbf{J}^{+}\mathbf{J}\vb*{w}_{1} = \vb*{w}_{1}$ and there exists $c_i\in\mathbb{R}$ such that $\pdv{\hat{\rho}_{\vb*{\theta}}}{\theta_{1}} = \sum_{i \neq 1} c_{i} \pdv{\hat{\rho}_{\vb*{\theta}}}{\theta_{i}}$.
From the assumption of $\pdv{\hat{\rho}_{\vb*{\theta}}}{\theta_{1}} = \sum_{i \neq 1} c_{i} \pdv{\hat{\rho}_{\vb*{\theta}}}{\theta_{i}}$ the elements of the QFIM satisfy:
\begin{align}
    [\mathbf{J}]_{k1}=\sum_{i \neq 1} c_{i} [\mathbf{J}]_{ki}, \forall k. \label{M:detectioncondition2}
\end{align}
This is because $\mathbf{J}_{ki} \equiv \mathrm{Tr} \left[\hat{L}_{k} \pdv{\hat{\rho}_{\vb*{\theta}}}{\theta_{i}} \right]$. Next, let us inspect $\mathbf{J}^{+}\mathbf{J}\vb*{w}_{1}$. Using Eq. \eqref{M:detectioncondition2}, we can express it as
\begin{align}
    \begin{split}
    & \left[\mathbf{J}^{+}\mathbf{J}\vb*{w}_{1}\right]_{a}  =\sum_{b,c=1}^{M} [\mathbf{J}^{+}]_{ab}[\mathbf{J}]_{bc}[\vb*{w}_{1}]_{c}\\
    &=\sum_{b=1}^{M}[\mathbf{J}^{+}]_{ab}[\mathbf{J}]_{b1}=\sum_{b=1}^{M}[\mathbf{J}^{+}]_{ab} \sum_{j \neq 1} c_{j} [\mathbf{J}]_{bj} =\left[\mathbf{J}^{+}\mathbf{J}\vb*{c}\right]_{a},
    \end{split}
\end{align}
where $\vb*{c}=(0,c_{2},\cdots,c_{M})^{\mathrm{T}}$. Notably, $\vb*{c}$ is orthogonal to $\vb*{w}_{1}$, as $\vb*{w}_{1}$ has non-zero support only in the first component. Now, consider the following inner product:
\begin{align}
    \vb*{w}_{1}^{\mathrm{T}}\left(\mathbf{J}^{+}\mathbf{J}\vb*{w}_{1}\right)=\vb*{w}_{1}^{\mathrm{T}}\left(\mathbf{J}^{+}\mathbf{J}\vb*{c}\right)=\left(\vb*{w}_{1}^{\mathrm{T}}\mathbf{J}^{+}\mathbf{J}\right)\vb*{c}=\vb*{w}_{1}^{\mathrm{T}} \vb*{c}=0. \label{pcontra2}
\end{align}
However, this contradicts with $\vb*{w}_{1}^{\mathrm{T}}\left(\mathbf{J}^{+}\mathbf{J}\vb*{w}_{1}\right) = \vb*{w}_1^T\vb*{w}_1 = 1$.
Therefore, when $\mathbf{J}^{+}\mathbf{J}\vb*{w}_{1} = \vb*{w}_{1}$, we have $\pdv{\hat{\rho}_{\vb*{\theta}}}{\theta_{1}} \neq \sum_{i \neq 1} c_{i} \pdv{\hat{\rho}_{\vb*{\theta}}}{\theta_{i}}$ for all $c_i\in\mathbb{R}$.
\end{proof}

\subsection{Proof of Corollary \ref{corollary1}}\label{methods:proofcorollary1}
We provide the proof of Corollary \ref{corollary1} stated in the main text.
For clarity,  we restate the corollary below.
We note that the Corollary stated in the main text is the simplified version, and the following Corollary is more general.

\setcounter{corollary}{0}
\begin{corollary}\label{M:corollary1}
    Let $\mathcal{E}_{\vb*{\theta}}$ be an unknown quantum channel parameterized by unknown parameters $\vb*{\theta} = (\theta_1, \ldots, \theta_M)^T$.
    Then, given multiple accesses to the unknown channel $\mathcal{E}_{\vb*{\theta}}$, a locally unbiased estimator of parameter $\theta_1$ at $\vb*{\theta} = \vb*{\theta}_0$ with finite estimation error exists if and only if
    \begin{align}
    \label{M:eq_cors1_1}
    \left.\pdv{\mathcal{E}_{\vb*{\theta}}}{\theta_{1}}\right|_{\vb*{\theta} = \vb*{\theta}_0} \neq \sum_{i \neq 1} c_{i} \left.\pdv{\mathcal{E}_{\vb*{\theta}}}{\theta_{i}}\right|_{\vb*{\theta} = \vb*{\theta}_0}
    \end{align}
    for all $c_i\in\mathbb{C}$. Here, $\vb*{\theta}_0=(\theta_{1,0},\theta_{2,0},\cdots,\theta_{M,0})^{\mathrm{T}}$ is a fixed point.
    
    If this condition is violated, {local} unbiased estimation of the parameters is fundamentally impossible even when the use of noiseless ancillae, general CPTP control, and multiple uses of the channel are allowed. Mathematically, {local} unbiased estimation remains impossible even when one has access to  the quantum channel $\mathcal{C}_{N+1}\circ(\mathcal{E}_{\vb*{\theta}}\otimes \mathcal{I})\circ \cdots\circ\mathcal{C}_2\circ(\mathcal{E}_{\vb*{\theta}}\otimes \mathcal{I})\circ \mathcal{C}_1$ where $\mathcal{C}_{i}$ are arbitrary CPTP channel and $N$ is arbitrary natural number.
\end{corollary}

\begin{proof}
    In this proof, we set $\vb*{\theta} = \vb*{\theta}_0$.
    We employ the fact that the most general estimation protocol for unknown quantum channels is represented by the \emph{sequential strategy}~\cite{giovannetti2006quantum, demkowicz2014using}.
    In this strategy, we access $\mathcal{E}_{\vb*{\theta}}$ for $N$ times and prepare the quantum state
    \begin{align}
        \label{M:eq_cors1_pr_1}
        \hat{\rho}_{\mathrm{seq}} = \mathcal{C}_{N+1}\circ(\mathcal{E}_{\vb*{\theta}}\otimes \mathcal{I})\circ \cdots\circ\mathcal{C}_2\circ(\mathcal{E}_{\vb*{\theta}}\otimes \mathcal{I})\circ \mathcal{C}_1(\hat{\rho}_0).
    \end{align}
    Here, $\hat{\rho}_0$ is the initial state of the system and ancilla, $\mathcal{C}_i$ is an arbitrary CPTP channel, and $\mathcal{I}$ denotes the identity channel on the ancilla.
    Since the sequential strategy is the most general estimation protocol, a {locally} unbiased estimator of $\theta_1$ with finite estimation error exists if and only if there exists a sequential strategy that allows {locally} unbiased estimation of $\theta_1$ from the output state $\rho_{\mathrm{seq}}$.
    This condition is equivalent to the existence of $N$, $\hat{\rho}_0$, and $\mathcal{C}_i$ such that
    \begin{align}
        \label{M:eq_cors1_pr_2}
        \begin{split}
        &\pdv{\theta_{1}}\mathcal{C}_{N+1}\circ(\mathcal{E}_{\vb*{\theta}}\otimes \mathcal{I})\circ \cdots\circ\mathcal{C}_2\circ(\mathcal{E}_{\vb*{\theta}}\otimes \mathcal{I})\circ \mathcal{C}_1(\hat{\rho}_0)\\
        &\neq \sum_{i \neq 1} d_{i} \pdv{\theta_{i}}\mathcal{C}_{N+1}\circ(\mathcal{E}_{\vb*{\theta}}\otimes \mathcal{I})\circ \cdots\circ\mathcal{C}_2\circ(\mathcal{E}_{\vb*{\theta}}\otimes \mathcal{I})\circ \mathcal{C}_1(\hat{\rho}_0)
        \end{split}
    \end{align}
    for all $d_i\in\mathbb{C}$.
    Therefore, it suffices to show the equivalence between Eq.~\eqref{M:eq_cors1_1} and Eq.~\eqref{M:eq_cors1_pr_2}.

    To establish this equivalence, we demonstrate that parameters $c_i\in\mathbb{C}$ satisfying
    \begin{align}
    \label{M:eq_cors1_pr_3}
    \pdv{\mathcal{E}_{\vb*{\theta}}}{\theta_{1}} = \sum_{i \neq 1} c_{i} \pdv{\mathcal{E}_{\vb*{\theta}}}{\theta_{i}}
    \end{align}
    exist if and only if, for all $N$, $\hat{\rho}_0$, and $\mathcal{C}_i$, parameters $d_i\in\mathbb{C}$ exist such that
    \begin{align}
        \label{M:eq_cors1_pr_4}
        \begin{split}
        &\pdv{\theta_{1}}\mathcal{C}_{N+1}\circ(\mathcal{E}_{\vb*{\theta}}\otimes \mathcal{I})\circ \cdots\circ\mathcal{C}_2\circ(\mathcal{E}_{\vb*{\theta}}\otimes \mathcal{I})\circ \mathcal{C}_1(\hat{\rho}_0)\\
        &= \sum_{i \neq 1} d_{i} \pdv{\theta_{i}}\mathcal{C}_{N+1}\circ(\mathcal{E}_{\vb*{\theta}}\otimes \mathcal{I})\circ \cdots\circ\mathcal{C}_2\circ(\mathcal{E}_{\vb*{\theta}}\otimes \mathcal{I})\circ \mathcal{C}_1(\hat{\rho}_0).
        \end{split}
    \end{align}

    We first prove the \emph{if} direction.
    Assume that for all $N$, $\hat{\rho}_0$, and $\mathcal{C}_i$, there exist parameters $d_i\in\mathbb{C}$ satisfying Eq.~\eqref{M:eq_cors1_pr_4}.
    In particular, consider the case where $N = 1$, $\hat{\rho}_0 = \ketbra{\Psi}$ with $\ket{\Psi} = \sum_i \ket{i}\ket{i}$ being the maximally entangled state, and where $\mathcal{C}_1$ and $\mathcal{C}_2$ are identity operations on the system and the ancilla.
    In this case, we obtain
    \begin{align}
        \label{eq_cors1_pr_5}
        \left(\pdv{\theta_{1}}\mathcal{E}_{\vb*{\theta}}\otimes \mathcal{I}\right)(\ketbra{\Psi})
        = \left(\sum_{i \neq 1} d_{i} \pdv{\theta_{i}}\mathcal{E}_{\vb*{\theta}}\otimes \mathcal{I}\right)(\ketbra{\Psi}),
    \end{align}
    for some $d_i\in\mathbb{C}$.
    This equality implies that the Choi state of the map $\partial_{\theta_{1}}\mathcal{E}_{\vb*{\theta}}$ coincides with the Choi state of the map $\sum_{i \neq 1} d_{i} \partial_{\theta_{i}}\mathcal{E}_{\vb*{\theta}}$.
    Therefore, we obtain Eq.~\eqref{M:eq_cors1_pr_3} with $c_i = d_i$.
    
    Next, we prove the \emph{only if} direction.
    Assume that there exist parameters $c_i\in\mathbb{C}$ satisfying Eq.~\eqref{M:eq_cors1_pr_3}.
    Then, since we have
    \begin{align}
        \label{eq_cors1_pr_6}
        \begin{split}
        &\;\;\;\pdv{\theta_{i}}\mathcal{C}_{N+1}\circ(\mathcal{E}_{\vb*{\theta}}\otimes \mathcal{I})\circ \cdots\circ\mathcal{C}_2\circ(\mathcal{E}_{\vb*{\theta}}\otimes \mathcal{I})\circ \mathcal{C}_1(\hat{\rho}_0) \\
        &= \mathcal{C}_{N+1}\circ(\partial_{\theta_i}\mathcal{E}_{\vb*{\theta}}\otimes \mathcal{I})\circ \cdots\circ\mathcal{C}_2\circ(\mathcal{E}_{\vb*{\theta}}\otimes \mathcal{I})\circ \mathcal{C}_1(\hat{\rho}_0) \\
        &+ \cdots + \mathcal{C}_{N+1}\circ(\mathcal{E}_{\vb*{\theta}}\otimes \mathcal{I})\circ \cdots\circ\mathcal{C}_2\circ(\partial_{\theta_i}\mathcal{E}_{\vb*{\theta}}\otimes \mathcal{I})\circ \mathcal{C}_1(\hat{\rho}_0),
        \end{split}
    \end{align}
    we obtain Eq.~\eqref{M:eq_cors1_pr_4} with $d_i = c_i$.
\end{proof}

\subsection{{Details of Unbiased Estimability in Cycle Benchmarking for the Pauli-Z Rotation Gate}}\label{methods:cycle}
In the main text, we discussed the {unbiased estimation of noise parameters} affecting the Pauli-Z rotation gate using cycle benchmarking~\cite{learnability-erhard2019characterizing}.
In this section, we provide a more detailed analysis.

As in the main text, we consider {unbiased estimation of noise parameters} on the Pauli-$Z$ rotation gate $\hat{U} = \hat{R}_z(\phi) = e^{-i\phi/2 \hat{Z}}$ using cycle benchmarking.
As shown in Sec.~V, the noise channel $\mathcal{N}$ affecting $U$ can be twirled, such that its Pauli transfer matrix is expressed as
\begin{align}
    \begin{pmatrix}
        1 & 0 & 0 & 0 \\
        0 & \lambda_1\cos\theta & \lambda_1\sin\theta & 0 \\
        0 & -\lambda_1\sin\theta & \lambda_1\cos\theta & 0 \\
        \alpha & 0 & 0 & \lambda_2 \\
    \end{pmatrix},
\end{align}
which is parameterized by four independent parameters: $\lambda_1, \lambda_2, \alpha$, and $\theta$.
The noise channel $\mathcal{N}$ represents a mixture of depolarizing noise, dephasing noise, amplitude damping noise, and coherent noise due to over-rotation.

We consider {estimation of} the twirled noise $\mathcal{N}$ using cycle benchmarking under state preparation and measurement (SPAM) errors.
We assume that the state preparation error $\mathcal{N}_S$ and measurement error $\mathcal{N}_M$ are Pauli-twirled to Pauli noise, whose Pauli transfer matrices are represented as
\begin{align}
    \begin{pmatrix}
        1 & 0 & 0 & 0 \\
        0 & \lambda_{1S} & 0 & 0 \\
        0 & 0 & \lambda_{2S} & 0 \\
        0 & 0 & 0 & \lambda_{3S} \\
    \end{pmatrix},
    \quad
    \begin{pmatrix}
        1 & 0 & 0 & 0 \\
        0 & \lambda_{1M} & 0 & 0 \\
        0 & 0 & \lambda_{2M} & 0 \\
        0 & 0 & 0 & \lambda_{3M} \\
    \end{pmatrix}.
\end{align}
In cycle benchmarking, we repeatedly apply the noisy $\hat{R}_z(\phi)$ gate.
This corresponds to analyzing the channel $\mathcal{N}_{d} = \mathcal{N}_M \circ (\mathcal{N} \circ \mathcal{U})^{\circ d} \circ \mathcal{N}_S$ for various $d$, whose Pauli transfer matrix $A_d$ is given by
\begin{widetext}
\begin{align}
    \begin{pmatrix}
        1 & 0 & 0 & 0 \\
        0 & \lambda_{1M}\lambda_{1S}\lambda_1^d\cos d\theta' & \lambda_{1M}\lambda_{2S}\lambda_1^d \sin d\theta' & 0 \\
        0 & -\lambda_{2M}\lambda_{1S}\lambda_1^d\sin d\theta' & \lambda_{2M}\lambda_{2S}\lambda_1^d \cos d\theta' & 0 \\
        \lambda_{3M}\alpha\frac{1-\lambda_1^d}{1-\lambda_1} & 0 & 0 & \lambda_{3M}\lambda_{3S}\lambda_2^d \\
    \end{pmatrix}.
\end{align}
\end{widetext}
Here, $\theta' = \phi + \theta$ and $(A_d)_{ij}$ denotes the $(i+1,j+1)$-th component of $A_d$.

To encapsulate $\mathcal{N}_{d}$ for all integer values of $d$ in a single description, we introduce an augmented channel $\mathcal{N}_{\mathrm{cycle}}$, and a classical register $\hat{\rho}_{c}$. This register keeps track of the number of applied gates $d$. Formally, we define
\begin{align}
    \mathcal{N}_{\mathrm{cycle}} : \hat{\rho} \otimes \hat{\rho}_c \mapsto \sum_{d}\mathcal{N}_{d}(\hat{\rho}) \otimes \ketbra{d}\hat{\rho}_c\ketbra{d}.
\end{align}
In this representation, each outcome $\ketbra{d}$ in the register corresponds to the channel $\mathcal{N}_{d}$ acting on the initial state $\hat{\rho}$.

Since there is a one-to-one correspondence between a quantum channel and its Choi state, we consider its Choi state defined as
\begin{align}
    \begin{split}
    &\hat{\rho}_{\mathrm{cycle}} = \mathcal{N}_{\mathrm{cycle}} \otimes \mathcal{I}(\ketbra{\Psi}\otimes\ketbra{\Psi_D}) \\
    &= \frac{1}{D} \sum_{d=0}^D \hat{\rho}_d \otimes \ketbra{dd}.
    \end{split}
\end{align}
Here, $\ketbra{\Psi}$ and $\ketbra{\Psi_D}$ are maximally entangled states for single-qubit and $D$-dimensional systems, respectively, and $\hat{\rho}_d$ is the Choi state of the channel $\mathcal{N}_{d}$, defined as
\begin{align}
    \hat{\rho}_d = \mathcal{N}_{d}\otimes \mathcal{I}(\ketbra{\Psi}).
\end{align}
Since the maximally entangled state $\ketbra{\Psi}$ can be represented as
\begin{align}
    \ketbra{\Psi} = \frac{1}{4} \sum_{i=0}^{3} c_i P_i\otimes P_i,
\end{align}
where $c_i = \pm1$, $P_0 = I$, $P_1 = X$, $P_2 = Y$, and $P_3 = Z$, we can represent $\hat{\rho}_d$ as
\begin{align}
    \hat{\rho}_d 
    &= \frac{1}{4}\sum_{i=0}^{3}\sum_{j=0}^{3}c_j(A_{d})_{ij}P_i\otimes P_j.
\end{align}

Let us show that a {locally} unbiased estimator of $\alpha$ cannot be obtained from $\hat{\rho}_{\mathrm{cycle}}$.
For the parameters $\alpha, \lambda_{3M}, \lambda_{3S}$, we have
\begin{align}
    &\alpha \partial_\alpha \hat{\rho}_{\mathrm{cycle}} = \frac{1}{D} \sum_{d=1}^D c_0(A_{d})_{30}P_3\otimes P_0\otimes \ketbra{dd}, \\
    &\lambda_{3M} \partial_{\lambda_{3M}} \hat{\rho}_{\mathrm{cycle}}\\
    &= \frac{1}{D} \sum_{d=1}^D (c_0(A_{d})_{30}P_3\otimes P_0 + c_3(A_{d})_{33}P_3\otimes P_3)\otimes \ketbra{dd}, \\
    &\lambda_{3S} \partial_{\lambda_{3S}} \hat{\rho}_{\mathrm{cycle}} = \frac{1}{D} \sum_{d=1}^D (c_3(A_{d})_{33}P_3\otimes P_3)\otimes \ketbra{dd}.
\end{align}
Therefore, we obtain
\begin{align}
    \alpha \partial_\alpha \hat{\rho}_{\mathrm{cycle}} = \lambda_{3M} \partial_{\lambda_{3M}} \hat{\rho}_{\mathrm{cycle}} - \lambda_{3S} \partial_{\lambda_{3S}} \hat{\rho}_{\mathrm{cycle}},
\end{align}
which is equivalent to
\begin{align}
    \alpha \partial_\alpha \mathcal{N}_{\mathrm{cycle}} = \lambda_{3M} \partial_{\lambda_{3M}} \mathcal{N}_{\mathrm{cycle}} - \lambda_{3S} \partial_{\lambda_{3S}} \mathcal{N}_{\mathrm{cycle}}.
\end{align}
Thus, by Corollary~\ref{corollary1}, a {locally} unbiased estimator for the parameter $\alpha$ cannot be derived from $\mathcal{N}_{\mathrm{cycle}}$. 

Next, we show that $\lambda_1, \lambda_2, \theta'$ {admit locally unbiased estimators within cycle benchmarking model}.
It is straightforward to observe that $\lambda_2$ can be {unbiasedly estimated} by calculating $(A_{d+1})_{33}/ (A_d)_{33}$, so we focus on parameters $\lambda_1$ and $\theta'$.
We can determine $\lambda_{1M}\lambda_{1S}$ and $\lambda_{2M}\lambda_{2S}$ by calculating $(A_0)_{11}$ and $(A_0)_{22}$.
Additionally, the ratio $\frac{\lambda_{1M}}{\lambda_{2M}}\frac{\lambda_{2S}}{\lambda_{1S}}$ can be obtained by taking the ratio of $(A_d)_{12}$ and $(A_d)_{21}$.
From these values, we can derive $\frac{\lambda_{1M}}{\lambda_{2M}}$ and $\frac{\lambda_{2S}}{\lambda_{1S}}$ and subsequently convert $(A_d)_{12} = \lambda_{1M}\lambda_{2S}\lambda_1^d\sin d \theta'$ into $\lambda_{1M}\lambda_{1S}\lambda_1^d\sin d\theta'$.
Taking the ratio of this expression with $(A_d)_{11} = \lambda_{1M}\lambda_{1S}\lambda_1^d\cos d\theta'$, we can determine $\theta'$.
Since we can {unbiasedly estimate} $\lambda_{1M}\lambda_{1S}$ and $\theta'$, we can also extract $\lambda_1$ from $(A_d)_{11} = \lambda_{1M}\lambda_{1S}\lambda_1^d\cos d\theta'$.
We can also {determine} $\lambda_{1M}\lambda_{2S}$ and $\lambda_{2M}\lambda_{1S}$ from $(A_d)_{12} = \lambda_{1M}\lambda_{2S}\lambda_1^d\sin d \theta'$ and  $(A_d)_{21} = -\lambda_{2M}\lambda_{1S}\lambda_1^d\sin d\theta'$.
Combining these {estimated} values with $\frac{\lambda_{1M}}{\lambda_{2M}}$ and $\frac{\lambda_{2S}}{\lambda_{1S}}$, we can individually {unbiasedly estimate} $\lambda_{1S}$, $\lambda_{2S}$, $\lambda_{1M}$, and $\lambda_{2M}$.

In conclusion, we can {unbiasedly estimate} $\lambda_1,\lambda_2,\theta$, but not $\alpha$.
When considering SPAM errors, we can {unbiasedly estimate} $\lambda_1$, $\lambda_2$, $\theta$, $\lambda_{1S}$, $\lambda_{2S}$, $\lambda_{1M}$, $\lambda_{2M}$, but cannot {unbiasedly estimate} $\alpha, \lambda_{3S}, \lambda_{3M}$ individually; only the products $\lambda_{3M}\alpha$ and $\lambda_{3M}\lambda_{3S}$ can be {unbiasedly estimate}.

\bibliography{Reference.bib}

\let\oldaddcontentsline\addcontentsline
\renewcommand{\addcontentsline}[3]{}
\let\addcontentsline\oldaddcontentsline
\onecolumngrid

\clearpage
\begin{center}
	\Large
	\textbf{Supplemental Material: Criteria for unbiased estimation: applications to noise-agnostic sensing and {quantum channel estimation}}
\end{center}

\setcounter{equation}{0}
\setcounter{figure}{0}
\setcounter{table}{0}

\setcounter{page}{1}
\renewcommand{\thesection}{S\arabic{section}}
\renewcommand{\theequation}{S\arabic{equation}}
\renewcommand{\thefigure}{S\arabic{figure}}
\renewcommand{\thetable}{S\arabic{table}}

\addtocontents{toc}{\protect\setcounter{tocdepth}{0}}

\setcounter{section}{0}

\section{Fully Generalized multi-parameter Cram\'er-Rao matrix inequality}
In this section, we derive the \textit{fully generalized quantum Cramér-Rao matrix inequality}, which serves as the fundamental relation for multi-parameter estimation under non-invertible quantum Fisher information matrix.
While the main text focuses on the {locally} unbiased estimation scenario, the supplemental material extends the discussion to include biased estimation cases. In this section, we first introduce the quantum Cram\'er-Rao matrix inequality.
We then derive the \textit{perturbed quantum Cram\'er-Rao matrix inequality} which serves as a foundational inequality for deriving the fully generalized quantum Cramér-Rao matrix inequality.
Building on this bound, we then proceed to derive the fully generalized quantum Cramér-Rao matrix inequality.

\subsection{Multi-parameter quantum estimation}
Let us consider a quantum state $\hat{\rho}_{\vb*{\theta}}$ that encodes multi-parameters $ \vb*{\theta} :=(\theta_{1},\theta_{2},\cdots,\theta_{M})^{\mathrm{T}}$. To estimate $\vb*{\theta}$, a measurement is performed, characterized by a positive-operator-valued measure (POVM), denoted as $\{\hat{\Pi}_{\vb*{x}}\}_{\vb*{x}}$, where each operator $\hat{\Pi}_{\vb*{x}}$ corresponds to a possible measurement outcome $\vb*{x}$. Based on the measurement outcome $\vb*{x}$, an estimator $\tilde{\theta}_{i}=\tilde{\theta}_{i}(\vb*{x})$ for each parameter $\theta_{i}$ is constructed. The corresponding estimation error can be characterized by $M \times M$ error matrix $\mathbf{\Sigma}$, with elements defined as
\begin{align}
    [\mathbf{\Sigma}]_{ij}=\int dx ~ p_{\vb*{\theta}}(\vb*{x})\left(\tilde{\theta}_{i}-\theta_{i}\right)\left(\tilde{\theta}_{j}-\theta_{j}\right).
\end{align}
Here $p_{\vb*{\theta}}(\vb*{x}) :=\mathrm{Tr}[\hat{\Pi}_{\vb*{x}}\hat{\rho}_{\vb*{\theta}}]$ represents the probability of obtaining the measurement outcome $\vb*{x}$ for given parameters $\vb*{\theta}$.
We note that the error matrix can be decomposed into the sum of the covariance matrix $\mathbf{C}$ and the bias matrix $\mathbf{B}$ as $\mathbf{\Sigma}=\mathbf{C}+\mathbf{B}$, where the elements of the covariance matrix and the bias matrix are defined as
\begin{align}
    [\mathbf{C}]_{ij}=\int dx ~ p_{\vb*{\theta}}(\vb*{x})\left(\tilde{\theta}_{i}-\big< \tilde{\theta}_{i}\big>\right)\left(\tilde{\theta}_{j}-\big<\tilde{\theta}_{j}\big>\right), ~~[\mathbf{B}]_{ij}= \left(\big< \tilde{\theta}_{i}\big>- \theta_{i}\right)\left(\big< \tilde{\theta}_{j}\big>- \theta_{j}\right).
\end{align}
Here $\big< (\cdot) \big>$ is the average over all possible measurement outcome $\vb*{x}$ defined as $\big< (\cdot) \big>:=\int d\vb*{x}  ~ p(\vb*{x})(\cdot)$. This decomposition provides a clear distinction between the contributions to the estimation error from the statistical variance of the estimator (captured by $\mathbf{C}$) and from its bias (captured by $\mathbf{B}$).

When all the estimators satisfy the local unbiased condition (LUC) that is
\begin{align}
    \big<\tilde{\theta}_{i}(\vb*{x})\big>=\theta_{i},~\partial_{i}\big<\tilde{\theta}_{j}\big>=\delta_{ij}, \label{S:luc}
\end{align}
the multi-parameter quantum Cramér-Rao matrix inequality provides a lower bound for the error matrix:
\begin{align}
    \mathbf{\Sigma}= \mathbf{C} \succeq \mathbf{F}^{-1} \succeq \mathbf{J}^{-1}, \label{S:QCRB}
\end{align}
where $\mathbf{F}$ is the Fisher information matrix, and $\mathbf{J}$ is the quantum Fisher information matrix (QFIM) which are defined as
\begin{align}
    [\mathbf{F}]_{ij}=\int d\vb*{x} ~ \frac{\partial_{i} p_{\vb*{\theta}}(\vb*{x})\partial_{j} p_{\vb*{\theta}}(\vb*{x})}{p_{\vb*{\theta}}(\vb*{x})},~ [\mathbf{J}]_{ij}:=\mathrm{Tr}\big[\hat{\rho}_{\vb*{\theta}}\{\hat{L}_{i},\hat{L}_{j}\}{/2}\big].
\end{align}
Here $\hat{L}_{i}$ denotes the symmetric logarithmic derivative (SLD) operator satisfying $\pdv{\hat{\rho}_{\vb*{\theta}}}{\theta_{i}}=\frac{1}{2}\{\hat{L}_{i},\hat{\rho}_{\vb*{\theta}}\}$. While the quantum Cramér-Rao matrix inequality gives a lower bound for the error matrix, we emphasize that the validity of the quantum Cramér-Rao matrix inequality highly depends on the following two assumptions: (1) All the estimators should satisfy the LUC. (2) the QFIM $\mathbf{J}$ is invertible. To address these limitations, we derive the fully generalized quantum Cramér-Rao matrix inequality, which does not require these two assumptions yet still establishes a lower bound for the error matrix.

\subsection{Perturbed quantum Cram\'er-Rao matrix inequality}
To derive the fully generalized quantum Cramér-Rao matrix inequality, we first introduce the \textit{perturbed Cram\'er-Rao matrix inequality}. 

We begin by considering the diagonal form of $\mathbf{J}$. Since $\mathbf{J}$ is positive semi-definite, it can be expressed as
\begin{align}
    \mathbf{J}= \sum_{i=1}^{s} \lambda_{i} \vb*{\lambda}_{i}\vb*{\lambda}^{\mathrm{T}}_{i} + \sum_{j=s+1}^{M} 0 \vb*{\lambda}_{j}\vb*{\lambda}^{\mathrm{T}}_{j}, \label{S:diagJ}
\end{align}
where $\{\lambda_{i}\}_{i=1}^{s}$ are the positive eigenvalues, and $\{\vb*{\lambda}_{i}\}_{i=1}^{M}$ are the eigenvectors. Next, we define the \textit{perturbed FIM} $\mathbf{F}_{\vb*{\varepsilon}}$, and the \textit{perturbed QFIM} $\mathbf{J}_{\vb*{\varepsilon}}$ as 
\begin{align}
    \mathbf{F}_{\vb*{\varepsilon}}= \mathbf{F}+\varepsilon_{\mathrm{s}}\mathbf{I}_{s} + \varepsilon_{\mathrm{n}}\mathbf{I}_{n},~~\mathbf{J}_{\vb*{\varepsilon}}= \mathbf{J}+\varepsilon_{\mathrm{s}}\mathbf{I}_{\mathrm{s}} + \varepsilon_{\mathrm{n}}\mathbf{I}_{\mathrm{n}},~\varepsilon_{\mathrm{s}},\varepsilon_{\mathrm{n}}>0. \label{S:defepJ}
\end{align}
Here $\mathbf{I}_{s}:=\sum_{i=1}^{s} \vb*{\lambda}_{i}\vb*{\lambda}^{\mathrm{T}}_{i}$ and $\mathbf{I}_{n}:=\sum_{j=s+1}^{M} \vb*{\lambda}_{j}\vb*{\lambda}^{\mathrm{T}}_{j}$ represent the projection matrices onto the support and the null space of $\mathbf{J}$ respectively, and $\vb*{\varepsilon} = (\varepsilon_{\mathrm{s}}, \varepsilon_{\mathrm{n}})$. These definitions ensure that both $\mathbf{F}_{\vb*{\varepsilon}}$ and $\mathbf{J}_{\vb*{\varepsilon}}$ are invertible.
Using the perturbed FIM, we derive the following matrix inequality, defining the perturbed Cramér-Rao matrix inequality:
\begin{align}
    \mathbf{\Sigma}=\mathbf{C}+\mathbf{B} \succeq \mathbf{H}^{\mathrm{T}}\mathbf{F}^{-1}_{\vb*{\varepsilon}} \mathbf{H} + \mathbf{B}, 
\end{align}
where $\mathbf{H}$ is the Jacobian of the estimators’ averaged values, defined as $[\mathbf{H}]_{ij} := \partial_{i} \big< \tilde{\theta}_{j} \big>$.
\begin{proof}
We begin by considering the inequality:
    \begin{align}
    &\left(\vb*{w}^{\mathrm{T}}\mathbf{C}\vb*{w}\right)\left(\vb*{v}^{\mathrm{T}} \mathbf{F}_{\vb*{\varepsilon}} \vb*{v} \right)\geq \left(\vb*{w}^{\mathrm{T}}\mathbf{C}\vb*{w}\right)\left(\vb*{v}^{\mathrm{T}} \mathbf{F} \vb*{v} \right)\\
    &=\left[\int dx ~ \sum_{i,j=1}^{{M}} w_{i}p_{\vb*{\theta}}(x)\left(\tilde{\theta}_{i}- \big< \tilde{\theta}_{i} \big>\right)\left(\tilde{\theta}_{j}- \big< \tilde{\theta}_{j} \big>\right)w_{j}\right] \left[\int dx ~ \sum_{i',j'=1}^{{M}} v_{i'} \frac{\partial_{i'} p_{\vb*{\theta}}(x)\partial_{j'} p_{\vb*{\theta}}(x)}{p_{\vb*{\theta}}(x)}v_{j'}\right]\\
    &= \int dx ~ \left[\sum_{i=1}^{{M}}w_{i} \sqrt{p_{\vb*{\theta}}(x)}\left(\tilde{\theta}_{i}- \big< \tilde{\theta}_{i} \big>\right)\right] \left[\sum_{j=1}^{{M}}w_{j} \sqrt{p_{\vb*{\theta}}(x)}\left(\tilde{\theta}_{j}- \big< \tilde{\theta}_{j} \big>\right)\right] \int dx'~\left[\sum_{i'=1}^{{M}}v_{i'}\frac{\partial_{i'} p_{\vb*{\theta}}(x)}{\sqrt{p_{\vb*{\theta}}(x)}}\right]\left[\sum_{j'=1}^{{M}}v_{j'}\frac{\partial_{j'} p_{\vb*{\theta}}(x)}{\sqrt{p_{\vb*{\theta}}(x)}}\right]\\
    &\geq \left[\int dx \left(\sum_{i=1}^{{M}}w_{i}\left(\tilde{\theta}_{i}- \big< \tilde{\theta}_{i} \big>\right)\right)\left(\sum_{j=1}^{{M}}v_{j}\partial_{j} p_{\vb*{\theta}}(x)\right) \right]^{2}
    = \left[\sum_{i,j}^{M}w_{i} \left(\int dx ~ \tilde{\theta}_{i} \partial_{j} p_{\vb*{\theta}}(x) \right) v_{j} \right]^{2} \equiv \left(\vb*{w}^{\mathrm{T}}  \mathbf{H}^{\mathrm{T}} \vb*{v}\right)^{2}. \label{S:regula}
\end{align}
The first inequality follows from $\vb*{v}^{\mathrm{T}}\left(\varepsilon_{\mathrm{s}}\mathbf{I}_{s}+\varepsilon_{\mathrm{n}}\mathbf{I}_{n}\right)\vb*{v}>0$, and the second inequality can be obtained from the Cauchy-Schwarz inequality. Here, we assume the regularity condition $\int dx~\partial_{i}p_{\vb*{\theta}}(x)=0~\forall i$ to obtain the first equality in Eq. \eqref{S:regula}.
Here we note that we have not made any assumptions about $\vb*{w}$ and $\vb*{v}$. Therefore, by setting $\vb*{v} = \mathbf{F}^{-1}_{\vb*{\varepsilon}}  \mathbf{H}  \vb*{w}$, we obtain
\begin{align}
    \left(\vb*{w}^{\mathrm{T}}  \mathbf{C}  \vb*{w}\right) \left(\vb*{w}^{\mathrm{T}}  \mathbf{H}^{\mathrm{T}}  \mathbf{F}^{-1}_{\vb*{\varepsilon}} \mathbf{F}_{\vb*{\varepsilon}}  \mathbf{F}^{-1}_{\vb*{\varepsilon}}  \mathbf{H} \vb*{w}\right) = \left(\vb*{w}^{\mathrm{T}}  \mathbf{C}  \vb*{w}\right)\left(\vb*{w}^{\mathrm{T}}  \mathbf{H}^{\mathrm{T}}  \mathbf{F}^{-1}_{\vb*{\varepsilon}} \mathbf{H}  \vb*{w}\right) \geq \left(\vb*{w}^{\mathrm{T}}  \mathbf{H}^{\mathrm{T}}  \mathbf{F}^{-1}_{\vb*{\varepsilon}} \mathbf{H}  \vb*{w}\right)^{2}~\rightarrow ~ \vb*{w}^{\mathrm{T}}  \mathbf{C}  \vb*{w} \geq \vb*{w}^{\mathrm{T}}  \mathbf{H}^{\mathrm{T}}  \mathbf{F}^{-1}_{\vb*{\varepsilon}} \mathbf{H}  \vb*{w}. \label{S:dervPCRB} 
\end{align}
Since $\vb*{w}$ is an arbitrary vector, Eq. \eqref{S:dervPCRB} implies the following matrix inequality:
\begin{align}
    \mathbf{C} \succeq \mathbf{H}^{\mathrm{T}}\mathbf{F}^{-1}_{\vb*{\varepsilon}} \mathbf{H},~~\text{or equivalently,~~} \mathbf{\Sigma} \succeq \mathbf{H}^{\mathrm{T}}\mathbf{F}^{-1}_{\vb*{\varepsilon}} \mathbf{H}+\mathbf{B}. \label{S:epsilonCRB}
\end{align}
\end{proof}
Next, we derive the perturbed quantum Cram\'er-Rao matrix inequality which is
\begin{align}
    \mathbf{\Sigma}=\mathbf{C}+\mathbf{B} \succeq \mathbf{H}^{\mathrm{T}}\mathbf{F}^{-1}_{\vb*{\varepsilon}} \mathbf{H} + \mathbf{B}\succeq \mathbf{H}^{\mathrm{T}}\mathbf{J}^{-1}_{\vb*{\varepsilon}} \mathbf{H} + \mathbf{B}, \label{S:epsilonQCRB}
\end{align}
\begin{proof}
    Since $\mathbf{F} \preceq \mathbf{J}$, it follows that $\mathbf{F}_{\vb*{\varepsilon}} \preceq \mathbf{J}_{\vb*{\varepsilon}}$. Thus, the invertibility of $\mathbf{F}_{\vb*{\varepsilon}}$ and $\mathbf{J}_{\vb*{\varepsilon}}$ implies $\mathbf{F}^{-1}_{\vb*{\varepsilon}} \succeq \mathbf{J}^{-1}_{\vb*{\varepsilon}}$. Combining this with the perturbed Cram\'er-Rao matrix inequality in Eq. \eqref{S:epsilonCRB}, we obtain the perturbed quantum Cram\'er-Rao matrix inequality in Eq. \eqref{S:epsilonQCRB}.
\end{proof}

\subsection{Fully generalized quantum Cramér-Rao matrix inequality}
To derive the fully generalized quantum Cramér-Rao matrix inequality from the perturbed quantum Cram\'er-Rao matrix inequality, we begin by revisiting the perturbed quantum Cram\'er-Rao matrix inequality which is expressed as:
\begin{align}
    \mathbf{\Sigma} \succeq \mathbf{H}^{\mathrm{T}}\mathbf{J}^{-1}_{\vb*{\varepsilon}} \mathbf{H} + \mathbf{B}. \label{S:simpleepsilonQCRB}
\end{align}
We emphasize that this inequality holds for any positive values of the perturbation parameters $\varepsilon_{\mathrm{s}}>0$ and $\varepsilon_{\mathrm{n}}>0$. From the diagonal form of $\mathbf{J}$ and the definition of $\mathbf{J}_{\vb*{\varepsilon}}$, one can easily find that the inverse of $\mathbf{J}_{\vb*{\varepsilon}}$ can be expressed as 
\begin{align}
    \mathbf{J}^{-1}_{\vb*{\varepsilon}}=\sum_{i=1}^{s} \frac{1}{\lambda_{i}+\varepsilon_{\mathrm{s}}} \vb*{\lambda}_{i}\vb*{\lambda}_{i}^{\mathrm{T}} + \sum_{j=s+1}^{M} \frac{1}{\varepsilon_{\mathrm{n}}} \vb*{\lambda}_{j}\vb*{\lambda}_{j}^{\mathrm{T}}.
\end{align}
Next, let us consider $\mathbf{J}^{-1}_{\vb*{\varepsilon}}$ in the following limit:
\begin{align}
    \lim_{\varepsilon_{\mathrm{s}}\to 0}\lim_{\varepsilon_{\mathrm{n}} \to \infty}\mathbf{J}^{-1}_{\vb*{\varepsilon}}=\sum_{i=1}^{s} \frac{1}{\lambda_{i}} \vb*{\lambda}_{i}\vb*{\lambda}_{i}^{\mathrm{T}} + \sum_{j=s+1}^{M} 0 \vb*{\lambda}_{j}\vb*{\lambda}_{j}^{\mathrm{T}} = \mathbf{J}^{+}
\end{align}
where $\mathbf{J}^{+}$ is the Moore-Penrose pseudo inverse of $\mathbf{J}$. Applying the limit ${\varepsilon_{\mathrm{s}}\to 0}$, and ${\varepsilon_{\mathrm{n}}} \to \infty$ to the perturbed quantum Cram\'er-Rao matrix inequality in Eq. \eqref{S:simpleepsilonQCRB}, we obtain the following matrix inequality, which we define as the fully generalized quantum Cram\'er-Rao matrix inequality:
\begin{align}
    \mathbf{\Sigma} \succeq \mathbf{H}^{\mathrm{T}}\mathbf{J}^{+} \mathbf{H} + \mathbf{B}, \label{S:FGQCRM}
\end{align}
where $\mathbf{J}^{+}=\sum_{i=1}^{s} \frac{1}{\lambda_{i}} \vb*{\lambda}_{i}\vb*{\lambda}_{i}^{\mathrm{T}} + \sum_{j=s+1}^{M} 0 \vb*{\lambda}_{j}\vb*{\lambda}_{j}^{\mathrm{T}}$ is the Moore-Penrose pseudo inverse of $\mathbf{J}$.
Finally, from Eq. \eqref{S:FGQCRM}, we establish the following inequality:
\begin{align}
    \vb*{w}^{\mathrm{T}}\mathbf{\Sigma}\vb*{w} \succeq \vb*{w}^{\mathrm{T}}\mathbf{H}^{\mathrm{T}}\mathbf{J}^{+} \mathbf{H}\vb*{w} + \vb*{w}^{\mathrm{T}}\mathbf{B}\vb*{w}. \label{S:FGQCRB}
\end{align}
We note that if the parameter of interest $\phi=\vb*{w}^{\mathrm{T}}\vb*{\theta}$ satisfy the LUC given by Eq. \eqref{S:luc}, equivalently, $\mathbf{H}\vb*{w}=\vb*{w}$ and $\vb*{w}^{\mathrm{T}}\mathbf{B}\vb*{w}=0$, this bound reduces to \textit{generalized quantum Cramér-Rao bound}:
\begin{align}
    \vb*{w}^{\mathrm{T}}\mathbf{C}\vb*{w} \succeq \vb*{w}^{\mathrm{T}}\mathbf{J}^{+} \vb*{w}. \label{S:GQCRB}
\end{align}

\section{Alternative proof of the achievability of the generalized quantum Cramér-Rao bound}
In this section, we discuss when the equality in the generalized quantum Cramér-Rao bound Eq.~\eqref{S:GQCRB} can be achieved, i.e., when we can perform a locally unbiased estimation with finite variance.
Although the necessary and sufficient condition was originally proved in Refs.~\cite{lemma-PhysRevX.10.031023, albarelli2019upper}, we offer an alternative derivation based on the theory of nuisance parameters~\cite{est-Suzuki_2020}.

We now formally state the results as the following lemma:
\setcounter{lemma}{0}
\begin{lemma}\label{S:lemma1}
There exists a locally unbiased estimator of $\phi=\vb*{w}^{\mathrm{T}}\vb*{\theta}$ with finite estimation error if and only if $\vb*{w} \in \mathrm{supp}\left(\mathbf{J}\right)$.
In other words
\begin{align}
    \mathbf{J}^{+}\mathbf{J}\vb*{w}  = \vb*{w}. \label{S:NScondit1_revised}
\end{align}
Furthermore, when Eq.~\eqref{S:NScondit1_revised} is satisfied, the achievable lower bound on the estimation error of $\phi=\vb*{w}^{\mathrm{T}}\vb*{\theta}$ is given by generalized quantum Cramér-Rao bound:
\begin{align}
    \Delta^{2}\phi=\vb*{w}^{\mathrm{T}}\mathbf{C}\vb*{w} \geq \vb*{w}^{\mathrm{T}}\mathbf{J}^{+}\vb*{w}. \label{S:lemmaeq}
\end{align}
\end{lemma}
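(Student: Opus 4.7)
The plan is to split the proof into three independent pieces: necessity of $\vb*{w}\in\mathrm{supp}(\mathbf{J})$, the lower bound $\vb*{w}^{\mathrm{T}}\mathbf{J}^{+}\vb*{w}$, and the asymptotic achievability of that bound. All three hinge on analyzing linear combinations of the symmetric logarithmic derivatives, $\hat{L}_{\vb*{v}} := \sum_{i}v_{i}\hat{L}_{i}$, and on the fully generalized quantum Cram\'er--Rao matrix inequality Eq.~\eqref{S:FGQCRM} already proven in this SM.

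For necessity, I would pick any $\vb*{v}\in\mathrm{null}(\mathbf{J})$ and observe that $\vb*{v}^{\mathrm{T}}\mathbf{J}\vb*{v} = \mathrm{Tr}[\hat{\rho}_{\vb*{\theta}}\hat{L}_{\vb*{v}}^{2}] = 0$ forces $\hat{L}_{\vb*{v}}\hat{\rho}_{\vb*{\theta}}^{1/2} = 0$, so the directional derivative $\sum_{i}v_{i}\partial_{i}\hat{\rho}_{\vb*{\theta}} = \tfrac{1}{2}\{\hat{L}_{\vb*{v}},\hat{\rho}_{\vb*{\theta}}\}$ vanishes. Representing $\tilde{\phi}$ through its associated Hermitian operator $\hat{M} := \sum_{\vb*{x}}\tilde{\phi}(\vb*{x})\hat{\Pi}_{\vb*{x}}$ and differentiating the unbiasedness identity $\mathrm{Tr}[\hat{\rho}_{\vb*{\theta}}\hat{M}] = \vb*{w}^{\mathrm{T}}\vb*{\theta}$ with respect to $\theta_{i}$ yields $\mathrm{Tr}[\partial_{i}\hat{\rho}_{\vb*{\theta}}\hat{M}] = w_{i}$. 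Contracting with $\vb*{v}$ produces $0 = \vb*{v}^{\mathrm{T}}\vb*{w}$ for every $\vb*{v}\in\mathrm{null}(\mathbf{J})$, forcing $\vb*{w}\in\mathrm{supp}(\mathbf{J})$, i.e.\ $\mathbf{J}^{+}\mathbf{J}\vb*{w} = \vb*{w}$. This argument in fact rules out \emph{any} unbiased estimator---not merely finite-variance ones---when the support condition fails.

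For the lower bound when $\vb*{w}\in\mathrm{supp}(\mathbf{J})$, I would apply the scalar Cauchy--Schwarz argument directly to $\tilde{\phi}$ (of the same flavor as in the derivation of Eq.~\eqref{S:epsilonCRB}), obtaining
\begin{align}
    \Delta^{2}\phi \cdot \vb*{v}^{\mathrm{T}}\mathbf{J}\vb*{v} \geq (\vb*{v}^{\mathrm{T}}\vb*{w})^{2}
\end{align}
for any $\vb*{v}$, and then optimizing by choosing $\vb*{v} = \mathbf{J}^{+}\vb*{w}\in\mathrm{supp}(\mathbf{J})$ to obtain $\Delta^{2}\phi \geq \vb*{w}^{\mathrm{T}}\mathbf{J}^{+}\vb*{w}$. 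For achievability, I would diagonalize $\mathbf{J}$ as in Eq.~\eqref{S:diagJ}, introduce the $s$ effective parameters $\phi_{i} := \vb*{\lambda}_{i}^{\mathrm{T}}\vb*{\theta}$ whose restricted QFIM $\mathrm{diag}(\lambda_{1},\ldots,\lambda_{s})$ is invertible, and expand $\vb*{w} = \sum_{i=1}^{s}(\vb*{\lambda}_{i}^{\mathrm{T}}\vb*{w})\vb*{\lambda}_{i}$. Standard multi-parameter theory then yields an unbiased $\tilde{\phi}$ via the POVM diagonalizing the SLD in the direction $\mathbf{J}^{+}\vb*{w}$ combined with asymptotic maximum likelihood estimation, and a direct variance calculation shows $\Delta^{2}\phi = \vb*{w}^{\mathrm{T}}\mathbf{J}^{+}\vb*{w}$, saturating the lower bound.

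The main obstacle is the achievability step. The necessity and lower bound are clean structural arguments that follow quickly from the SLD identity and the Cauchy--Schwarz inequality. The delicate part is verifying that the reduction to the $s$-dimensional invertible sub-problem is faithful---the state must be genuinely insensitive to perturbations along $\mathrm{null}(\mathbf{J})$, so that these directions contribute nothing to the achievable variance---and that a single explicit POVM, together with the asymptotic optimality of maximum likelihood estimation, simultaneously saturates the bound along the specific direction $\vb*{w}$. The remainder of the SM is devoted to exhibiting the explicit optimal POVM and estimator so that this achievability step is fully rigorous.
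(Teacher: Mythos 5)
Your proposal is correct in substance but takes a genuinely different route for the ``only if'' direction, and is thinner than the paper on achievability. For necessity, the paper works through the perturbed quantum Cram\'er--Rao matrix inequality: assuming a locally unbiased estimator with $\vb*{w}_1\notin\mathrm{supp}(\mathbf{J})$, the bound $\vb*{w}_1^{\mathrm{T}}\mathbf{J}_{\vb*{\varepsilon}}^{-1}\vb*{w}_1$ contains a term $(\vb*{w}_1^{\mathrm{T}}\vb*{\lambda}_j)^2/\varepsilon_{\mathrm{n}}$ that diverges, contradicting finiteness of the error. Your argument is more structural: $\vb*{v}^{\mathrm{T}}\mathbf{J}\vb*{v}=0$ forces $\hat{L}_{\vb*{v}}\hat{\rho}_{\vb*{\theta}}^{1/2}=0$, hence the directional derivative of the state along $\mathrm{null}(\mathbf{J})$ vanishes, and differentiating the (local) unbiasedness constraint kills $\vb*{v}^{\mathrm{T}}\vb*{w}$. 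This is clean, yields the strictly stronger conclusion that no locally unbiased estimator exists at all (not merely that its variance is infinite), and is essentially the same mechanism the paper only deploys later, inside Property~\ref{S:prop4} of the proof of Theorem~\ref{S:theorem1}; note that the paper's Property~\ref{S:prop4} additionally verifies the kernel--kernel block of the directional derivative vanishes via a positivity-plus-trace argument, a point your shortcut absorbs by assuming the SLDs exist. Your lower-bound derivation (scalar Cauchy--Schwarz followed by the choice $\vb*{v}=\mathbf{J}^{+}\vb*{w}$) reaches the same inequality the paper gets by taking $\varepsilon_{\mathrm{s}}\to0$, $\varepsilon_{\mathrm{n}}\to\infty$ in the perturbed bound, just more directly. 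What you gain is economy; what the paper's perturbation machinery buys is a single framework that also covers the biased case (the fully generalized matrix inequality Eq.~\eqref{S:FGQCRM}).

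On achievability, your sketch identifies the right construction (locally unbiased estimator built from the SLD in the direction $\mathbf{J}^{+}\vb*{w}$, which agrees up to normalization with the paper's $\hat{L}_{\xi_1}$ after its block-diagonalizing reparametrization), but you explicitly defer the verification to ``the remainder of the SM,'' which is not a proof. The nontrivial content the paper supplies and you elide is twofold: Property~\ref{S:prop1}, which guarantees that the block-diagonalizing change of variables is well defined even when the nuisance block $\mathbf{J}^{[\mathrm{n,n}]}$ is itself singular (this is exactly where the hypothesis $\vb*{w}\in\mathrm{supp}(\mathbf{J})$ enters the construction, not just the bound), and Property~\ref{S:prop2}, which confirms that the resulting single-parameter quantum Fisher information equals $([\mathbf{J}^{+}]_{11})^{-1}$ so that the constructed estimator actually saturates $\vb*{w}^{\mathrm{T}}\mathbf{J}^{+}\vb*{w}$. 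Your eigenbasis reduction to an $s$-dimensional invertible subproblem can be made to work, but you would still need to prove these two facts (or their analogues) rather than assert them.
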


For simplicity, throughout the remainder of this manuscript, we assume that $\theta_{1}=\vb*{w}_{1}^{\mathrm{T}}\vb*{\theta}$ is the parameter of interest where $\vb*{w}_{1}:=(1,0,0,\cdots,0)^{\mathrm{T}}$. In other words, we treat the remaining parameters $\theta_{2},\theta_{3},\cdots,\theta_{M}$ as nuisance parameters~\cite{est-Suzuki_2020}. It is worth emphasizing, however, that our analysis can be generalized to the estimation of an arbitrary parameter $\phi=\phi(\vb*{\theta})$.
Therefore, instead of Lemma~\ref{S:lemma1}, we prove the following simplified lemma:
\begin{lemma}\label{S:lemma2}
There exists a locally unbiased estimator of $\theta_1$ at $\vb*{\theta} = \vb*{\theta}_{0}$ with finite estimation error if and only if $\vb*{w}_1 :=(1,0,0,\cdots,0)^{\mathrm{T}} \in \mathrm{supp}\left(\left.\mathbf{J}\right|_{\vb*{\theta} = \vb*{\theta}_0}\right)$.
In other words
\begin{align}
    \left.(\mathbf{J}^{+}\mathbf{J})\right|_{\vb*{\theta} = \vb*{\theta}_0}\vb*{w}_1  = \vb*{w}_1. \label{S:NScondit1_revised_w1}
\end{align}
Furthermore, when Eq.~\eqref{S:NScondit1_revised_w1} is satisfied, the achievable lower bound on the estimation error of $\theta_1$ is given by generalized quantum Cramér-Rao bound:
\begin{align}
    \Delta^{2}\theta_1 = \big<(\tilde{\theta}_{1}-\theta_{1})^{2}\big>=\vb*{w}^{\mathrm{T}}_{{1}}\mathbf{C}\vb*{w}_{{1}} \geq \left.([\mathbf{J}^{+}]_{11})\right|_{\vb*{\theta} = \vb*{\theta}_0}. \label{S:lemmaeq_w1}
\end{align}
Here, $\vb*{\theta}_0=(\theta_{1,0},\theta_{2,0},\cdots,\theta_{M,0})^{\mathrm{T}}$ is a fixed point, and the locally unbiased estimator of $\theta_1$ at $\vb*{\theta} = \vb*{\theta}_0$ is an estimator $\tilde{\theta}_1$ that satisfies
\begin{align}
    \big<\tilde{\theta}_{1}\big> =\theta_{1},~\partial_{i}\big<\tilde{\theta}_{1}\big>=\delta_{i1} \quad (i = 1,\ldots,M) \label{S:locallyunbiased_1}
\end{align}
at $\vb*{\theta} = \vb*{\theta}_0$.
\end{lemma}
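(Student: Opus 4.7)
I would split the argument into three pieces: (i) the lower bound in Eq.~\eqref{S:lemmaeq_w1}, (ii) the necessity of the support condition~\eqref{S:NScondit1_revised_w1}, and (iii) the sufficiency together with achievability. The overarching tool is the fully generalized QCRB and its perturbed variant derived in the previous section, $\mathbf{\Sigma}\succeq \mathbf{H}^{\mathrm{T}}\mathbf{J}^{-1}_{\vb*{\varepsilon}}\mathbf{H}+\mathbf{B}$, which holds for every admissible $\vb*{\varepsilon}$. Under local unbiasedness~\eqref{S:locallyunbiased_1} the Jacobian satisfies $\mathbf{H}\vb*{w}_1=\vb*{w}_1$ and the bias term contributes $\vb*{w}_1^{\mathrm{T}}\mathbf{B}\vb*{w}_1=0$, so sandwiching the matrix inequality with $\vb*{w}_1$ immediately gives $\Delta^2\theta_1\geq [\mathbf{J}^{+}]_{11}|_{\vb*{\theta}_0}$ whenever a locally unbiased estimator exists, settling piece (i).

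\textbf{Necessity.} For (ii), I would argue by contradiction: assume a locally unbiased estimator with finite variance exists while $\vb*{w}_1\notin\mathrm{supp}(\mathbf{J}|_{\vb*{\theta}_0})$, so the null-space component $\vb*{w}_1^{\perp}$ is nonzero. Using the spectral decomposition of $\mathbf{J}$ from Eq.~\eqref{S:diagJ} and the explicit form of $\mathbf{J}^{-1}_{\vb*{\varepsilon}}$, the perturbed QCRB yields
\begin{align}
\Delta^{2}\theta_{1}\;\geq\;\vb*{w}_{1}^{\mathrm{T}}\mathbf{J}^{-1}_{\vb*{\varepsilon}}\vb*{w}_{1}\;=\;\sum_{i=1}^{s}\frac{(\vb*{\lambda}_{i}^{\mathrm{T}}\vb*{w}_{1})^{2}}{\lambda_{i}+\varepsilon_{\mathrm{s}}}+\frac{\|\vb*{w}_{1}^{\perp}\|^{2}}{\varepsilon_{\mathrm{n}}},
\end{align}
for every $\varepsilon_{\mathrm{s}},\varepsilon_{\mathrm{n}}>0$. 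Sending $\varepsilon_{\mathrm{n}}\to 0^{+}$ makes the right-hand side diverge, contradicting finiteness of $\Delta^{2}\theta_{1}$; hence $\vb*{w}_{1}^{\perp}=0$, which is equivalent to the condition~\eqref{S:NScondit1_revised_w1}.

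\textbf{Sufficiency and achievability.} For (iii), the plan is to change variables to the eigenbasis of $\mathbf{J}|_{\vb*{\theta}_0}$ via $\tilde\theta_{i}:=\vb*{\lambda}_{i}^{\mathrm{T}}\vb*{\theta}$. In this frame $\mathbf{J}$ becomes $\mathrm{diag}(\lambda_{1},\dots,\lambda_{s},0,\dots,0)$, and by the support hypothesis $\vb*{w}_{1}=\sum_{i\leq s}c_{i}\vb*{\lambda}_{i}$ with $c_{i}=\vb*{\lambda}_{i}^{\mathrm{T}}\vb*{w}_{1}$ and no null component. Restricting to the $s$-dimensional support, the reduced QFIM is invertible, so the conventional multi-parameter achievability theorem applies: in the large-sample regime, the maximum-likelihood estimator paired with the POVM built from the SLDs along the support directions is locally unbiased on the support and attains covariance $\mathrm{diag}(\lambda_{1}^{-1},\dots,\lambda_{s}^{-1})$. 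The resulting variance of $\tilde\theta_{1}=\vb*{w}_{1}^{\mathrm{T}}\vb*{\theta}$ tends to $\sum_{i\leq s}c_{i}^{2}/\lambda_{i}=\vb*{w}_{1}^{\mathrm{T}}\mathbf{J}^{+}\vb*{w}_{1}=[\mathbf{J}^{+}]_{11}|_{\vb*{\theta}_0}$, matching the lower bound from (i).

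\textbf{Main obstacle.} The delicate step is lifting this construction back to the original $M$-dimensional parameter space while preserving the \emph{full} local unbiasedness condition~\eqref{S:locallyunbiased_1}, i.e., ensuring $\partial_{i}\big<\tilde\theta_{1}\big>=\delta_{i1}$ along every original coordinate direction, not only along the eigendirections of $\mathbf{J}$. The key observation is that $\mathbf{F}\preceq \mathbf{J}$ forces the classical Fisher information to vanish along $\mathrm{null}(\mathbf{J})$, which in turn implies that $p_{\vb*{\theta}}(\vb*{x})$ has no first-order sensitivity to any null-direction parameter; consequently, the null-direction components of $\nabla\big<\tilde\theta_{1}\big>$ are automatically zero and the full LUC reduces exactly to the support-direction LUC, which is enforced by the reduced construction above. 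Making this lifting argument rigorous, and stitching the asymptotic MLE achievability back to the original parameterization, constitutes the main technical burden of the proof.
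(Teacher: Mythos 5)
Your lower bound (i) and your necessity argument (ii) are correct and essentially identical to the paper's: both follow from the perturbed/fully generalized QCRB, with the divergence as $\varepsilon_{\mathrm{n}}\to 0$ ruling out any null-space component of $\vb*{w}_1$. The gap is in (iii). You invoke a ``conventional multi-parameter achievability theorem'' asserting that, once the reduced QFIM on $\mathrm{supp}(\mathbf{J})$ is invertible, an MLE-plus-SLD strategy attains the full covariance matrix $\mathrm{diag}(\lambda_1^{-1},\ldots,\lambda_s^{-1})$. No such theorem exists: the matrix SLD bound $\mathbf{C}\succeq\mathbf{J}^{-1}$ is generically not saturable, even asymptotically, because the SLDs for different support directions need not commute (the attainable bound is the Holevo bound, which can be strictly larger). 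Since your target $\theta_1=\sum_{i\le s}c_i\,\vb*{\lambda}_i^{\mathrm{T}}\vb*{\theta}$ is a nontrivial linear combination of the support-eigendirection parameters, you cannot conclude that its variance reaches $\sum_{i\le s}c_i^2/\lambda_i$ from simultaneous optimality in all eigendirections. (A secondary issue: the MLE is only asymptotically unbiased, whereas the lemma demands an exactly locally unbiased estimator at $\vb*{\theta}_0$.)

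What is actually needed — and what the paper does — is a \emph{scalar} achievability argument tailored to the single direction $\vb*{w}_1$. The paper keeps $\xi_1=\theta_1$ and re-parametrizes only the nuisance parameters, $\xi_{i+1}=\theta_{i+1}+\sum_j[\mathbf{J}^{[\mathrm{n,n}]+}]_{ij}[\mathbf{J}^{[\mathrm{n},1]}]_j(\theta_1-\theta_{1,0})$, using the support condition (via Property~\ref{S:prop1}) to guarantee $[\mathbf{J}(\vb*{\xi})]_{i1}=0$ for $i\ge 2$ and $([\mathbf{J}(\vb*{\xi})]_{11})^{-1}=[\mathbf{J}(\vb*{\theta})^{+}]_{11}$. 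It then measures the single SLD $\hat{L}_{\xi_1}$ and uses the explicit estimator $\tilde{\theta}_1=\theta_{1,0}+[\mathbf{J}(\vb*{\xi})^{+}]_{11}\,\partial_{\xi_1}p_{\vb*{\xi}}(\vb*{x})/p_{\vb*{\xi}}(\vb*{x})$, which is exactly locally unbiased (the cross-terms $[\mathbf{F}]_{1j}$ vanish because $\mathbf{F}\preceq\mathbf{J}$ and $[\mathbf{F}]_{11}=[\mathbf{J}]_{11}$ force the first row of $\mathbf{J}-\mathbf{F}$ to be zero) and attains $[\mathbf{J}^{+}]_{11}$ with equality. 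Your observation that the classical Fisher information vanishes on $\mathrm{null}(\mathbf{J})$, so the null-direction LUC constraints are automatic, is a sound ingredient and could serve in a repaired proof; but the core single-direction block-diagonalization step is missing from your proposal and must be supplied before the sufficiency direction goes through.
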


In the following proof, we present the optimal estimator and the POVM that achieve equality in Eq. \eqref{S:lemmaeq_w1}. Specifically, the optimal estimator is provided in Eq. \eqref{S:optimalest}, and the optimal POVM is constructed from the eigenbasis of the SLD operator defined in Eq. \eqref{S:optimalSLD}.

\subsection{Proof of \textit{if} direction}
We note that when $\mathbf{J}$ is invertible, the achievability of the quantum Cramér-Rao bound has already been studied particularly in Ref. \cite{est-Suzuki_2020}. In this work, building on the proof and results of Ref. \cite{est-Suzuki_2020}, we extend the analysis to establish the achievability of the generalized quantum Cramér-Rao bound in Eq. \eqref{S:lemmaeq}, even when $\mathbf{J}$ is not invertible.

\begin{proof}
Let us assume  $\vb*{w}_1 \in \mathrm{supp}\left(\left.\mathbf{J}\right|_{\vb*{\theta} = \vb*{\theta}_0}\right)$. We prove the if direction in three steps. In \textit{Step 1}, we provide a brief explanation of the re-parametrization process $\vb*{\theta}\mapsto \vb*{\xi}$ and the corresponding QFIM. In \textit{Step 2}, we consider a new parameter set $\vb*{\xi}$ that block-diagonalizes $\mathbf{J}(\vb*{\xi})$. In \textit{Step 3}, leveraging the results in \textit{Step 2}, we construct the estimator and the optimal measurement that achieve the equality in the generalized quantum Cramér-Rao bound in Eq. \eqref{S:lemmaeq}.

\textit{Step 1. Re-parametrization.---}
We introduce the following change of variables $\vb*{\theta}=(\theta_{1},\vb*{\theta}_{\mathrm{n}})\mapsto \vb*{\xi}=(\xi_{1},\vb*{\xi}_{\mathrm{n}})$ (where $\vb*{\theta}_{\mathrm{n}}=(\theta_{2},\theta_{3},\cdots,\theta_{M})^{\mathrm{T}}$, and $\vb*{\xi}_{\mathrm{n}}=(\xi_{2},\xi_{3},\cdots,\xi_{M})^{\mathrm{T}}$) such that
\begin{align}
    \theta_{1}=\xi_{1},~\theta_{2}=\theta_{2}(\xi_{1},\xi_{2}),~\cdots,~ \theta_{M}=\theta_{M}(\xi_{1},\xi_{2},\cdots,\xi_{M}). \label{S:reparameter}
\end{align}
We note that QFIM typically depends on a chosen parameter set. In the previous section, however, we did not explicitly denote the QFIM $\mathbf{J}$ as a function of $\vb*{\theta}$, as the discussion was only confined to the specific parameter set $\vb*{\theta}$. Consequently, there was no ambiguity in the representation. In this section, however, we consider two distinct parameter sets, $\vb*{\theta}$ and $\vb*{\xi}$. To avoid confusion, we explicitly denote the QFIMs corresponding to these parameter sets as $\mathbf{J}(\vb*{\theta})$ and $\mathbf{J}(\vb*{\xi})$, respectively. The relationship between these two QFIMs is given by
\begin{align}
    \mathbf{J}(\vb*{\xi}) =\mathbf{T} \mathbf{J}(\vb*{\theta}) \mathbf{T}^{\mathrm{T}}, \label{S:newpara}
\end{align}
where $\mathbf{T}$ is the transformation matrix with its components are given by $[\mathbf{T}]_{ij}:=\pdv{\theta_{j}}{\xi_{i}}$. We observe that $\mathbf{T}$ is an upper triangular matrix since $\theta_{i}$ does not depend on $\xi_{j>i}$ for all $1 \leq i \leq M$. (See Eq. \eqref{S:reparameter}.) Furthermore, we restrict our analysis to re-parameterizations that maintain a \textit{one-to-one correspondence} between $\vb*{\theta}$ and $\vb*{\xi}$. This ensures that $\mathbf{T}$ is invertible, preserving the integrity of the transformation. 

For the further proof, we represent $\mathbf{J}(\vb*{\theta})$ as 
\begin{align}
    \mathbf{J}(\vb*{\theta})= \begin{pmatrix}
        [\mathbf{J}(\vb*{\theta})]_{11} & \mathbf{J}^{[1,\mathrm{n}]}(\vb*{\theta}) \\
        \mathbf{J}^{[\mathrm{n},1]}(\vb*{\theta}) & \mathbf{J}^{[\mathrm{n,n}]}(\vb*{\theta})
    \end{pmatrix},
\end{align}
where $[\mathbf{J}(\vb*{\theta})]_{11}$ is $(1,1)$-component of $\mathbf{J}(\vb*{\theta})$, $\mathbf{J}^{[\mathrm{n},1]}(\vb*{\theta}) $ is $(M-1) \times 1$ matrix corresponding to $[\mathbf{J}(\vb*{\theta})]_{i1}$ for $i=2,3,\cdots,M$, $\mathbf{J}^{[1,\mathrm{n}]}(\vb*{\theta}) = \mathbf{J}^{[\mathrm{n},1]}(\vb*{\theta})^{\mathrm{T}}$, and $\mathbf{J}^{[\mathrm{n,n}]}_{\varepsilon}(\vb*{\theta})$ is $(M-1) \times (M-1)$ matrix associated with $ [\mathbf{J}_{\varepsilon}(\vb*{\theta})]_{ij}$ for $i,j=2,3,\cdots,M$.
For this block representation, we have the following property:
\begin{property}\label{S:prop1}
    If $\vb*{w}_1 \in \mathrm{supp}\left(\mathbf{J}(\vb*{\theta}_0)\right)$, then $\mathbf{J}^{[\mathrm{n},1]}(\vb*{\theta}_0) \in \mathrm{supp}\left(\mathbf{J}^{[\mathrm{n,n}]}(\vb*{\theta}_0) \right)$.
\end{property}
We prove Property~\ref{S:prop1} at the end of this subsection.

\textit{Step 2. Block-diagonalization.---}
Next, we find a parameter set $\vb*{\xi}$ that achieves a block-diagonalization of $\mathbf{J}(\vb*{\xi})$, separating its $(1,1)$-component from the rest of the matrix blocks as $[\mathbf{J}(\vb*{\xi})]_{1i}=[\mathbf{J}(\vb*{\xi})]_{i1}=0$ for all $2\leq i \leq M$.
Especially, we focus on block-diagonalizing $\mathbf{J}(\vb*{\xi})$ at a fixed point $\vb*{\theta} = \vb*{\theta}_{0}$.
{Let us define a new parameter set $\vb*{\xi}$ as
\begin{align}
    \xi_{1}=\theta_{1},\quad \xi_{i}=\theta_{i}+\sum_{j=1}^{M-1}[\mathbf{J}^{[\mathrm{n,n}]}(\vb*{\theta}_{0})^{+}]_{i-1,j}[\mathbf{J}^{[\mathrm{n},1]}(\vb*{\theta_{0}})]_{j}(\theta_{1}-\theta_{1,0}) \quad (i= 2,3,\cdots,M). \label{S:newparameterlocal}
\end{align}}

For this parameter set, we have
\begin{align}
    \pdv{\vb*{\theta}^{[n]}}{\xi_{1}} = -\mathbf{J}^{[\mathrm{n,n}]}(\vb*{\theta}_0)^{+}\mathbf{J}^{[\mathrm{n},1]}(\vb*{\theta}_0),
\end{align}
where we have defined $\pdv{\vb*{\theta}^{[n]}}{\xi_{1}} = (\pdv{\theta_2}{\xi_{1}}, \ldots, \pdv{\theta_M}{\xi_{1}})$.
Under the assumption of $\vb*{w}_1 \in \mathrm{supp}\left(\mathbf{J}(\vb*{\theta}_0)\right)$, from Property~\ref{S:prop1}, we obtain
\begin{align}
    \mathbf{J}^{[\mathrm{n},1]}(\vb*{\theta}_0)+\mathbf{J}^{[\mathrm{n,n}]}(\vb*{\theta}_0)\pdv{\vb*{\theta}^{[n]}}{\xi_{1}}=0. \label{S:PLdifferetial}
\end{align}
Therefore, $(i,1)$-th element for $i = 2, \ldots, M$ of the QFIM $\mathbf{J}(\vb*{\xi})$ satisfies
\begin{align}
    [\mathbf{J}(\vb*{\xi})]_{i1}
    =\sum_{\alpha,\beta =1}^{M}\pdv{\theta_{\alpha}}{\xi_{i}}[\mathbf{J}(\vb*{\theta})]_{\alpha\beta}\pdv{\theta_{\beta}}{\xi_{1}}
    =\sum_{\alpha=i}^{M}\left(\pdv{\theta_{\alpha}}{\xi_{i}}\right)\left([\mathbf{J}(\vb*{\theta})]_{\alpha1}+\sum_{\beta=2}^{M}[\mathbf{J}(\vb*{\theta})]_{\alpha\beta}\pdv{\theta_{\beta}}{\xi_{1}}\right)=0 \label{S:Jxi1i}
\end{align}
at $\vb*{\theta} =  \vb*{\theta}_0$.
In the third term, the summation over $\alpha$ starts from $i$ because $\pdv{\theta_{\alpha}}{\xi_{i}}=0$ for $\alpha < i$.
Therefore, the new parameter set defined in Eq.~\eqref{S:newparameterlocal} block-diagonalizes $\mathbf{J}(\vb*{\xi})$ at $\vb*{\theta} =  \vb*{\theta}_0$.
We note that the QFIM $\mathbf{J}(\vb*{\xi})$ for this parameter $\vb*{\xi}$ satisfies the following property:
\begin{property}\label{S:prop2}
    For a set of parameter $\vb*{\xi}$ satisfying Eq.~\eqref{S:newparameterlocal}, $([\mathbf{J}(\vb*{\xi})]_{11})^{-1} = [\mathbf{J}(\vb*{\xi})^{+}]_{11}=[\mathbf{J}(\vb*{\theta})^{+}]_{11} > 0$ at $\vb*{\theta} =  \vb*{\theta}_0$.
\end{property}
We prove Property~\ref{S:prop2} at the end of this subsection.

\textit{Step 3. Estimation error.---}In the following, we set $\vb*{\theta} = \vb*{\theta}_0$. We now analyze the estimation error for $\theta_{1}(=\xi_{1})$. For this purpose, we employ the POVM set $\{\hat{\Pi}^{\mathrm{SLD}}_{\vb*{x}}\}_{\vb*{x}}$, which is the eigenbasis of SLD operator $\hat{L}_{\xi_{1}}$ that satisfies $\pdv{\hat{\rho}_{\vb*{\xi}}}{\xi_{1}}=\frac{1}{2}\{\hat{L}_{\xi_{1}},\hat{\rho}_{\vb*{\xi}}\}$. As an estimator for $\theta_1 = \xi_{1}$, we consider 
\begin{align}
    \tilde{\theta}_{1} = \tilde{\xi}_{1}=\left.\left(\xi_{1}+[\mathbf{J}(\vb*{\xi})^{+}]_{11}\pdv{p_{\vb*{\xi}}}{\xi_{1}}(\vb*{x})\frac{1}{p_{\vb*{\xi}}(\vb*{x})}\right)\right|_{\vb*{\theta}=\vb*{\theta}_0}, \label{S:optimalest}
\end{align}
where $p_{\vb*{\xi}}(\vb*{x})=\mathrm{Tr}[\hat{\Pi}^{\mathrm{SLD}}_{\vb*{x}}\hat{\rho}_{\vb*{\xi}}]$.
It is easy to see that this estimator is locally unbiased at $\vb*{\theta} = \vb*{\theta}_0$.

The corresponding estimation error is then
\begin{align}
    \big<(\tilde{\theta}_{1}-\theta_{1})^{2}\big>=\big<(\tilde{\xi}_{1}-\xi_{1})^{2}\big> = \left([\mathbf{J}(\vb*{\xi})^{+}]_{11}\right)^{2}\int \frac{\left(\pdv{}{\xi_{1}}p_{\vb*{\xi}}(\vb*{x})\right)^{2}}{p_{\vb*{\xi}}(\vb*{x})}d\vb*{x}.
\end{align}
Here, the integral $\int \frac{\left(\pdv{}{\xi_{1}}p_{\vb*{\xi}}(\vb*{x})\right)^{2}}{p_{\vb*{\xi}}(\vb*{x})}d\vb*{x}$ represents the Fisher information for the parameter $\xi_{1}$ corresponding to the POVM $\{\hat{\Pi}^{\mathrm{SLD}}_{x}\}_{x}$. Since we have chosen the POVM set $\{\hat{\Pi}^{\mathrm{SLD}}_{x}\}_{x}$ as the eigenbasis of the SLD operator 
\begin{align}
    \hat{L}_{\xi_{1}}=\sum_{i=1}^{M}\pdv{\theta_{i}}{\xi_{1}}\hat{L}_{\theta_i}, \label{S:optimalSLD}
\end{align}
the Fisher information associated with $\xi_{1}$ is equal to the quantum Fisher information for $\xi_{1}$, which is $[\mathbf{J}(\vb*{\xi})]_{11}$.
Therefore, from Property~\ref{S:prop2}, we obtain
\begin{align}
    \big<(\tilde{\theta}_{1}-\theta_{1})^{2}\big> = [\mathbf{J}(\vb*{\xi})^{+}]_{11} = [\mathbf{J}(\vb*{\theta})^{+}]_{11}.
\end{align}
\end{proof}

\textit{proof of Property \ref{S:prop1}}.---
In this proof, we set $\vb*{\theta} = \vb*{\theta}_0$.
Let us assume $\vb*{w}_1 \in \mathrm{supp}\left(\mathbf{J}\right)$.
Then, there exists a vector $\vb*{x}^T = (x_1, \ldots, x_n)^T\in\mathbb{R}^n$ satisfying
$\mathbf{J}\vb*{x} = \vb*{w}_1$.
When we define $\vb*{x}^{[n,1]T} = (x_2, \ldots, x_n)^T\in\mathbb{R}^{n-1}$, we obtain
\begin{align}
        x_1\mathbf{J}^{[\mathrm{n},1]} + \mathbf{J}^{[\mathrm{n,n}]}\vb*{x}^{[n,1]} = 0.
\end{align}
Since $\mathbf{J}$ is a positive semidefinite matrix and $\vb*{x}$ is not in the kernal of $\mathbf{J}$ from the definition, we have $x_1 = \vb*{x}^T\vb*{w}_1 = \vb*{x}^T\mathbf{J}\vb*{x} > 0$.
Therefore,
\begin{align}
         \mathbf{J}^{[\mathrm{n,n}]}(\vb*{\theta})\left(\frac{-1}{x_1}\vb*{x}^{[n,1]}\right) = \mathbf{J}^{[\mathrm{n},1]}(\vb*{\theta}).
\end{align}
This means that $\mathbf{J}^{[\mathrm{n},1]} \in \mathrm{supp}\left(\mathbf{J}^{[\mathrm{n,n}]} \right)$ at $\vb*{\theta} = \vb*{\theta}_0$.

\qed

{\textit{proof of Property \ref{S:prop2}}.--- In this proof, we set $\theta=\theta_0$.
Write the QFIM in the original parametrization as
\begin{align}
\mathbf{J}(\vb*{\theta})=
\begin{pmatrix}
a & \vb*{b}^\mathrm{T}\\
\vb*{b} & \mathbf{C}
\end{pmatrix},
\end{align}
where $a=[\mathbf{J}(\vb*{\theta})]_{11}$, $\vb*{b}=\mathbf{J}^{[n,1]}(\vb*{\theta})$, and
$\mathbf{C}=\mathbf{J}^{[n,n]}(\vb*{\theta})$. From Property 1, we have
$\vb*{b}\in \operatorname{supp}(\mathbf{C})$. Let
\begin{align}
\vb*{q}:=\mathbf{C}^{+}\vb*{b} .
\end{align}
The reparametrization introduced above corresponds to
\begin{align}
\mathbf{T}=
\begin{pmatrix}
1 & -\vb*{q}^\mathrm{T}\\
0 & \mathbf{I}
\end{pmatrix},
\end{align}
so that
\begin{align}
\mathbf{J}(\vb*{\xi})=\mathbf{T}\mathbf{J}(\vb*{\theta})\mathbf{T}^\mathrm{T}
=
\begin{pmatrix}
s & 0\\
0 & \mathbf{C}
\end{pmatrix},
\qquad
s:=a-\vb*{b}^T\mathbf{C}^{+}\vb*{b} .
\end{align}
Since $\vb*{w}_1\in\operatorname{supp}(\mathbf{J}(\vb*{\theta}))$ and $\mathbf{T}\vb*{w}_1=\vb*{w}_1$,
we have $\vb*{w}_1\in\operatorname{supp}(\mathbf{J}(\vb*{\xi}))$, which implies
$\vb*{s}>0$. Hence
\begin{align}
[\mathbf{J}(\vb*{\xi})^+]_{11}=s^{-1}.
\end{align}

It remains to show that $[\mathbf{J}(\vb*{\theta})^+]_{11}=s^{-1}$.
Since $\vb*{w}_1\in\operatorname{supp}(\mathbf{J}(\vb*{\theta}))$, the equation
$\mathbf{J}(\vb*{\theta})\vb*{y}=\vb*{w}_1$ has a solution. For any solution
$\vb*{y}=(y_1,\vb*{y}_n)^\mathrm{T}$, the block equations are
\begin{align}
ay_1+\vb*{b}^T\vb*{y}_n=1,\qquad \vb*{b}y_1+\mathbf{C}\vb*{y}_n=0 .
\end{align}
Using $\vb*{b}=\mathbf{C}\vb*{q}$, the second equation gives $\mathbf{C}\vb*{y}_n=-\mathbf{C}\vb*{q}y_1$.
Therefore,
\begin{align}
\vb*{b}^\mathrm{T}\vb*{y}_n=\vb*{q}^\mathrm{T}\mathbf{C}\vb*{y}_n=-\vb*{q}^\mathrm{T}\mathbf{C}\vb*{q}y_1=-\vb*{b}^\mathrm{T}\mathbf{C}^+\vb*{b}\,y_1 .
\end{align}
Substituting this into the first equation yields
\begin{align}
(a-\vb*{b}^\mathrm{T}\mathbf{C}^{+}\vb*{b})y_1=sy_1=1 .
\end{align}
Thus every solution of $\mathbf{J}(\vb*{\theta})\vb*{y}=\vb*{w}_1$ has first component
$y_1=s^{-1}$. In particular, the minimum-norm solution
$\mathbf{J}(\vb*{\theta})^{+}\vb*{w}_1$ also has first component $s^{-1}$, and hence
\begin{align}
[\mathbf{J}(\vb*{\theta})^+]_{11}
=\vb*{w}_1^\mathrm{T}\mathbf{J}(\vb*{\theta})^+\vb*{w}_1
=s^{-1}
=[\mathbf{J}(\vb*{\xi})^+]_{11}.
\end{align}
This proves Property 2.}

\qed

\subsection{proof of \textit{only if} direction}
In this subsection, we demonstrate that there exists a locally unbiased estimator of $\theta_1$ at $\vb*{\theta} = \vb*{\theta}_{0}$ with finite estimation error only if $\vb*{w}_1 :=(1,0,0,\cdots,0)^{\mathrm{T}} \in \mathrm{supp}\left(\left.\mathbf{J}\right|_{\vb*{\theta} = \vb*{\theta}_0}\right)$.
\begin{proof}
In this proof, we set $\vb*{\theta} = \vb*{\theta}_0$.
Here we show the proof by contradiction. We begin by assuming that there exists a locally unbiased estimator of $\theta_{1}$ with finite estimation error, but $\vb*{w}_{1} \notin \mathrm{supp}\left(\mathbf{J}\right)$. Since the estimator satisfies Eq.~\eqref{S:locallyunbiased_1}, we have $[\mathbf{H}]_{i1}=\delta_{1i}$ and $[\mathbf{B}]_{11}=0$. Under these conditions, the perturbed quantum Cramér-Rao matrix inequality, as defined in Eq. \eqref{S:epsilonQCRB}, yields the following inequality for the estimation error:
\begin{align}
    \big<(\tilde{\theta}_{1}-\theta_{1})^{2}\big> = \vb*{w}_{1}^{\mathrm{T}}\mathbf{C}\vb*{w}_{1} \geq \vb*{w}_{1}^{\mathrm{T}}\mathbf{J}^{-1}_{\vb*{\varepsilon}}\vb*{w}_{1}=\sum_{i=1}^{s} \frac{(\vb*{w}_{1}^{\mathrm{T}} \vb*{\lambda}_{i})^{2}}{\lambda_{i}+\varepsilon_{\mathrm{s}}}+\sum_{j=s+1}^{M} \frac{(\vb*{w}_{1}^{\mathrm{T}} \vb*{\lambda}_{j})^{2}}{\varepsilon_{\mathrm{n}}},~~\forall ~ \varepsilon_{\mathrm{s}},\varepsilon_{\mathrm{n}}>0. \label{S:onlyifepsi}
\end{align}
Here, we emphasize that since $\vb*{w}_{1}\notin \mathrm{supp}(\mathbf{J})$, there exists at least one component $(\vb*{w}_{1}^{\mathrm{T}} \vb*{\lambda}_{j})$ for $(s+1)\leq j \leq M$ being non-zero. This implies that in the limit of $\varepsilon_{\mathrm{s}},\varepsilon_{\mathrm{n}} \to 0$, the second summation term in Eq. \eqref{S:onlyifepsi} diverges due to the denominator $\varepsilon_{\mathrm{n}}$ approaching zero. Consequently, the lower bound for the estimation error diverges, which contradicts the initial premise that a locally unbiased estimator of $\theta_{1}$ exists with finite estimation error. Thus, the assumption that $\vb*{w}_{1}\notin\mathrm{supp}(\mathbf{J})$ leads to a contradiction. Therefore, we conclude that a locally unbiased estimator of $\theta_{1}$ with finite estimation error can exist only if $\vb*{w}_{1}\in\mathrm{supp}(\mathbf{J})$.    
\end{proof}

\section{Proof of Theorem \ref{S:theorem1}}
In this section, we provide a proof of Theorem~\ref{S:theorem1} in the main text, which is stated as follows:
\setcounter{theorem}{0}
\begin{theorem}\label{S:theorem1}
    There exists a locally unbiased estimator of $\theta_{1}$ at $\vb*{\theta} = \vb*{\theta}_0$ with finite estimation error if and only if $\left.\pdv{\hat{\rho}_{\vb*{\theta}}}{\theta_{1}}\right|_{\vb*{\theta} = \vb*{\theta}_0}$ cannot be expressed with a linear combinations of $\left.\pdv{\hat{\rho}_{\vb*{\theta}}}{\theta_{i}}\right|_{\vb*{\theta} = \vb*{\theta}_0}$ for $i \neq 1$, i.e.,
    \begin{align}
    \left.\pdv{\hat{\rho}_{\vb*{\theta}}}{\theta_{1}}\right|_{\vb*{\theta} = \vb*{\theta}_0} \neq \sum_{i = 2}^{M} c_{i} \left.\pdv{\hat{\rho}_{\vb*{\theta}}}{\theta_{i}}\right|_{\vb*{\theta} = \vb*{\theta}_0}
    \end{align}
    for all $c_i\in\mathbb{C}$.
\end{theorem}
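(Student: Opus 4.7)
My plan is to derive Theorem~\ref{S:theorem1} by reducing it to Lemma~\ref{S:lemma2} via the well-known identification of the kernel of the QFIM with linear dependencies among the state derivatives. Concretely, Lemma~\ref{S:lemma2} already tells me that a locally unbiased estimator of $\theta_1$ with finite error exists iff $\vb*{w}_1 := (1,0,\ldots,0)^{\mathrm{T}} \in \mathrm{supp}(\mathbf{J})$; since $\mathbf{J}$ is positive semidefinite, $\mathrm{supp}(\mathbf{J}) = \mathrm{ker}(\mathbf{J})^{\perp}$, so this is equivalent to the statement that no $\vb*{v} \in \mathrm{ker}(\mathbf{J})$ has a nonzero first component. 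All I then need is to show that $\vb*{v}\in\mathrm{ker}(\mathbf{J})$ iff $\sum_{i=1}^{M} v_i\, \partial_{\theta_i} \hat{\rho}_{\vb*{\theta}} = 0$ at $\vb*{\theta}_0$; the theorem then follows by rewriting any such $\vb*{v}$ with $v_1 \neq 0$ as $(1,-c_2,\ldots,-c_M)$ after rescaling, and by noting that because every $\partial_{\theta_i}\hat{\rho}$ is Hermitian, the coefficients $c_i$ may be taken real (the imaginary parts must independently satisfy $\sum_i \mathrm{Im}(c_i)\, \partial_{\theta_i}\hat{\rho}=0$ and hence contribute nothing new).

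For the core identity, I would set $\hat{L} := \sum_i v_i\hat{L}_i$ at $\vb*{\theta}_0$ and compute
\begin{align}
\vb*{v}^{\mathrm{T}}\mathbf{J}\vb*{v}
= \sum_{i,j} v_i v_j\, \mathrm{Tr}\big[\hat{\rho}\{\hat{L}_i,\hat{L}_j\}\big]
= 2\,\mathrm{Tr}\big[\hat{\rho}\hat{L}^{2}\big]
= 2\big\|\hat{L}\sqrt{\hat{\rho}}\big\|_{2}^{2}.
\end{align}
Because $\mathbf{J}$ is PSD, $\mathbf{J}\vb*{v}=0 \Leftrightarrow \vb*{v}^{\mathrm{T}}\mathbf{J}\vb*{v}=0 \Leftrightarrow \hat{L}\sqrt{\hat{\rho}}=0 \Leftrightarrow \hat{L}\hat{\rho}=0$; taking the Hermitian conjugate also yields $\hat{\rho}\hat{L}=0$, whence $\sum_i v_i\,\partial_{\theta_i}\hat{\rho} = \tfrac{1}{2}\{\hat{L},\hat{\rho}\} = 0$. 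Conversely, if $\sum_i v_i\,\partial_{\theta_i}\hat{\rho}=0$, the cyclic identity $\mathrm{Tr}\big[\hat{L}\{\hat{\rho},\hat{L}\}\big] = 2\,\mathrm{Tr}[\hat{\rho}\hat{L}^{2}]$ gives $\vb*{v}^{\mathrm{T}}\mathbf{J}\vb*{v} = 2\,\mathrm{Tr}\big[\hat{L}\sum_i v_i\,\partial_{\theta_i}\hat{\rho}\big] = 0$, completing the equivalence.

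The main obstacle is the non-full-rank case, where the SLD $\hat{L}_i$ is not uniquely fixed by $\partial_{\theta_i}\hat{\rho}=\tfrac{1}{2}\{\hat{L}_i,\hat{\rho}\}$ and only its projection onto $\mathrm{supp}(\hat{\rho})$ enters the QFIM; I would fix the canonical choice of $\hat{L}_i$ that is supported on $\mathrm{supp}(\hat{\rho})$ and verify that the step $\hat{L}\sqrt{\hat{\rho}}=0 \Rightarrow \hat{L}\hat{\rho}=0$ remains valid and that the converse direction does not require $\hat{L}$ to act on the kernel. A related minor point is that $\partial_{\theta_i}\hat{\rho}$ has no $\mathrm{ker}(\hat{\rho})$-$\mathrm{ker}(\hat{\rho})$ block for any valid smooth family of density operators, so the anticommutator equation faithfully encodes the state derivative. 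Chaining the equivalences --- unbiased estimator with finite error $\Leftrightarrow$ $\vb*{w}_1\in\mathrm{supp}(\mathbf{J})$ $\Leftrightarrow$ no $\vb*{v}\in\mathrm{ker}(\mathbf{J})$ with $v_1\neq 0$ $\Leftrightarrow$ no real $(c_2,\ldots,c_M)$ with $\partial_{\theta_1}\hat{\rho}=\sum_{i\geq 2} c_i\,\partial_{\theta_i}\hat{\rho}$ --- then yields Theorem~\ref{S:theorem1}.
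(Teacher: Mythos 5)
Your proposal is correct, and while it shares the paper's overall skeleton (reduce Theorem~\ref{S:theorem1} to Lemma~\ref{S:lemma2} and then show that $\vb*{w}_1\in\mathrm{supp}(\mathbf{J})$ is equivalent to linear independence of $\partial_{\theta_1}\hat{\rho}_{\vb*{\theta}}$ from the other derivatives), you execute the key technical step by a genuinely different and more unified route. The paper treats the two directions with separate machinery: for one direction it diagonalizes $\mathbf{J}$, defines the operators $\hat{\rho}_{\vb*{\lambda}_k}=\sum_i(\vb*{\lambda}_k^{\mathrm{T}}\vb*{w}_i)\partial_i\hat{\rho}_{\vb*{\theta}}$, and proves via the explicit eigenbasis formula $[\mathbf{J}]_{ij}=\sum_{\rho_\alpha+\rho_\beta\neq 0}2\Re(\cdots)/(\rho_\alpha+\rho_\beta)$ that these vanish for kernel eigenvectors (their Property~3/\ref{S:prop4}), then rearranges to exhibit the linear dependency; for the other direction it assumes both $\mathbf{J}^{+}\mathbf{J}\vb*{w}_1=\vb*{w}_1$ and a dependency, derives $\mathbf{J}\vb*{w}_1=\mathbf{J}\vb*{c}$, and reaches the contradiction $1=\vb*{w}_1^{\mathrm{T}}\mathbf{J}^{+}\mathbf{J}\vb*{w}_1=\vb*{w}_1^{\mathrm{T}}\vb*{c}=0$. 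You instead establish the single two-sided identification $\vb*{v}\in\ker(\mathbf{J})\Leftrightarrow\sum_i v_i\,\partial_{\theta_i}\hat{\rho}_{\vb*{\theta}}=0$ via the SLD norm identity $\vb*{v}^{\mathrm{T}}\mathbf{J}\vb*{v}=2\|\hat{L}\sqrt{\hat{\rho}}\|_2^2$, which delivers both directions at once and makes the role of $\mathrm{supp}(\mathbf{J})=\ker(\mathbf{J})^{\perp}$ transparent. The caveats you flag are exactly the right ones and are resolvable as you indicate: the gauge ambiguity of the SLD on $\ker(\hat{\rho})$ does not affect $\mathrm{Tr}[\hat{\rho}\{\hat{L}_i,\hat{L}_j\}]$ (since $\hat{\rho}\Delta=\Delta\hat{\rho}=0$ for any $\Delta$ supported on $\ker(\hat{\rho})$), and the vanishing of the $\ker(\hat{\rho})$--$\ker(\hat{\rho})$ block of $\partial_{\vb*{v}}\hat{\rho}$ follows from positivity of $\hat{\rho}_{\vb*{\theta}+\varepsilon\vb*{v}}$ to first order in $\varepsilon$ — this is the same positivity argument the paper deploys inside its proof of Property~\ref{S:prop4} (there phrased as PSD plus trace zero, since they only take $\varepsilon>0$), except that in your framing it is absorbed into the hypothesis that the SLDs exist at all. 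Your reduction of complex coefficients to real ones by Hermiticity of the derivatives is also sound and supplies a detail the paper's statement (which quantifies over $c_i\in\mathbb{C}$ but whose only-if proof works with $\mathbb{R}$) leaves implicit. What your approach buys is economy and a cleaner conceptual picture of the kernel of the QFIM; what the paper's buys is an explicit formula for the dependency coefficients $c_i$ in terms of the eigenvectors $\vb*{\lambda}_k$, which is what they reuse in the noise-agnostic sensing example.
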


\subsection{proof of \textit{if} direction}
We aim to show that if $\left.\pdv{\hat{\rho}_{\vb*{\theta}}}{\theta_{1}}\right|_{\vb*{\theta} = \vb*{\theta}_0} \neq \sum_{i = 2}^{M} c_{i} \left.\pdv{\hat{\rho}_{\vb*{\theta}}}{\theta_{i}}\right|_{\vb*{\theta} = \vb*{\theta}_0}$, there exists a locally unbiased estimator of $\theta_{1}$ at $\vb*{\theta} = \vb*{\theta}_0$ with finite estimation error.
\begin{proof}
In this proof, we set $\vb*{\theta} = \vb*{\theta}_0$.
According to Lemma \ref{S:lemma1}, the existence of a {locally} unbiased estimator of $\theta_{1}$ with finite estimation error is equivalent to the condition $\vb*{w}_{1} \in \mathrm{supp}(\mathbf{J})$. Therefore, the statement is equivalent to proving if $\pdv{\hat{\rho}_{\vb*{\theta}}}{\theta_{1}} \neq \sum_{i \neq 1} c_{i} \pdv{\hat{\rho}_{\vb*{\theta}}}{\theta_{i}}$, then $\vb*{w}_{1} \in \mathrm{supp}(\mathbf{J})$. For this proof, we use the contrapositive method. Specifically, we demonstrate the contrapositive statement: if $\vb*{w}_{1} \notin \mathrm{supp}(\mathbf{J})$, then $\pdv{\hat{\rho}_{\vb*{\theta}}}{\theta_{1}} = \sum_{i \neq 1} c_{i} \pdv{\hat{\rho}_{\vb*{\theta}}}{\theta_{i}}$.

Let us consider $\pdv{\hat{\rho}_{\vb*{\theta}}}{\theta_{1}}$:
\begin{align}
    \pdv{\hat{\rho}_{\vb*{\theta}}}{\theta_{1}} = \sum_{i=1}^{M}\delta_{1i} \pdv{\hat{\rho}_{\vb*{\theta}}}{\theta_{i}} =  \sum_{i=1}^{M}\big(\vb*{w}_{1}^{\mathrm{T}}\vb*{w}_{i}\big) \pdv{\hat{\rho}_{\vb*{\theta}}}{\theta_{i}} = \sum_{i,k=1}^{M}  \big(\vb*{w}_{1}^{\mathrm{T}}\vb*{\lambda}_{k}\big) \big(\vb*{\lambda}_{k}^{\mathrm{T}}\vb*{w}_{i}\big)\pdv{\hat{\rho}_{\vb*{\theta}}}{\theta_{i}} = \sum_{k=1}^{M}  \big(\vb*{w}_{1}^{\mathrm{T}}\vb*{\lambda}_{k}\big) \hat{\rho}_{\vb*{\lambda}_{k}},
\end{align}
where we define 
\begin{align}
    \hat{\rho}_{\vb*{\lambda}_{k}} :=\sum_{i=1}^{M}\big(\vb*{\lambda}_{k}^{\mathrm{T}}\vb*{w}_{i} \big)\partial_{i}\hat{\rho}_{\vb*{\theta}}= \sum_{i=1}^{M}\big(\vb*{w}_{i}^{\mathrm{T}} \vb*{\lambda}_{k}\big)\partial_{i}\hat{\rho}_{\vb*{\theta}}.
\end{align}
Here, $\vb*{w}_{i}$ is $M \times 1$ vector with elements $[\vb*{w}_{i}]_{k}=\delta_{ik}$ for all $1 \leq i \leq M$, and $\{\vb*{\lambda}_{i}\}_{i=1}^{M}$ is the eigenvector set of the QFIM $\mathbf{J}$ defined in Eq. \eqref{S:diagJ}. Each of the two vector sets forms an orthonormal basis for an $M$-dimensional real vector. 
Next, by leveraging Property \ref{S:prop4}, we complete our proof. The proof of Property \ref{S:prop4} will be provided at the end of this subsection.
\begin{property}\label{S:prop4}
    $\hat{\rho}_{\vb*{\lambda}_{k}}=0$ for $k \geq (s+1)$.
\end{property}
Using \textit{Property} \ref{S:prop4}, we further obtain
\begin{align}
    \pdv{\hat{\rho}_{\vb*{\theta}}}{\theta_{1}}=\sum_{k=1}^{s}  \big(\vb*{w}_{1}^{\mathrm{T}}\vb*{\lambda}_{k}\big) \hat{\rho}_{\vb*{\lambda}_{k}} = \sum_{i=1}^{M}\sum_{k=1}^{s} \big(\vb*{w}_{1}^{\mathrm{T}} \vb*{\lambda}_{k})\big(\vb*{\lambda}_{k}^{\mathrm{T}} \vb*{w}_{i})\pdv{\hat{\rho}_{\vb*{\theta}}}{\theta_{i}}= \sum_{k=1}^{s} \big(\vb*{w}_{1}^{\mathrm{T}} \vb*{\lambda}_{k})\big(\vb*{\lambda}_{k}^{\mathrm{T}} \vb*{w}_{1})\pdv{\hat{\rho}_{\vb*{\theta}}}{\theta_{1}} + \sum_{i=2}^{M}\sum_{k=1}^{s} \big(\vb*{w}_{1}^{\mathrm{T}} \vb*{\lambda}_{k})\big(\vb*{\lambda}_{k}^{\mathrm{T}} \vb*{w}_{i})\pdv{\hat{\rho}_{\vb*{\theta}}}{\theta_{i}}.
\end{align}
Since we are considering the case where $\vb*{w}_{1} \notin \mathrm{supp}(\mathbf{J})$, it follows that $\sum_{k=1}^{s} \big(\vb*{w}_{1}^{\mathrm{T}} \vb*{\lambda}_{k})\big(\vb*{\lambda}_{k}^{\mathrm{T}} \vb*{w}_{1}) <1$. As a consequence, we can express $\pdv{\hat{\rho}_{\vb*{\theta}}}{\theta_{1}}$ as
\begin{align}
    \pdv{\hat{\rho}_{\vb*{\theta}}}{\theta_{1}} = \sum_{i=2}^{M}\left( \frac{\sum_{k=1}^{s} \big(\vb*{w}_{1}^{\mathrm{T}} \vb*{\lambda}_{k})\big(\vb*{\lambda}_{k}^{\mathrm{T}} \vb*{w}_{i})}{1-\sum_{k=1}^{s} \big(\vb*{w}_{1}^{\mathrm{T}} \vb*{\lambda}_{k})\big(\vb*{\lambda}_{k}^{\mathrm{T}} \vb*{w}_{1})}\right)\pdv{\hat{\rho}_{\vb*{\theta}}}{\theta_{i}}.
\end{align}
Therefore, if $\vb*{w}_{1} \notin \mathrm{supp}(\mathbf{J})$, there exists a set of coefficient $\{c_{i}\}_{i=1}^{M}$ that satisfies $\pdv{\hat{\rho}_{\vb*{\theta}}}{\theta_{1}} = \sum_{i = 2}^{M} c_{i} \pdv{\hat{\rho}_{\vb*{\theta}}}{\theta_{i}}$.
\end{proof}

\textit{proof of Property \ref{S:prop4}}.---{Let us express the components of the QFIM using the eigenvectors of the finite-dimensional encoded state $\hat{\rho}_{\vb*{\theta}}$.}
The diagonal form of the $d$-dimensional encoded state $\hat{\rho}_{\vb*{\theta}}$ is expressed as
\begin{align}
    \hat{\rho}_{\vb*{\theta}}=\sum_{\alpha=1}^{d}\rho_{\alpha}\dyad{\psi_{\alpha}}=\sum_{\alpha=1}^{S}\rho_{\alpha}\dyad{\psi_{\alpha}}+\sum_{\beta=S+1}^{d}0\dyad{\psi_{\beta}},
\end{align}
where $\rho_{\alpha}>0$ for all $ 1\leq \alpha \leq S$. If $\hat{\rho}_{\vb*{\theta}}$ is not full-rank in the Hilbert space, then $S < d$. In addition, we emphasize that $\{\ket{\psi_{\alpha}}\}_{\alpha=1}^{d}$ forms an orthonormal basis for the $d$-dimensional Hilbert space. By using $\{\ket{\psi_{\alpha}}\}_{\alpha=1}^{d}$ , the elements of the QFIM can be expressed as
\begin{align}
    \vb*{w}_{i}^{\mathrm{T}}\mathbf{J}\vb*{w}_{j} = [\mathbf{J}]_{ij}=\sum_{\rho_{\alpha}+\rho_{\beta} \neq 0}\frac{2\Re(\langle \psi_{\alpha} \vert \partial_{i}\hat{\rho}_{\vb*{\theta}} \vert \psi_{\beta} \rangle \langle \psi_{\beta} \vert \partial_{j}\hat{\rho}_{\vb*{\theta}} \vert \psi_{\alpha} \rangle)}{\rho_{\alpha}+\rho_{\beta}},
\end{align}
where the summation is over all $1 \leq \alpha,\beta \leq d$ such that $\rho_{\alpha}+\rho_{\beta} \neq 0$.

Next, let us consider $\vb*{\lambda}_{k}^{\mathrm{T}}\mathbf{J} \vb*{\lambda}_{k}$. For $k\geq (s+1)$, $\vb*{\lambda}_{k}$ is the eigenvector of $\mathbf{J}$ with eigenvalue $0$, therefore, $\vb*{\lambda}_{k}^{\mathrm{T}}\mathbf{J} \vb*{\lambda}_{k}=0$. This can be expressed in terms of the basis set $\{\vb*{w}_{i}\}_{i=1}^{M}$ as follows:
\begin{align}
    0=\vb*{\lambda}_{k}^{\mathrm{T}}\mathbf{J} \vb*{\lambda}_{k} = \vb*{\lambda}_{k}^{\mathrm{T}}\left(\sum_{i=1}^{M}\vb*{w}_{i}\vb*{w}_{i}^{\mathrm{T}}\right) \mathbf{J} \left(\sum_{j=1}^{M}\vb*{w}_{j}\vb*{w}_{j}^{\mathrm{T}}\right)\vb*{\lambda}_{k}=\sum_{i,j=1}^{M} (\vb*{\lambda}_{k}^{\mathrm{T}} \vb*{w}_{i})[\mathbf{J}]_{ij} (\vb*{w}_{j}^{\mathrm{T}}\vb*{\lambda}_{k} ) = 2\sum_{\rho_{\alpha}+\rho_{\beta} \neq 0}\frac{\abs{\langle \psi_{\alpha} \vert \hat{\rho}_{\vb*{\lambda}_{k}} \vert \psi_{\beta} \rangle}^{2}}{\rho_{\alpha}+\rho_{\beta}}, \label{S:QFIlambda}
\end{align}
Here, the numerator is equal or greater than $0$ and the denominator is positive. Therefore, for $\vb*{\lambda}_{k}^{\mathrm{T}}\mathbf{J} \vb*{\lambda}_{k}=0$ to hold for $(s+1) \leq  k \leq M$, it must be true that $\langle \psi_{\alpha} \vert \hat{\rho}_{\vb*{\lambda}_{k}} \vert \psi_{\beta} \rangle=0$ for all $1 \leq \alpha,\beta \leq d$ such that $\rho_{\alpha}+\rho_{\beta} \neq 0$.
{Next, let us show 
\(\bra{\psi_{\alpha}}\hat{\rho}_{\vb*{\lambda}_{k}}\ket{\psi_{\beta}}=0\)
for \(\alpha\) and \(\beta\) satisfying
\(\rho_{\alpha}+\rho_{\beta}=0\). We note that
\(\rho_{\alpha}+\rho_{\beta}=0\) if and only if
\(\rho_{\alpha}=0\) and \(\rho_{\beta}=0\).
From the definition of \(\hat{\rho}_{\vb*{\lambda}_{k}}\), we have
\begin{align}
    \hat{\rho}_{\vb*{\theta}+\varepsilon\vb*{\lambda}_k}
    =
    \hat{\rho}_{\vb*{\theta}}
    +
    \varepsilon \hat{\rho}_{\vb*{\lambda}_k}
    +
    O(\varepsilon^2)
\end{align}
for sufficiently small \(\varepsilon\).
When we sandwich this state with an arbitrary normalized vector
\(\ket{\psi^0}\in\ker(\hat{\rho}_{\vb*{\theta}})\), we obtain
\begin{align}
    0
    \leq
    \ev*{\hat{\rho}_{\vb*{\theta}+\varepsilon\vb*{\lambda}_k}}{\psi^0}
    =
    \varepsilon
    \ev*{\hat{\rho}_{\vb*{\lambda}_k}}{\psi^0}
    +
    O(\varepsilon^2).
\end{align}
Dividing by \(\varepsilon>0\) and taking
\(\varepsilon\to0^+\), we obtain
\begin{align}
\ev*{\hat{\rho}_{\vb*{\lambda}_k}}{\psi^0}\ge 0 .
\end{align}
Let \(\hat{\Pi}_0\) denote the projector onto
\(\ker(\hat{\rho}_{\vb*{\theta}})\), and define
\begin{align}
\hat{A}_0
:=
\hat{\Pi}_0
\hat{\rho}_{\vb*{\lambda}_k}
\hat{\Pi}_0 .
\end{align}
Since the above inequality holds for every normalized
\(\ket{\psi^0}\in\ker(\hat{\rho}_{\vb*{\theta}})\),
the null-space block \(\hat{A}_0\) is positive semidefinite.

Moreover, since \(\hat{\rho}_{\vb*{\lambda}_k}\) is a linear
combination of derivatives of a normalized density operator,
we have
\begin{align}
\mathrm{Tr}\!\left[\hat{\rho}_{\vb*{\lambda}_k}\right]=0 .
\end{align}
From the previous argument,
\begin{align}
\bra{\psi_\alpha}
\hat{\rho}_{\vb*{\lambda}_k}
\ket{\psi_\beta}
=0
\end{align}
for all \(\alpha,\beta\) satisfying
\(\rho_\alpha+\rho_\beta\neq0\). In particular, all diagonal
elements on \(\operatorname{supp}(\hat{\rho}_{\vb*{\theta}})\)
vanish. Hence,
\begin{align}
0
=
\mathrm{Tr}\!\left[\hat{\rho}_{\vb*{\lambda}_k}\right]
=
\mathrm{Tr}\!\left[
\hat{\Pi}_0
\hat{\rho}_{\vb*{\lambda}_k}
\hat{\Pi}_0
\right]
=
\mathrm{Tr}\!\left[\hat{A}_0\right].
\end{align}
Since \(\hat{A}_0\) is positive semidefinite and has zero trace,
it must be the zero matrix. Therefore,
\begin{align}
\bra{\psi_\alpha}
\hat{\rho}_{\vb*{\lambda}_k}
\ket{\psi_\beta}
=0
\end{align}
also for \(\rho_\alpha=0\) and \(\rho_\beta=0\).
Combining this with the case \(\rho_\alpha+\rho_\beta\neq0\),
we conclude that
\begin{align}
\hat{\rho}_{\vb*{\lambda}_k}=0 .
\end{align}
}

\qed

\subsection{proof of \textit{only if} direction}
We aim to show that when there exists a locally unbiased estimator of $\theta_{1}$ at $\vb*{\theta} = \vb*{\theta}_0$ with finite estimation error, then $\left.\pdv{\hat{\rho}_{\vb*{\theta}}}{\theta_{1}}\right|_{\vb*{\theta} = \vb*{\theta}_0} \neq \sum_{i = 2}^{M} c_{i} \left.\pdv{\hat{\rho}_{\vb*{\theta}}}{\theta_{i}}\right|_{\vb*{\theta} = \vb*{\theta}_0}$.
\begin{proof}
In this proof, we set $\vb*{\theta} = \vb*{\theta}_0$ and show the proof by contradiction.
Let us assume $\mathbf{J}^{+}\mathbf{J}\vb*{w}_{1} = \vb*{w}_{1}$ and there exists $c_i\in\mathbb{R}$ such that $\pdv{\hat{\rho}_{\vb*{\theta}}}{\theta_{1}} = \sum_{i \neq 1} c_{i} \pdv{\hat{\rho}_{\vb*{\theta}}}{\theta_{i}}$.
From the assumption of $\pdv{\hat{\rho}_{\vb*{\theta}}}{\theta_{1}} = \sum_{i \neq 1} c_{i} \pdv{\hat{\rho}_{\vb*{\theta}}}{\theta_{i}}$ the elements of the QFIM satisfy:
\begin{align}
    [\mathbf{J}]_{k1}=\sum_{i \neq 1} c_{i} [\mathbf{J}]_{ki}, \forall k. \label{S:detectioncondition2}
\end{align}
This is because $\mathbf{J}_{ki} \equiv \mathrm{Tr} \left[\hat{L}_{k} \pdv{\hat{\rho}_{\vb*{\theta}}}{\theta_{i}} \right]$. Next, let us inspect $\mathbf{J}^{+}\mathbf{J}\vb*{w}_{1}$. Using Eq. \eqref{S:detectioncondition2}, we can express it as
\begin{align}
    & \left[\mathbf{J}^{+}\mathbf{J}\vb*{w}_{1}\right]_{a}  =\sum_{b,c=1}^{M} [\mathbf{J}^{+}]_{ab}[\mathbf{J}]_{bc}[\vb*{w}_{1}]_{c}=\sum_{b=1}^{M}[\mathbf{J}^{+}]_{ab}[\mathbf{J}]_{b1}=\sum_{b=1}^{M}[\mathbf{J}^{+}]_{ab} \sum_{j \neq 1} c_{j} [\mathbf{J}]_{bj} =\left[\mathbf{J}^{+}\mathbf{J}\vb*{c}\right]_{a},
\end{align}
where $\vb*{c}=(0,c_{2},\cdots,c_{M})^{\mathrm{T}}$. Notably, $\vb*{c}$ is orthogonal to $\vb*{w}_{1}$, as $\vb*{w}_{1}$ has non-zero support only in the first component. Now, consider the following inner product:
\begin{align}
    \vb*{w}_{1}^{\mathrm{T}}\left(\mathbf{J}^{+}\mathbf{J}\vb*{w}_{1}\right)=\vb*{w}_{1}^{\mathrm{T}}\left(\mathbf{J}^{+}\mathbf{J}\vb*{c}\right)=\left(\vb*{w}_{1}^{\mathrm{T}}\mathbf{J}^{+}\mathbf{J}\right)\vb*{c}=\vb*{w}_{1}^{\mathrm{T}} \vb*{c}=0. \label{S:pcontra2}
\end{align}
However, this contradicts with $\vb*{w}_{1}^{\mathrm{T}}\left(\mathbf{J}^{+}\mathbf{J}\vb*{w}_{1}\right) = \vb*{w}_1^T\vb*{w}_1 = 1$.
Therefore, when $\mathbf{J}^{+}\mathbf{J}\vb*{w}_{1} = \vb*{w}_{1}$, we have $\pdv{\hat{\rho}_{\vb*{\theta}}}{\theta_{1}} \neq \sum_{i \neq 1} c_{i} \pdv{\hat{\rho}_{\vb*{\theta}}}{\theta_{i}}$ for all $c_i\in\mathbb{R}$.
\end{proof}

\section{Analysis of Quantum Fisher Information Matrix in Noise-Agnostic Sensing}
As discussed in the main text, {locally} unbiased estimation of an unknown signal $\phi$ can be achieved using an entangled quantum probe with a noiseless ancilla, even in the presence of unknown noise. In this section, we analyze the quantum Fisher information matrix (QFIM) of the entangled noisy output state
\begin{equation}
    \label{S:eq_noisy_entangled}
    \hat{\rho}^{\mathrm{ent}} = \sum_{\vb*{x}\in\{0,1\}^n}p_{\vb*{x}}\hat{\Pi}_{\vb*{x}}\frac{1}{2} \left(\hat{I} + \lambda_{\vb*{x}}\cos(n\phi) \hat{X}_\mathrm{L} +  \lambda_{\vb*{x}}\sin(n\phi) \hat{Y}_\mathrm{L} \right),
\end{equation}
where the projector onto the $2^n$ orthogonal code space is defined as $\hat{\Pi}_{\vb*{x}} :=\prod_{i=1}^{n} \frac{1}{2} (\hat{I} +(-1)^{x_i+x_{i+1}} \hat{Z}_i\hat{Z}_{i+1})$ with $x_{n+1} = 0$. Additionally, $p_{\vb*{x}} := \sum_{\vb*{z}\in\{0,1\}^n} p_{(\vb*{x},\vb*{z})}$ and $\lambda_{\vb*{x}} = \sum_{\vb*{z}\in\{0,1\}^n} (-1)^{\vb*{z}\cdot\vb*{1}} p_{(\vb*{x},\vb*{z})}/p_{\vb*{x}}$ with $p_{(\vb*{0},\vb*{0})} = 1-\sum_{\vb*{a} \neq (\vb*{0},\vb*{0})}p_{\vb*{a}}$ and $\vb*{1}$ is $n$-bit binary vector whose components are all $1$.

The original unknown Pauli noise $\mathcal{N}$ is parameterized by Pauli error rates $\{p_{\vb*{a}}\}_{\vb*{a} \neq (\vb*{0},\vb*{0})}$ or Pauli eigenvalues $\{\lambda_{\vb*{a}}\}_{\vb*{a} \neq (\vb*{0},\vb*{0})}$ with $\vb*{a} = (\vb*{x}, \vb*{z})$. However, for the state $\hat{\rho}^{\mathrm{ent}}$, we do not use these noise parameters. Instead, we regard that $\hat{\rho}^{\mathrm{ent}}$ is parameterized by $2\cdot 2^n+1$ parameters: $\phi$, $\{p_{\vb*{x}}\}_{\vb*{x}\in\{0,1\}^n}$, and $\{\lambda_{\vb*{x}}\}_{\vb*{x}\in\{0,1\}^n}$.

The derivative of the quantum state $\hat{\rho}^{\mathrm{ent}}$ for each parameter can be expressed as
\begin{align}
    \partial_{\phi}\hat{\rho}^{\mathrm{ent}} &= \sum_{\vb*{x}\in\{0,1\}^n}p_{\vb*{x}}\hat{\Pi}_{\vb*{x}}\frac{1}{2} \left(- n\lambda_{\vb*{x}}\sin(n\phi) \hat{X}_\mathrm{L} + n\lambda_{\vb*{x}}\cos(n\phi) \hat{Y}_\mathrm{L} \right),\\
    \partial_{p_{\vb*{x}}}\hat{\rho}^{\mathrm{ent}} &= \hat{\Pi}_{\vb*{x}}\frac{1}{2} \left(\hat{I} + \lambda_{\vb*{x}}\cos(n\phi) \hat{X}_\mathrm{L} +  \lambda_{\vb*{x}}\sin(n\phi) \hat{Y}_\mathrm{L} \right),\\
    \partial_{\lambda_{\vb*{x}}}\hat{\rho}^{\mathrm{ent}} &= p_{\vb*{x}}\hat{\Pi}_{\vb*{x}}\frac{1}{2} \left(\cos(n\phi) \hat{X}_\mathrm{L} + \sin(n\phi) \hat{Y}_\mathrm{L} \right).
\end{align}
Moreover, the symmetric logarithmic derivative (SLD) operator for each parameter can be calculated as
\begin{align}
    \hat{L}_\phi &= n\sum_{\vb*{x}\in\{0,1\}^n} \lambda_{\vb*{x}}\hat{\Pi}_{\vb*{x}}(-\sin(n\phi)\hat{X}_L + \cos(n\phi)\hat{Y}_L), \\
    \hat{L}_{p_{\vb*{x}}} &= \frac{1}{p_{\vb*{x}}} \hat{\Pi}_{\vb*{x}},\\
    \hat{L}_{\lambda_{\vb*{x}}} &= \hat{\Pi}_{\vb*{x}} \left(\frac{-\lambda_{\vb*{x}}}{1-\lambda_{\vb*{x}}^2}\hat{I}  + \frac{1}{1-\lambda_{\vb*{x}}^2} \cos(n\phi) \hat{X}_\mathrm{L} + \frac{1}{1-\lambda_{\vb*{x}}^2}\sin(n\phi) \hat{Y}_\mathrm{L} \right).
\end{align}
Therefore, the QFIM of $\hat{\rho}_{\mathrm{ent}}$ is a diagonal matrix with its diagonal elements represented as
\begin{align}
    \mathbf{J}_{\phi,\phi}  &= \mathrm{Tr}[\hat{L}_\phi\partial_{\phi}\hat{\rho}^{\mathrm{ent}}] =  n^2\sum_{\vb*{x}\in\{0,1\}^n}p_{\vb*{x}}\lambda_{\vb*{x}}^2,\\
    \mathbf{J}_{p_{\vb*{x}},p_{\vb*{x}}}  &= \mathrm{Tr}[\hat{L}_{p_{\vb*{x}}}\partial_{p_{\vb*{x}}}\hat{\rho}^{\mathrm{ent}}] = \frac{1}{p_{\vb*{x}}}\\
    \mathbf{J}_{\lambda_{\vb*{x}},\lambda_{\vb*{x}}}  &= \mathrm{Tr}[\hat{L}_{\lambda_{\vb*{x}}}\partial_{\lambda_{\vb*{x}}}\hat{\rho}^{\mathrm{ent}}] = \frac{p_{\vb*{x}}}{1-\lambda_{\vb*{x}}^2}.
\end{align}

Since the QFIM is diagonal and invertible, the optimal estimation protocol achieving the quantum Cram\'er-Rao bound (QCRB) is to measure in a basis diagonalizing $L_\phi$~\cite{est-Suzuki_2020}. Thus, a {locally} unbiased estimator $\phi$ saturating the QCRB can be obtained:
\begin{align}
    \mathrm{Var}[\tilde{\phi}] \geq \mathbf{J}_{\phi,\phi}^{-1} = \frac{1}{n^2}\frac{1}{\sum_{\vb*{x}\in\{0,1\}^n}p_{\vb*{x}}\lambda_{\vb*{x}}^2},
\end{align}
by first measuring $\hat{\rho}_{\mathrm{ent}}$ to determine which code space $\hat{\Pi}_{\vb*{x}}$ the state belongs to, and then measuring the logical observable $(-\sin(n\phi)\hat{X}_L + \cos(n\phi)\hat{Y}_L)$. Therefore, noise-agnostic sensing can be achieved, meaning that an optimal {locally} unbiased estimator of the signal can be obtained without requiring any prior knowledge about the unknown Pauli noise $\mathcal{N}$.

\section{Symmetric Clifford Twirling in Quantum Metrology}
In the analysis of noise-agnostic sensing, we have assumed that the noise channel $\mathcal{N}$ affecting the unitary operation $\mathcal{U}_\phi = \hat{U}_\phi \cdot \hat{U}_\phi^\dag = e^{-i\phi/2 \sum_{i=1}^{n} \hat{Z}_i} \cdot e^{i\phi/2 \sum_{i=1}^{n} \hat{Z}_i}$ is Pauli noise.
However, in a general situation, the noise $\mathcal{N}$ may not be Pauli noise and could be a more complex noise channel.
In such cases, it may be thought that the noise channel can be transformed into Pauli noise through Pauli twirling~\cite{wallman2016noise}.
Furthermore, one might consider transforming the noise into simpler depolarizing noise by applying Clifford twirling~\cite{emerson2005scalable,dankert2009exact}.
To perform Clifford twirling, it is necessary to insert a Clifford operator $\hat{C}$ between $\mathcal{U}_\phi$ and $\mathcal{N}$.
However, this is generally not possible, since the unitary channel $\mathcal{U}_\phi$ and the noise channel $\mathcal{N}$ cannot be separated.
An alternative approach is to insert a modified operator $\hat{U}_\phi^\dag \hat{C}\hat{U}_\phi$ before the unitary channel $\mathcal{U}_\phi$, but this is infeasible as $\hat{U}_\phi^\dag \hat{C}\hat{U}_\phi$ may depend on the unknown parameter $\phi$.
Therefore, in quantum metrology, both Pauli and Clifford twirling cannot be implemented in a naive way.

To address this challenge, Ref.~\cite{tsubouchi2024symmetric} introduced the concept of \textit{symmetric Clifford twirling}.
Let us define the $Z$-symmetric Clifford group $\mathcal{G}_{n,Z}$ as the set of all $n$-qubit Clifford unitaries that commute with Pauli-$Z$ operators:
\begin{align}
    \mathcal{G}_{n,Z} = \{\hat{C}\in\mathcal{G}_n \;|\; \forall i\in \{0,\ldots,n\}, [\hat{C}, Z_i] = 0  \},
\end{align}
where $\mathcal{G}_n$ represents the $n$-qubit Clifford group.
Since a symmetric Clifford operator $\hat{C} \in \mathcal{G}_{n,Z}$ commutes with the unitary $\hat{U}_\phi = e^{-i\phi/2 \sum_{i=1}^{n} \hat{Z}_i}$, inserting $\hat{C}$ before the unitary $\hat{U}_\phi$ is equivalent to inserting $\hat{C}$ after it. Therefore, by randomly applying $\hat{C}\in \mathcal{G}_{n,Z}$ and its inverse $\hat{C}^\dag$ before and after the noisy unitary $\mathcal{N} \circ \hat{U}_\phi$, we obtain
\begin{align}
    \mathbb{E}_{C\in\mathcal{G}_{n,Z}}[\mathcal{C}^\dag\circ\mathcal{N}\circ \mathcal{U}_\phi \circ \mathcal{C}] = \mathbb{E}_{C\in\mathcal{G}_{n,Z}}[\mathcal{C}^\dag\circ\mathcal{N} \circ \mathcal{C}] \circ \mathcal{U}_\phi,
\end{align}
where $\mathcal{C}(\cdot) = \hat{C}\cdot \hat{C}^\dag$ and $\mathcal{C}^\dag(\cdot) = \hat{C}^\dag\cdot \hat{C}$. Thus, by symmetrically twirling the noisy unitary $\mathcal{N} \circ \hat{U}_\phi$, we can twirl only the noise without affecting $\hat{U}_\phi$.
This method is referred to as symmetric Clifford twirling.

While the performance of symmetric Clifford twirling for Pauli noise was thoroughly analyzed in Ref.~\cite{tsubouchi2024symmetric}, its application to general noise remains an open question.
Here, we analyze how symmetric Clifford twirling using the $Z$-symmetric Clifford group $\mathcal{G}_{n,Z}$ transforms a general noise channel $\mathcal{N}$.
For this analysis, we characterize the $n$-qubit noise channel $\mathcal{N}$ using a $4^n\times 4^n$-dimensional square matrix known as the \textit{Pauli transfer matrix}.
The Pauli transfer matrix $A$ describes how each Pauli operator $\hat{P}_{(\vb*{x}, \vb*{z})}$ is mapped through the noise:
\begin{align}
    \mathcal{N}(\hat{P}_{(\vb*{x}', \vb*{z}')}) = \sum_{\vb*{x}',\vb*{z}'}A_{(\vb*{x},\vb*{z}), (\vb*{x}',\vb*{z}')}\hat{P}_{(\vb*{x}, \vb*{z})}.
\end{align}
The elements of the Pauli transfer matrix $A$ can be represented as
\begin{align}
    A_{(\vb*{x},\vb*{z}), (\vb*{x}',\vb*{z}')} = \frac{1}{2^n}\mathrm{tr}[\hat{P}_{(\vb*{x}, \vb*{z})}\mathcal{N}(\hat{P}_{(\vb*{x}', \vb*{z}')})].
\end{align}
Since $\mathcal{N}$ is a trace-preserving channel, we have
\begin{align}
    A_{(\vb*{0},\vb*{0}), (\vb*{x}',\vb*{z}')} = \delta_{\vb*{x}', \vb*{0}}\delta_{\vb*{z}', \vb*{0}}.
\end{align}
However, the remaining $16^n-4^n$ elements have no such restrictions. Thus, a general $n$-qubit quantum noise channel $\mathcal{N}$ contains $16^n - 4^n$ independent parameters.

Meanwhile, for the twirled noise $\mathcal{N}_{\mathrm{twirl}} := \mathbb{E}_{C\in\mathcal{G}_{n,Z}}[\mathcal{C}^\dag\circ\mathcal{N}\circ\mathcal{C}]$ using the $Z$-symmetric Clifford group $\mathcal{G}_{n,Z}$, we obtain the following theorem.
\begin{theorem}
    \label{thm_sct}
    Let $A_{(\vb*{x},\vb*{z}), (\vb*{x}',\vb*{z}')} = \frac{1}{2^n}\mathrm{tr}[\hat{P}_{(\vb*{x}, \vb*{z})}\mathcal{N}(\hat{P}_{(\vb*{x}', \vb*{z}')})]$ be the elements of the Pauli transfer matrix of an $n$-qubit noise channel $\mathcal{N}$. 
    Then, the Pauli transfer matrix $(A_{\mathrm{twirl}})_{(\vb*{x},\vb*{z}), (\vb*{x}',\vb*{z}')} = \frac{1}{2^n}\mathrm{tr}[\hat{P}_{(\vb*{x}, \vb*{z})}\mathcal{N}_\mathrm{twirl}(\hat{P}_{(\vb*{x}', \vb*{z}')})]$ of the twirled noise $\mathcal{N}_{\mathrm{twirl}} = \mathbb{E}_{C\in\mathcal{G}_{n,Z}}[\mathcal{C}^\dag\circ\mathcal{N}\circ\mathcal{C}]$ satisfies
    \begin{align}
        (A_{\mathrm{twirl}})_{(\vb*{x},\vb*{z}), (\vb*{x}',\vb*{z}')} = 
        \left\{
        \begin{array}{ll}
            A_{(\vb*{0},\vb*{z}), (\vb*{0},\vb*{z}')} & (\vb*{x} = \vb*{x}' = \vb*{0}) \\
            (-1)^{\vb*{z}\cdot(\vb*{x}\odot\Delta\vb*{z})}\mathbb{E}_{\vb*{v}\in\{0,1\}^n}[(-1)^{\vb*{v}\cdot(\vb*{x}\odot\Delta\vb*{z})} A_{(\vb*{x},\vb*{v}), (\vb*{x},\vb*{v}+\Delta\vb*{z})})] & (\vb*{x} = \vb*{x}' \neq \vb*{0}) \\
            0 & (\vb*{x} \neq \vb*{x}') \\
        \end{array}
        \right.
         ,
    \end{align}
    where $\Delta \vb*{z} = \vb*{z} + \vb*{z}'$ is the difference between $\vb*{z}$ and $\vb*{z}'$ (note that we are calculating on modulo 2) and $\vb*{a} \odot \vb*{b} = \sum_i a_ib_i \vb*{e}_i$ is an elementwise product of vectors.
\end{theorem}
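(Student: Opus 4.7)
The plan is to exploit the symplectic structure of $\mathcal{G}_{n,Z}$. Since every $\hat{C}\in\mathcal{G}_{n,Z}$ commutes with all $\hat{Z}_i$, its symplectic action must fix each vector $(\vb*{0},\vb*{e}_i)$, and combined with the symplectic constraints this forces the action to take the form $(\vb*{x},\vb*{z})\mapsto(\vb*{x},\vb*{z}+M_C\vb*{x})$ for a symmetric binary $n\times n$ matrix $M_C$. Accordingly, one can write
\begin{equation}
\hat{C}\hat{P}_{(\vb*{x},\vb*{z})}\hat{C}^\dag = \phi_C(\vb*{x},\vb*{z})\,\hat{P}_{(\vb*{x},\vb*{z}+M_C\vb*{x})},
\end{equation}
with $\phi_C(\vb*{x},\vb*{z})\in\{\pm 1\}$ by Hermiticity. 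Expanding $\hat{P}_{(\vb*{x},\vb*{z})} = i^{\vb*{x}\cdot\vb*{z}}\hat{X}^{\vb*{x}}\hat{Z}^{\vb*{z}}$ and using $[\hat{C},\hat{Z}_i]=0$ to commute $\hat{Z}^{\vb*{z}}$ past $\hat{C}$, a direct calculation shows the $\vb*{z}$-dependence factorizes as
\begin{equation}
\phi_C(\vb*{x},\vb*{z}) = \phi_C(\vb*{x})\,(-1)^{(\vb*{x}\odot\vb*{z})\cdot M_C\vb*{x}},
\end{equation}
where $\phi_C(\vb*{x})\in\{\pm1\}$ depends only on $\vb*{x}$ and the non-Pauli factor of $\hat{C}$. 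Moreover, the generators $\hat{S}_i$ (contributing $M_C = \vb*{e}_i\vb*{e}_i^T$) and $\mathrm{CZ}_{ij}$ (contributing $M_C = \vb*{e}_i\vb*{e}_j^T + \vb*{e}_j\vb*{e}_i^T$) span all symmetric matrices over $\mathbb{F}_2$, so averaging over $\mathcal{G}_{n,Z}$ decomposes as an independent uniform average over $\hat{Z}$-Paulis $\hat{Z}^{\vb*{v}}$ (with $\vb*{v}\in\{0,1\}^n$) and over $M_C$.

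Substituting this action into the definition of the twirled Pauli transfer matrix yields
\begin{equation}
(A_{\mathrm{twirl}})_{(\vb*{x},\vb*{z}),(\vb*{x}',\vb*{z}')} = \mathbb{E}_C\bigl[\phi_C(\vb*{x},\vb*{z})\,\phi_C(\vb*{x}',\vb*{z}')\,A_{(\vb*{x},\vb*{z}+M_C\vb*{x}),(\vb*{x}',\vb*{z}'+M_C\vb*{x}')}\bigr].
\end{equation}
First, restricting the outer average to the $\hat{Z}$-Pauli subgroup immediately gives the off-block-diagonal vanishing: conjugation by $\hat{Z}^{\vb*{v}}$ fixes $M_C$ and multiplies $\hat{P}_{(\vb*{x},\vb*{z})}$ by $(-1)^{\vb*{v}\cdot\vb*{x}}$, so the $\vb*{v}$-average gives $\mathbb{E}_{\vb*{v}}[(-1)^{\vb*{v}\cdot(\vb*{x}+\vb*{x}')}]=\delta_{\vb*{x},\vb*{x}'}$. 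The case $\vb*{x}=\vb*{x}'=\vb*{0}$ is then trivial since $M_C\vb*{0}=\vb*{0}$ and $\phi_C(\vb*{0},\vb*{z})=1$, leaving the bare entry $A_{(\vb*{0},\vb*{z}),(\vb*{0},\vb*{z}')}$.

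For the remaining case $\vb*{x}=\vb*{x}'\neq\vb*{0}$, combining the two phase factors using $\phi_C(\vb*{x})^2=1$ eliminates the non-$\vb*{z}$-dependent part, giving
\begin{equation}
\phi_C(\vb*{x},\vb*{z})\phi_C(\vb*{x},\vb*{z}') = (-1)^{(\vb*{x}\odot\Delta\vb*{z})\cdot M_C\vb*{x}}.
\end{equation}
Reparameterizing by $\vb*{v}:=\vb*{z}+M_C\vb*{x}$, we have $M_C\vb*{x}=\vb*{v}+\vb*{z}$ in $\mathbb{F}_2$, so the exponent splits as $(\vb*{x}\odot\Delta\vb*{z})\cdot\vb*{v} + \vb*{z}\cdot(\vb*{x}\odot\Delta\vb*{z})$. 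Since the linear map $M\mapsto M\vb*{x}$ from symmetric binary matrices to $\mathbb{F}_2^n$ is surjective with constant fiber size for $\vb*{x}\neq\vb*{0}$, the induced distribution of $\vb*{v}$ is uniform on $\{0,1\}^n$. Pulling the $\vb*{z}$-dependent phase outside the expectation then yields exactly the claimed formula.

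The main obstacle is establishing the factorization $\phi_C(\vb*{x},\vb*{z}) = \phi_C(\vb*{x})\,(-1)^{(\vb*{x}\odot\vb*{z})\cdot M_C\vb*{x}}$, which requires a careful bookkeeping of the interplay between the integer-valued exponent of $i^{\vb*{x}\cdot\vb*{z}}$ in the Pauli phase convention and the $\mathbb{F}_2$-valued arithmetic of $\vb*{z}+M_C\vb*{x}$ in the $\hat{Z}$-exponent; the central identity is $\vb*{x}\cdot\vb*{z} - \vb*{x}\cdot(\vb*{z}+M_C\vb*{x})_{\mathbb{F}_2} = -\vb*{x}\cdot M_C\vb*{x} + 2(\vb*{x}\odot\vb*{z})\cdot M_C\vb*{x}$. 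Once this identity and the $\{\pm 1\}$-valuedness of $\phi_C(\vb*{x})$ are established, the symplectic averaging above combines with the $\hat{Z}$-Pauli twirl to give the theorem directly.
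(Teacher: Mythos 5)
Your proof is correct and, in its endgame, performs the same computation as the paper: the Pauli-$Z$ subgroup of $\mathcal{G}_{n,Z}$ kills the blocks with $\vb*{x}\neq\vb*{x}'$, the $\vb*{x}=\vb*{x}'=\vb*{0}$ block is untouched, and the diagonal blocks are obtained by averaging a $\vb*{z}$-shift by $M_C\vb*{x}$ together with the phase $(-1)^{(\vb*{x}\odot\vb*{z})\cdot M_C\vb*{x}}$ and reparameterizing $\vb*{v}=\vb*{z}+M_C\vb*{x}$. Where you genuinely differ is in how the conjugation action is derived: the paper invokes the unique decomposition of $\hat{C}\in\mathcal{G}_{n,Z}$ into $\hat{CZ}_{ij}$, $\hat{S}_i^\dagger$, and $\hat{Z}_i$ gates and tracks the action generator by generator, whereas you read off the form $(\vb*{x},\vb*{z})\mapsto(\vb*{x},\vb*{z}+M_C\vb*{x})$ with $M_C$ symmetric from the symplectic constraints and then obtain the phase factorization $\phi_C(\vb*{x},\vb*{z})=\phi_C(\vb*{x})(-1)^{(\vb*{x}\odot\vb*{z})\cdot M_C\vb*{x}}$ in one stroke from $[\hat{C},\hat{Z}^{\vb*{z}}]=0$; your bookkeeping identity $\vb*{x}\cdot\vb*{z}-\vb*{x}\cdot(\vb*{z}+M_C\vb*{x})_{\mathbb{F}_2}=-\vb*{x}\cdot M_C\vb*{x}+2(\vb*{x}\odot\vb*{z})\cdot M_C\vb*{x}$ checks out. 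This buys you something: your phase formula correctly gives, e.g., $\hat{CZ}(\hat{Y}\otimes\hat{X})\hat{CZ}=-\hat{X}\otimes\hat{Y}$, a case where the paper's intermediate claim that $\hat{CZ}_{ij}$ conjugation carries ``no additional phase'' is literally false when $x_i=x_j=1$ (the paper's subsequent resummation nonetheless recovers the correct final phase $(-1)^{\vb*{v}\cdot(\vb*{x}\odot\vb*{z})}$, so its theorem stands). Two small points you should make explicit rather than assert: (i) uniformity of $M_C$ over all symmetric binary matrices under the Haar average on $\mathcal{G}_{n,Z}$ — this still requires the unique-decomposition (or a transitivity) argument, so you have not fully escaped that input; and (ii) surjectivity of the linear map $M\mapsto M\vb*{x}$ from symmetric matrices onto $\mathbb{F}_2^n$ for $\vb*{x}\neq\vb*{0}$, which guarantees the uniformity of $\vb*{v}$. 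Both are true and elementary, so these are presentational gaps, not mathematical ones.
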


\begin{proof}
    We first show $(A_{\mathrm{twirl}})_{(\vb*{x},\vb*{z}), (\vb*{x}',\vb*{z}')} = A_{(\vb*{0},\vb*{z}), (\vb*{0},\vb*{z}')}$ for $\vb*{x} = \vb*{x}' = \vb*{0}$.
    When $\vb*{x} = \vb*{x}' = \vb*{0}$, we have
    \begin{align}
        (A_{\mathrm{twirl}})_{(\vb*{0},\vb*{z}), (\vb*{0},\vb*{z}')} 
        &= \frac{1}{2^n}\mathbb{E}_{\hat{C}\in\mathcal{G}_{n,Z}}\left[\mathrm{tr}[\hat{C}\hat{P}_{(\vb*{0}, \vb*{z})}\hat{C}^\dag\mathcal{N}(\hat{C}\hat{P}_{(\vb*{0}, \vb*{z}')}\hat{C}^\dag)]\right] \\
        &= \frac{1}{2^n}\mathbb{E}_{\hat{C}\in\mathcal{G}_{n,Z}}\left[\mathrm{tr}[\hat{P}_{(\vb*{0}, \vb*{z})} \mathcal{N}(\hat{P}_{(\vb*{0}, \vb*{z}')})]\right] \\
        &= A_{(\vb*{0},\vb*{z}), (\vb*{0},\vb*{z}')},
    \end{align}
    where we use that Z-symmetric Clifford operator $\hat{C}\in\mathcal{G}_{n,Z}$ commutes with Pauli-Z operator $\hat{P}_{(\vb*{0}, \vb*{z})}$ at the second equality.
    
    Next, we prove $(A_{\mathrm{twirl}})_{(\vb*{x},\vb*{z}), (\vb*{x}',\vb*{z}')} = 0$ for $\vb*{x} \neq \vb*{x}'$.
    When the twirled noise channel $\mathcal{N}_{\mathrm{twirl}}$ is further twirled by random Pauli-Z operators $\hat{P}_{(\vb*{0}, \vb*{v})}$, the noise channel stays the same because $\hat{P}_{(\vb*{0}, \vb*{v})} \in \mathcal{G}_{n,Z}$.
    Therefore, we have
    \begin{align}
        \mathcal{N}_{\mathrm{twirl}}(\cdot) = \mathbb{E}_{\vb*{v}\in\{0,1\}^n}[\hat{P}_{(\vb*{0}, \vb*{v})}\mathcal{N}_{\mathrm{twirl}}(\hat{P}_{(\vb*{0}, \vb*{v})}\cdot \hat{P}_{(\vb*{0}, \vb*{v})})\hat{P}_{(\vb*{0}, \vb*{v})}].
    \end{align}
    By calculating the Pauli transfer matrix of the LHS and RHS of this equation, we obtain
    \begin{align}
        (A_{\mathrm{twirl}})_{(\vb*{x},\vb*{z}), (\vb*{x}',\vb*{z}')} 
        &= \frac{1}{2^n}\mathbb{E}_{\vb*{v}\in\{0,1\}^n}\left[\mathrm{tr}[\hat{P}_{(\vb*{0}, \vb*{v})}\hat{P}_{(\vb*{x}, \vb*{z})}\hat{P}_{(\vb*{0}, \vb*{v})} \mathcal{N}_{\mathrm{twirl}}(\hat{P}_{(\vb*{0}, \vb*{v})}\hat{P}_{(\vb*{x}', \vb*{z}')}\hat{P}_{(\vb*{0}, \vb*{v})})]\right] \\
        &= \mathbb{E}_{\vb*{v}\in\{0,1\}^n}\left[(-1)^{(\vb*{x}+\vb*{x}')\cdot \vb*{v}}\right] \frac{1}{2^n}\mathrm{tr}[\hat{P}_{(\vb*{x}, \vb*{z})} \mathcal{N}_{\mathrm{twirl}}(\hat{P}_{(\vb*{x}', \vb*{z}')})] \\
        &= \delta_{\vb*{x}, \vb*{x}'} (A_{\mathrm{twirl}})_{\vb*{x},\vb*{z}, \vb*{x},\vb*{z}'}.
        \label{eq_SCTproof_ztwirl}
    \end{align}
    Thus, we have
    \begin{align}
        (A_{\mathrm{twirl}})_{(\vb*{x},\vb*{z}), (\vb*{x}',\vb*{z}')} = 0
    \end{align}
    for $\vb*{x} \neq \vb*{x}'$.

    Finally, we consider the case where $\vb*{x} = \vb*{x}' \neq \vb*{0}$.
    To analyze this case, we remind that any Z-symmetric Clifford operator $\hat{C}\in\mathcal{G}_{n,Z}$ can be uniquely represented as
    \begin{equation}
        \hat{C} = \prod_{\substack{i,j\in\qty{1,\ldots,n}\\i<j}}\hat{CZ}_{ij}^{\nu_{ij}} \prod_{i\in\qty{1,\ldots,n}}\hat{S}_{i}^{\dag\mu_{i}}\prod_{i\in\qty{1,\ldots,n}}\hat{Z}_{i}^{\xi_{i}}
    \end{equation}
    up to phase, where $\hat{CZ}_{ij}$ is the CZ gate acting on the $i$-th and $j$-th qubits, $\hat{S}_i$ and $\hat{Z}_i$ is the S gate and Z gate acting on $i$-th qubit, and $\nu_{ij}, \mu_{i}, \xi_{i}\in\{0,1\}$~\cite{mitsuhashi2023clifford}.
    When  $\vb*{x} = \vb*{x}'$, twirling through Pauli-Z operator does not affect the Pauli transfer matrix (see Eq.~\eqref{eq_SCTproof_ztwirl}), so we only consider the twirling with $\hat{S}_i^\dag$ and $\hat{CZ}_{ij}$.
    Conjugating Pauli operator $\hat{P}_{(\vb*{x}, \vb*{z})}$ with $\hat{S}_{i}^\dag$ flips $z_i$ if $x_i = 1$ with additional phase $(-1)^{z_i}$.
    Conjugating Pauli operator $\hat{P}_{(\vb*{x}, \vb*{z})}$ with $\hat{CZ}_{ij}$ flips $z_i$ if $x_j = 1$ and $z_j$ if $x_i = 1$, with no additional phase.
    Therefore, conjugating Pauli operator $\hat{P}_{(\vb*{x}, \vb*{z})}$ with $\prod\hat{CZ}_{ij}^{\nu_{ij}} \prod\hat{S}_{i}^{\dag\mu_{i}}$ results in
    \begin{align}
        &(-1)^{\sum_{i,x_i=1} \mu_iz_i} \hat{P}_{(\vb*{x},\vb*{z} + \sum_{i, x_i = 1} \mu_i\vb*{e}_i + \sum_{i<j, x_i = 1} \nu_{ij}\vb*{e}_j + \sum_{i<j, x_j = 1} \nu_{ij}\vb*{e}_i)}\\
        = &(-1)^{\sum_{i,x_i=1} \mu_iz_i} \hat{P}_{(\vb*{x},\vb*{z} + \sum_{i, x_i = 1} \mu_i\vb*{e}_i + \sum_{i<j, x_i = x_j = 1} \nu_{ij}(\vb*{e}_i+\vb*{e}_j) + \sum_{i\neq j, x_i=0,x_j = 1} \nu_{ij}\vb*{e}_i)},
    \end{align}
    where we define $\nu_{ij} := \nu_{ji}$ for $i>j$.
    When we randomly sample $\nu_{ij}, \mu_{ij} \in\{0,1\}$, the last term $\sum\nu_{ij}\vb*{e}_i$ randomly flips the elements of $\vb*{z}$ that satisfies $x_i = 0$, and the middle two terms $\sum \mu_i\vb*{e}_i + \sum \nu_{ij}(\vb*{e}_i+\vb*{e}_j)$ randomly flips the elements of $\vb*{z}$ that satisfies $x_i = 1$ with additional phase of $(-1)^{\sum_{i:\mathrm{flipped}}z_i}$.
    In other words, all the elements of $\vb*{z}$ is randomly flipped with additional phase of $(-1)^{\sum_{i:\mathrm{flipped}}x_iz_i}$.
    This means that, the random conjugation with $\prod\hat{CZ}_{ij}^{\nu_{ij}} \prod\hat{S}_{i}^{\dag\mu_{i}}$ results in a uniform mixture of
    \begin{align}
        (-1)^{\vb*{v}\cdot(\vb*{x}\odot\vb*{z})} \hat{P}_{(\vb*{x}, \vb*{z}+\vb*{v})}.
    \end{align}
    Therefore, the Pauli transfer matrix of the twirled noise $\mathcal{N}_{\mathrm{twirl}}$ is
    \begin{align}
        (A_{\mathrm{twirl}})_{(\vb*{x},\vb*{z}), (\vb*{x},\vb*{z}')} 
        &= \frac{1}{2^n}\mathbb{E}_{\vb*{v}\in\{0,1\}^n}\left[\mathrm{tr}[(-1)^{\vb*{v}\cdot(\vb*{x}\odot\vb*{z})} \hat{P}_{(\vb*{x},\vb*{z} + \vb*{v})} \mathcal{N}((-1)^{\vb*{v}\cdot(\vb*{x}\odot\vb*{z}')} \hat{P}_{(\vb*{x}',\vb*{z}' + \vb*{v})})]\right] \\
        &= \frac{1}{2^n}\mathbb{E}_{\vb*{v}\in\{0,1\}^n}\left[(-1)^{\vb*{v}\cdot(\vb*{x}\odot(\vb*{z}+\vb*{z}')) }\mathrm{tr}[\hat{P}_{(\vb*{x},\vb*{z} + \vb*{v})} \mathcal{N}(\hat{P}_{(\vb*{x}',\vb*{z}' + \vb*{v})})]\right] \\
        &= \frac{1}{2^n}(-1)^{\vb*{z}\cdot(\vb*{x}\odot(\vb*{z}+\vb*{z}'))}\mathbb{E}_{\vb*{v}\in\{0,1\}^n}\left[(-1)^{\vb*{v}\cdot(\vb*{x}\odot(\vb*{z}+\vb*{z}')) }\mathrm{tr}[\hat{P}_{(\vb*{x},\vb*{v})} \mathcal{N}(\hat{P}_{(\vb*{x}',\vb*{z} + \vb*{z}' + \vb*{v})})]\right] \\
        &= (-1)^{\vb*{z}\cdot(\vb*{x}\odot(\vb*{z}+\vb*{z}'))}\mathbb{E}_{\vb*{v}\in\{0,1\}^n}\left[(-1)^{\vb*{v}\cdot(\vb*{x}\odot(\vb*{z}+\vb*{z}'))} A_{(\vb*{x},\vb*{v}), (\vb*{x},\vb*{z}+\vb*{z}'+\vb*{v})}\right] \\
        &= (-1)^{\vb*{z}\cdot(\vb*{x}\odot\Delta\vb*{z})}\mathbb{E}_{\vb*{v}\in\{0,1\}^n}[(-1)^{\vb*{v}\cdot(\vb*{x}\odot\Delta\vb*{z})} A_{(\vb*{x},\vb*{v}), (\vb*{x},\vb*{v}+\Delta\vb*{z})}].
    \end{align}

\end{proof}

From Theorem~\ref{thm_sct}, we can conclude that symmetric Clifford twirling has two effects: (1) block-diagonalizing the Pauli transfer matrix by eliminating the elements satisfying $\vb*{x} \neq \vb*{x}'$, and (2) mixing the elements within the same block.
For example, the Pauli transfer matrix $A_{\mathrm{twirl}}$ for a single-qubit twirled noise channel is represented as
\begin{align}
    \label{eq_PTM_singlequbit}
    \begin{pmatrix}
        1 & 0 & 0 & 0 \\
        0 & \frac{1}{2}(A_{xx} + A_{yy}) & \frac{1}{2}(A_{xy} - A_{yx}) & 0 \\
        0 & \frac{1}{2}(A_{xy} - A_{yx}) & \frac{1}{2}(A_{xx} + A_{yy}) & 0\\
        A_{zi} & 0 & 0 & A_{zz} \\
    \end{pmatrix}
    =
    \begin{pmatrix}
        1 & 0 & 0 & 0 \\
        0 & \lambda_1\cos\theta  & \lambda_1\sin\theta & 0\\
        0 & -\lambda_1\sin\theta & \lambda_1\cos\theta & 0 \\
        \alpha & 0 & 0 & \lambda_2 \\
    \end{pmatrix},
\end{align}
which can be characterized by four independent parameters: $\lambda_1, \lambda_2, \alpha, \theta$.
Here, we have denoted $i = (0,0), x = (1,0), y = (1,1), z = (0,1)$, with the first, second, third, and fourth rows and columns representing the elements corresponding to $i$, $x$, $y$, and $z$, respectively.
Therefore, the twirled noise $\mathcal{N}_{\mathrm{twirl}}$ can be interpreted as a combination of depolarizing noise, dephasing noise, amplitude damping noise, and Pauli-$Z$ rotation.

\section{Noise-Agnostic Sensing Using Symmetric Clifford Twirling}
Let us now return to the problem of estimating the signal $\phi$ from the unitary channel $\mathcal{U}_\phi = \hat{U}_\phi \cdot \hat{U}_\phi^\dag = e^{-i\phi/2 \sum_{i=1}^{n} \hat{Z}_i} \cdot e^{i\phi/2 \sum_{i=1}^{n} \hat{Z}_i}$ affected by a general noise channel $\mathcal{N}$.
For simplicity, we consider the case of $n=1$, though our discussion can be easily generalized to arbitrary $n$.
Instead of directly sensing from the noisy channel $\mathcal{N} \circ \mathcal{U}_{\phi}$, we apply symmetric Clifford twirling using the $Z$-symmetric Clifford group $\mathcal{G}_{n,Z}$.
As a result, the noise channel $\mathcal{N}$ is transformed into $\mathcal{N}_{\mathrm{twirl}}$, whose Pauli transfer matrix is represented as Eq.~\eqref{eq_PTM_singlequbit}.
Furthermore, the Pauli transfer matrix of the noisy channel $\mathcal{N} \circ \mathcal{U}_{\phi}$ is given by
\begin{align}
    \begin{pmatrix}
        1 & 0 & 0 & 0 \\
        0 & \lambda_1\cos(\phi + \theta)  & \lambda_1\sin(\phi + \theta) & 0\\
        0 & -\lambda_1\sin(\phi + \theta) & \lambda_1\cos(\phi + \theta) & 0 \\
        \alpha & 0 & 0 & \lambda_2 \\
    \end{pmatrix}.
\end{align}

When a Pauli-$Z$ rotation component $\theta$ is present in the noise, it becomes indistinguishable from the signal $\phi$.
In other words, we have $\partial_\theta \mathcal{N} \circ \mathcal{U}_{\phi} = \partial_\phi \mathcal{N} \circ \mathcal{U}_{\phi}$, which prevents {locally} unbiased estimation of $\phi$.
However, if the original noise $\mathcal{N}$ contains no Pauli-$Z$ rotation component, i.e., $\theta = 0$, {locally} unbiased estimation of $\phi$ can be achieved without the knowledge of the noise.

When the input state is the $\ket{+}$ state with density matrix $\rho_0 = \frac{1}{2} (\hat{I} + \hat{X})$, the output state $\hat{\rho}_{\mathrm{twirl}} = \mathcal{N}_{\mathrm{twirl}} \circ \mathcal{U}_{\phi}(\hat{\rho}_0)$ satisfies
\begin{align}
    \hat{\rho}_{\mathrm{twirl}} = \frac{1}{2}(\hat{I} + \lambda_1\cos(\phi)\hat{X} + \lambda_1\sin(\phi)\hat{Y} + \alpha\hat{Z}).
\end{align}
The QFIM of this state with respect to the parameters $\phi$, $\lambda$, and $\alpha$ satisfies
\begin{align}
    \begin{pmatrix}
        \lambda_1^2 & 0 & 0 \\
        0 & \frac{1-\alpha^2}{1-\lambda_1^2-\alpha^2} & \frac{-\alpha\lambda_1}{1-\lambda_1^2-\alpha^2} \\
        0 & \frac{-\alpha\lambda_1}{1-\lambda_1^2-\alpha^2} & \frac{1-\lambda_1^2}{1-\lambda_1^2-\alpha^2}\\
    \end{pmatrix}.
\end{align}
Since the QFIM is block-diagonal and invertible, a {locally} unbiased estimator $\tilde{\phi}$ that saturates the quantum Cram\'er-Rao bound (QCRB) can be obtained:
\begin{align}
    \mathrm{Var}[\tilde{\phi}] \geq \mathbf{J}_{\phi,\phi}^{-1} =\frac{1}{\lambda_1^2},
\end{align}
by measuring $\hat{\rho}_{\mathrm{twirl}}$ with a basis diagonalizing the SLD operator $\hat{L}_\phi = \lambda_1(-\sin(\phi)\hat{X} + \cos(\phi)\hat{Y})\propto-\sin(\phi)\hat{X} + \cos(\phi)\hat{Y}$~\cite{est-Suzuki_2020}.
Therefore, by applying twirling, we can achieve a {locally} unbiased estimation of the signal $\phi$ with finite variance.

Compared to {locally} unbiased estimation using entanglement, symmetric Clifford twirling offers two advantages.
The first advantage is that the noise does not need to be Pauli noise; the only restriction is the absence of an over-rotation term.
The second advantage is that it does not require the preparation of a noiseless ancilla, which may be challenging in some scenarios.
Nevertheless, there are drawbacks to {locally} unbiased estimation with twirling. Specifically, the Fisher information for the entanglement-based protocol is higher than that of the twirling-based protocol.

\setcounter{corollary}{0}
\section{Proof of Corollary 1 in the Main Text}
In this section, we provide the proof of Corollary 1 stated in the main text.
For clarity,  we restate the corollary below.
We note that the Corollary stated in the main text is the simplified version, and the following Corollary is more general.

\begin{corollary}\label{S:corollary1}
    Let $\mathcal{E}_{\vb*{\theta}}$ be an unknown quantum channel parameterized by unknown parameters $\vb*{\theta} = (\theta_1, \ldots, \theta_M)^T$.
    Then, given multiple accesses to the unknown channel $\mathcal{E}_{\vb*{\theta}}$, a locally unbiased estimator of parameter $\theta_1$ at $\vb*{\theta} = \vb*{\theta}_0$ with finite estimation error exists if and only if
    \begin{align}
    \label{eq_cors1_1}
    \left.\pdv{\mathcal{E}_{\vb*{\theta}}}{\theta_{1}}\right|_{\vb*{\theta} = \vb*{\theta}_0} \neq \sum_{i \neq 1} c_{i} \left.\pdv{\mathcal{E}_{\vb*{\theta}}}{\theta_{i}}\right|_{\vb*{\theta} = \vb*{\theta}_0}
    \end{align}
    for all $c_i\in\mathbb{C}$. Here, $\vb*{\theta}_0=(\theta_{1,0},\theta_{2,0},\cdots,\theta_{M,0})^{\mathrm{T}}$ is a fixed point.
\end{corollary}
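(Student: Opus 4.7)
The plan is to reduce Corollary \ref{S:corollary1} to Theorem \ref{theorem1} by treating the output of any channel-based protocol as a parameterized quantum state and tracking how the derivative relation lifts (or fails to lift) from channels to states. The two directions are handled by two complementary constructions: for the ``if'' direction we exhibit one concrete protocol whose output state inherits the linear independence of the channel derivatives, and for the ``only if'' direction we show that \emph{every} protocol built from the channel inherits any linear dependence among its derivatives.

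For the \emph{if} direction, I would take the simplest possible protocol: prepare the maximally entangled state $\ket{\Phi^+}$ on the system together with an isomorphic noiseless ancilla, apply the channel once on the system, and measure the resulting Choi state $\hat{\rho}^{J}_{\vb*{\theta}} := (\mathcal{E}_{\vb*{\theta}} \otimes \mathcal{I})(\ketbra{\Phi^+})$. Differentiating gives
\begin{equation}
\left.\pdv{\hat{\rho}^{J}_{\vb*{\theta}}}{\theta_k}\right|_{\vb*{\theta}_0} = \left(\left.\pdv{\mathcal{E}_{\vb*{\theta}}}{\theta_k}\right|_{\vb*{\theta}_0} \otimes \mathcal{I}\right)\!\bigl(\ketbra{\Phi^+}\bigr),
\end{equation}
and the Choi--Jamio\l{}kowski isomorphism is a linear bijection between superoperators and operators on the doubled Hilbert space. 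Hence linear independence of the channel derivatives is equivalent to linear independence of the state derivatives of $\hat{\rho}^{J}_{\vb*{\theta}}$, and Theorem \ref{theorem1} applied to $\hat{\rho}^{J}_{\vb*{\theta}}$ produces a locally unbiased estimator of $\theta_1$ with finite variance.

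For the \emph{only if} direction, I would suppose a linear dependence $\partial_{\theta_1}\mathcal{E}_{\vb*{\theta}}|_{\vb*{\theta}_0} = \sum_{i\neq 1} c_i\, \partial_{\theta_i}\mathcal{E}_{\vb*{\theta}}|_{\vb*{\theta}_0}$ and then show that this dependence propagates to every admissible protocol. Write an arbitrary adaptive strategy as the composite channel
\begin{equation}
\mathcal{M}_{\vb*{\theta}} = \mathcal{C}_{N+1}\circ(\mathcal{E}_{\vb*{\theta}}\otimes\mathcal{I})\circ\mathcal{C}_N\circ\cdots\circ(\mathcal{E}_{\vb*{\theta}}\otimes\mathcal{I})\circ\mathcal{C}_1 ,
\end{equation}
and apply the product rule: $\partial_{\theta_k}\mathcal{M}_{\vb*{\theta}} = \sum_{j=1}^{N} \mathcal{M}^{(j,k)}_{\vb*{\theta}}$, where $\mathcal{M}^{(j,k)}_{\vb*{\theta}}$ is the same composite but with the $j$-th copy of $\mathcal{E}_{\vb*{\theta}}\otimes\mathcal{I}$ replaced by $\partial_{\theta_k}\mathcal{E}_{\vb*{\theta}}\otimes\mathcal{I}$. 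Because the hypothesized linear relation holds position-by-position, at $\vb*{\theta}=\vb*{\theta}_0$ we obtain
\begin{equation}
\left.\pdv{\mathcal{M}_{\vb*{\theta}}}{\theta_1}\right|_{\vb*{\theta}_0} = \sum_{j=1}^{N}\mathcal{M}^{(j,1)}_{\vb*{\theta}_0} = \sum_{j=1}^{N}\sum_{i\neq 1} c_i\, \mathcal{M}^{(j,i)}_{\vb*{\theta}_0} = \sum_{i\neq 1} c_i \left.\pdv{\mathcal{M}_{\vb*{\theta}}}{\theta_i}\right|_{\vb*{\theta}_0} .
\end{equation}
Acting on any input state $\hat{\sigma}_0$ then gives the same linear dependence for the derivatives of the output state $\mathcal{M}_{\vb*{\theta}}(\hat{\sigma}_0)$, so Theorem \ref{theorem1} forbids any locally unbiased estimator of $\theta_1$ with finite variance.

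The main obstacle is the ``only if'' bookkeeping: one must verify that the product-rule decomposition is clean in the adaptive setting and that the \emph{same} coefficients $c_i$ propagate through each slot $j$ simultaneously — this is true because the linear relation at the channel level is $\theta_0$-pointwise and independent of the channel's context. A subtle point I would address explicitly is that intermediate $\mathcal{C}_j$ may themselves contain classical feedback (represented as CPTP maps on a combined quantum--classical register), but since such maps are independent of $\vb*{\theta}$ the product rule still applies verbatim. Given these checks, both directions follow directly from Theorem \ref{theorem1}, with the Choi state providing the one explicit protocol needed for achievability and the product rule providing the obstruction that is universal across all multi-use adaptive strategies with noiseless ancillae.
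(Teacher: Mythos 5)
Your proposal is correct and follows essentially the same route as the paper's proof: the paper likewise reduces to Theorem \ref{theorem1} via the sequential-strategy representation of the most general protocol, uses the single-use Choi state $(\mathcal{E}_{\vb*{\theta}}\otimes\mathcal{I})(\ketbra{\Psi})$ to transfer linear independence from channel derivatives to state derivatives for achievability, and uses the Leibniz/product rule over the $N$ channel slots to propagate any linear dependence (with the same coefficients $c_i$) to every adaptive protocol for the impossibility direction. The only cosmetic difference is that the paper phrases the two directions as an equivalence between the channel-level and output-state-level dependence conditions, which is the contrapositive packaging of exactly your two arguments.
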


\begin{proof}
    In this proof, we set $\vb*{\theta} = \vb*{\theta}_0$.
    We employ the fact that the most general estimation protocol for unknown quantum channels is represented by the \textit{sequential strategy}~\cite{giovannetti2006quantum, demkowicz2014using}.
    In this strategy, we access $\mathcal{E}_{\vb*{\theta}}$ for $N$ times and prepare the quantum state
    \begin{align}
        \label{eq_cors1_pr_1}
        \hat{\rho}_{\mathrm{seq}} = \mathcal{C}_{N+1}\circ(\mathcal{E}_{\vb*{\theta}}\otimes \mathcal{I})\circ \cdots\circ\mathcal{C}_2\circ(\mathcal{E}_{\vb*{\theta}}\otimes \mathcal{I})\circ \mathcal{C}_1(\hat{\rho}_0).
    \end{align}
    Here, $\hat{\rho}_0$ is the initial state of the system and ancilla, $\mathcal{C}_i$ is an arbitrary CPTP channel, and $\mathcal{I}$ denotes the identity channel on the ancilla.
    Since the sequential strategy is the most general estimation protocol, a {locally} unbiased estimator of $\theta_1$ with finite estimation error exists if and only if there exists a sequential strategy that allows {locally} unbiased estimation of $\theta_1$ from the output state $\hat{\rho}_{\mathrm{seq}}$.
    This condition is equivalent to the existence of $N$, $\hat{\rho}_0$, and $\mathcal{C}_i$ such that
    \begin{align}
        \label{eq_cors1_pr_2}
        \pdv{\theta_{1}}\mathcal{C}_{N+1}\circ(\mathcal{E}_{\vb*{\theta}}\otimes \mathcal{I})\circ \cdots\circ\mathcal{C}_2\circ(\mathcal{E}_{\vb*{\theta}}\otimes \mathcal{I})\circ \mathcal{C}_1(\hat{\rho}_0)
        \neq \sum_{i \neq 1} d_{i} \pdv{\theta_{i}}\mathcal{C}_{N+1}\circ(\mathcal{E}_{\vb*{\theta}}\otimes \mathcal{I})\circ \cdots\circ\mathcal{C}_2\circ(\mathcal{E}_{\vb*{\theta}}\otimes \mathcal{I})\circ \mathcal{C}_1(\hat{\rho}_0)
    \end{align}
    for all $d_i\in\mathbb{C}$.
    Therefore, it suffices to show the equivalence between Eq.~\eqref{eq_cors1_1} and Eq.~\eqref{eq_cors1_pr_2}.

    To establish this equivalence, we demonstrate that parameters $c_i\in\mathbb{C}$ satisfying
    \begin{align}
    \label{eq_cors1_pr_3}
    \pdv{\mathcal{E}_{\vb*{\theta}}}{\theta_{1}} = \sum_{i \neq 1} c_{i} \pdv{\mathcal{E}_{\vb*{\theta}}}{\theta_{i}}
    \end{align}
    exist if and only if, for all $N$, $\hat{\rho}_0$, and $\mathcal{C}_i$, parameters $d_i\in\mathbb{C}$ exist such that
    \begin{align}
        \label{eq_cors1_pr_4}
        \pdv{\theta_{1}}\mathcal{C}_{N+1}\circ(\mathcal{E}_{\vb*{\theta}}\otimes \mathcal{I})\circ \cdots\circ\mathcal{C}_2\circ(\mathcal{E}_{\vb*{\theta}}\otimes \mathcal{I})\circ \mathcal{C}_1(\hat{\rho}_0)
        = \sum_{i \neq 1} d_{i} \pdv{\theta_{i}}\mathcal{C}_{N+1}\circ(\mathcal{E}_{\vb*{\theta}}\otimes \mathcal{I})\circ \cdots\circ\mathcal{C}_2\circ(\mathcal{E}_{\vb*{\theta}}\otimes \mathcal{I})\circ \mathcal{C}_1(\hat{\rho}_0).
    \end{align}

    We first prove the ``if" direction.
    Assume that for all $N$, $\hat{\rho}_0$, and $\mathcal{C}_i$, there exist parameters $d_i\in\mathbb{C}$ satisfying Eq.~\eqref{eq_cors1_pr_4}.
    In particular, consider the case where $N = 1$, $\hat{\rho}_0 = \ketbra{\Psi}$ with $\ket{\Psi} = \sum_i \ket{i}\ket{i}$ being the maximally entangled state, and where $\mathcal{C}_1$ and $\mathcal{C}_2$ are identity operations on the system and the ancilla.
    In this case, we obtain
    \begin{align}
        \label{S:eq_cors1_pr_5}
        \left(\pdv{\theta_{1}}\mathcal{E}_{\vb*{\theta}}\otimes \mathcal{I}\right)(\ketbra{\Psi})
        = \left(\sum_{i \neq 1} d_{i} \pdv{\theta_{i}}\mathcal{E}_{\vb*{\theta}}\otimes \mathcal{I}\right)(\ketbra{\Psi}),
    \end{align}
    for some $d_i\in\mathbb{C}$.
    This equality implies that the Choi state of the map $\partial_{\theta_{1}}\mathcal{E}_{\vb*{\theta}}$ coincides with the Choi state of the map $\sum_{i \neq 1} d_{i} \partial_{\theta_{i}}\mathcal{E}_{\vb*{\theta}}$.
    Therefore, we obtain Eq.~\eqref{eq_cors1_pr_3} with $c_i = d_i$.
    
    Next, we prove the ``only if" direction.
    Assume that there exist parameters $c_i\in\mathbb{C}$ satisfying Eq.~\eqref{eq_cors1_pr_3}.
    Then, since we have
    \begin{align}
        \label{S:eq_cors1_pr_6}
        &\;\;\;\pdv{\theta_{i}}\mathcal{C}_{N+1}\circ(\mathcal{E}_{\vb*{\theta}}\otimes \mathcal{I})\circ \cdots\circ\mathcal{C}_2\circ(\mathcal{E}_{\vb*{\theta}}\otimes \mathcal{I})\circ \mathcal{C}_1(\hat{\rho}_0) \\
        &= \mathcal{C}_{N+1}\circ(\partial_{\theta_i}\mathcal{E}_{\vb*{\theta}}\otimes \mathcal{I})\circ \cdots\circ\mathcal{C}_2\circ(\mathcal{E}_{\vb*{\theta}}\otimes \mathcal{I})\circ \mathcal{C}_1(\hat{\rho}_0) + \cdots + \mathcal{C}_{N+1}\circ(\mathcal{E}_{\vb*{\theta}}\otimes \mathcal{I})\circ \cdots\circ\mathcal{C}_2\circ(\partial_{\theta_i}\mathcal{E}_{\vb*{\theta}}\otimes \mathcal{I})\circ \mathcal{C}_1(\hat{\rho}_0),
    \end{align}
    we obtain Eq.~\eqref{eq_cors1_pr_4} with $d_i = c_i$.
\end{proof}

\section{Details of Noise parameter estimation in Cycle Benchmarking for the Pauli-Z Rotation Gate}
In the main text, we discussed the {unbiased estimation} of noise {parameters} affecting the Pauli-Z rotation gate using cycle benchmarking~\cite{learnability-erhard2019characterizing}.
In this section, we provide a more detailed analysis.

As in the main text, we consider {the estimation of} the noise on the Pauli-$Z$ rotation gate $U = \hat{R}_z(\phi) = e^{-i\phi/2 Z}$ using cycle benchmarking.
As shown in Sec.~V, the noise channel $\mathcal{N}$ affecting $U$ can be twirled, such that its Pauli transfer matrix is expressed as
\begin{align}
    \begin{pmatrix}
        1 & 0 & 0 & 0 \\
        0 & \lambda_1\cos\theta & \lambda_1\sin\theta & 0 \\
        0 & -\lambda_1\sin\theta & \lambda_1\cos\theta & 0 \\
        \alpha & 0 & 0 & \lambda_2 \\
    \end{pmatrix},
\end{align}
which is parameterized by four independent parameters: $\lambda_1, \lambda_2, \alpha$, and $\theta$.
The noise channel $\mathcal{N}$ represents a mixture of depolarizing noise, dephasing noise, amplitude damping noise, and coherent noise due to over-rotation.

We consider {the unbiased estimation of} the twirled noise $\mathcal{N}$ using cycle benchmarking under state preparation and measurement (SPAM) errors.
We assume that the state preparation error $\mathcal{N}_S$ and measurement error $\mathcal{N}_M$ are Pauli-twirled to Pauli noise, whose Pauli transfer matrices are represented as
\begin{align}
    \begin{pmatrix}
        1 & 0 & 0 & 0 \\
        0 & \lambda_{1S} & 0 & 0 \\
        0 & 0 & \lambda_{2S} & 0 \\
        0 & 0 & 0 & \lambda_{3S} \\
    \end{pmatrix},
    \quad
    \begin{pmatrix}
        1 & 0 & 0 & 0 \\
        0 & \lambda_{1M} & 0 & 0 \\
        0 & 0 & \lambda_{2M} & 0 \\
        0 & 0 & 0 & \lambda_{3M} \\
    \end{pmatrix}.
\end{align}
In cycle benchmarking, we repeatedly apply the noisy $\hat{R}_z(\phi)$ gate.
This corresponds to analyzing the channel $\mathcal{N}_{d} = \mathcal{N}_M \circ (\mathcal{N} \circ \mathcal{U})^{\circ d} \circ \mathcal{N}_S$ for various $d$, whose Pauli transfer matrix $A_d$ is given by
\begin{align}
    \begin{pmatrix}
        1 & 0 & 0 & 0 \\
        0 & \lambda_{1M}\lambda_{1S}\lambda_1^d\cos d\theta' & \lambda_{1M}\lambda_{2S}\lambda_1^d \sin d\theta' & 0 \\
        0 & -\lambda_{2M}\lambda_{1S}\lambda_1^d\sin d\theta' & \lambda_{2M}\lambda_{2S}\lambda_1^d \cos d\theta' & 0 \\
        \lambda_{3M}\alpha\frac{1-\lambda_1^d}{1-\lambda_1} & 0 & 0 & \lambda_{3M}\lambda_{3S}\lambda_2^d \\
    \end{pmatrix}.
\end{align}
Here, $\theta' = \phi + \theta$ and $(A_d)_{ij}$ denotes the $(i+1,j+1)$-th component of $A_d$.

To encapsulate $\mathcal{N}_{d}$ for all integer values of $d$ in a single description, we introduce an augmented channel $\mathcal{N}_{\mathrm{cycle}}$, and a classical register $\hat{\rho}_{c}$. This register keeps track of the number of applied gates $d$. Formally, we define
\begin{align}
    \mathcal{N}_{\mathrm{cycle}} : \hat{\rho} \otimes \hat{\rho}_c \mapsto \sum_{d}\mathcal{N}_{d}(\hat{\rho}) \otimes \ketbra{d}\hat{\rho}_c\ketbra{d}.
\end{align}
In this representation, each outcome $\ketbra{d}$ in the register corresponds to the channel $\mathcal{N}_{d}$ acting on the initial state $\hat{\rho}$.

Since there is a one-to-one correspondence between a quantum channel and its Choi state, we consider its Choi state defined as
\begin{align}
    \hat{\rho}_{\mathrm{cycle}} = \mathcal{N}_{\mathrm{cycle}} \otimes \mathcal{I}(\ketbra{\Psi}\otimes\ketbra{\Psi_D}) = \frac{1}{D} \sum_{d=0}^D \hat{\rho}_d \otimes \ketbra{dd}.
\end{align}
Here, $\ketbra{\Psi}$ and $\ketbra{\Psi_D}$ are maximally entangled states for single-qubit and $D$-dimensional systems, respectively, and $\hat{\rho}_d$ is the Choi state of the channel $\mathcal{N}_{d}$, defined as
\begin{align}
    \hat{\rho}_d = \mathcal{N}_{d}\otimes \mathcal{I}(\ketbra{\Psi}).
\end{align}
Since the maximally entangled state $\ketbra{\Psi}$ can be represented as
\begin{align}
    \ketbra{\Psi} = \frac{1}{4} \sum_{i=0}^{3} c_i P_i\otimes P_i,
\end{align}
where $c_i = \pm1$, $P_0 = I$, $P_1 = X$, $P_2 = Y$, and $P_3 = Z$, we can represent $\hat{\rho}_d$ as
\begin{align}
    \hat{\rho}_d 
    &= \frac{1}{4}\sum_{i=0}^{3}\sum_{j=0}^{3}c_j(A_{d})_{ij}P_i\otimes P_j.
\end{align}

Let us show that a {locally} unbiased estimator of $\alpha$ cannot be obtained from $\hat{\rho}_{\mathrm{cycle}}$.
For the parameters $\alpha, \lambda_{3M}, \lambda_{3S}$, we have
\begin{align}
    \alpha \partial_\alpha \hat{\rho}_{\mathrm{cycle}} &= \frac{1}{D} \sum_{d=1}^D c_0(A_{d})_{30}P_3\otimes P_0\otimes \ketbra{dd}, \\
    \lambda_{3M} \partial_{\lambda_{3M}} \hat{\rho}_{\mathrm{cycle}} &= \frac{1}{D} \sum_{d=1}^D (c_0(A_{d})_{30}P_3\otimes P_0 + c_3(A_{d})_{33}P_3\otimes P_3)\otimes \ketbra{dd}, \\
    \lambda_{3S} \partial_{\lambda_{3S}} \hat{\rho}_{\mathrm{cycle}} &= \frac{1}{D} \sum_{d=1}^D (c_3(A_{d})_{33}P_3\otimes P_3)\otimes \ketbra{dd}.
\end{align}
Therefore, we obtain
\begin{align}
    \alpha \partial_\alpha \hat{\rho}_{\mathrm{cycle}} = \lambda_{3M} \partial_{\lambda_{3M}} \hat{\rho}_{\mathrm{cycle}} - \lambda_{3S} \partial_{\lambda_{3S}} \hat{\rho}_{\mathrm{cycle}},
\end{align}
which is equivalent to
\begin{align}
    \alpha \partial_\alpha \mathcal{N}_{\mathrm{cycle}} = \lambda_{3M} \partial_{\lambda_{3M}} \mathcal{N}_{\mathrm{cycle}} - \lambda_{3S} \partial_{\lambda_{3S}} \mathcal{N}_{\mathrm{cycle}}.
\end{align}
Thus, by Corollary~\ref{S:corollary1}, a {locally} unbiased estimator for the parameter $\alpha$ cannot be derived from $\mathcal{N}_{\mathrm{cycle}}$.

Next, we show that $\lambda_1, \lambda_2, \theta'$ {admit locally unbiased estimators within cycle benchmarking model}.
It is straightforward to observe that $\lambda_2$ can be {estimated} by calculating $(A_{d+1})_{33}/ (A_d)_{33}$, so we focus on parameters $\lambda_1$ and $\theta'$.
We can determine $\lambda_{1M}\lambda_{1S}$ and $\lambda_{2M}\lambda_{2S}$ by calculating $(A_0)_{11}$ and $(A_0)_{22}$.
Additionally, the ratio $\frac{\lambda_{1M}}{\lambda_{2M}}\frac{\lambda_{2S}}{\lambda_{1S}}$ can be obtained by taking the ratio of $(A_d)_{12}$ and $(A_d)_{21}$.
From these values, we can derive $\frac{\lambda_{1M}}{\lambda_{2M}}$ and $\frac{\lambda_{2S}}{\lambda_{1S}}$ and subsequently convert $(A_d)_{12} = \lambda_{1M}\lambda_{2S}\lambda_1^d\sin d \theta'$ into $\lambda_{1M}\lambda_{1S}\lambda_1^d\sin d\theta'$.
Taking the ratio of this expression with $(A_d)_{11} = \lambda_{1M}\lambda_{1S}\lambda_1^d\cos d\theta'$, we can determine $\theta'$.
Since we can {unbiasedly estimate} $\lambda_{1M}\lambda_{1S}$ and $\theta'$, we can also extract $\lambda_1$ from $(A_d)_{11} = \lambda_{1M}\lambda_{1S}\lambda_1^d\cos d\theta'$.
We can also {determine} $\lambda_{1M}\lambda_{2S}$ and $\lambda_{2M}\lambda_{1S}$ from $(A_d)_{12} = \lambda_{1M}\lambda_{2S}\lambda_1^d\sin d \theta'$ and  $(A_d)_{21} = -\lambda_{2M}\lambda_{1S}\lambda_1^d\sin d\theta'$.
Combining these {estimated} values with $\frac{\lambda_{1M}}{\lambda_{2M}}$ and $\frac{\lambda_{2S}}{\lambda_{1S}}$, we can individually {unbiasedly estimate} $\lambda_{1S}$, $\lambda_{2S}$, $\lambda_{1M}$, and $\lambda_{2M}$.

In conclusion, we can {unbiasedly estimate} $\lambda_1,\lambda_2,\theta$, but not $\alpha$.
When considering SPAM errors, we can {unbiasedly estimate} $\lambda_1$, $\lambda_2$, $\theta$, $\lambda_{1S}$, $\lambda_{2S}$, $\lambda_{1M}$, $\lambda_{2M}$, but cannot {unbiasedly estimate} $\alpha, \lambda_{3S}, \lambda_{3M}$ individually; only the products $\lambda_{3M}\alpha$ and $\lambda_{3M}\lambda_{3S}$ can be {unbiasedly estimate}.

\section{Unbiased Estimation of Noise Parameters in Cycle Benchmarking for the CNOT Gate}
In the previous section, we analyzed the {unbiased estimation of noise parameters} affecting the Pauli-$Z$ rotation gate using cycle benchmarking~\cite{learnability-erhard2019characterizing}.
Our analysis was based on Corollary~\ref{S:corollary1}, and a similar analysis can be applied to noise affecting general gates.
As another application of Corollary~\ref{S:corollary1}, we discuss the {unbiased estimation} of Pauli noise affecting the CNOT gate using cycle benchmarking.

Consider the 2-qubit noise channel $\mathcal{N}$ acting on the CNOT gate using cycle benchmarking.
Since the CNOT gate is a Clifford gate, we can Pauli twirl the noise channel $\mathcal{N}$ such that it becomes a Pauli noise~\cite{emerson2005scalable}.
Thus, $\mathcal{N}$ can be characterized by 15 Pauli eigenvalues $\{\lambda_i\}_{i=1}^{15}$.
We aim to {unbiasedly estimate} the noise parameters of $\mathcal{N}$ (i.e., the 15 Pauli eigenvalues $\{\lambda_i\}_{i=1}^{15}$) under state preparation and measurement (SPAM) errors.
We assume that the state preparation error $\mathcal{N}_S$ and measurement error $\mathcal{N}_M$ are also Pauli noise, whose Pauli eigenvalues are denoted by $\{\lambda_{iS}\}_{i=1}^{15}$ and $\{\lambda_{iM}\}_{i=1}^{15}$, respectively.

In cycle benchmarking, the noisy CNOT gate is applied repeatedly.
This means we have access to the channel $\mathcal{N}_{d} = \mathcal{N}_M \circ (\mathcal{N} \circ \mathcal{U})^{\circ d} \circ \mathcal{N}_S$ for various values of $d$, where $\mathcal{U}$ represents the CNOT gate.
To encapsulate $\mathcal{N}_{d}$ for all integer values of $d$ in a single description, we introduce an augmented channel $\mathcal{N}_{\mathrm{cycle}}$, and a classical register $\hat{\rho}_{c}$. This register keeps track of the number of applied gates $d$. Formally, we define
\begin{align}
    \mathcal{N}_{\mathrm{cycle}} : \hat{\rho} \otimes \hat{\rho}_c \mapsto \sum_{d}\mathcal{N}_{d}(\hat{\rho}) \otimes \ketbra{d}\hat{\rho}_c\ketbra{d}.
\end{align}
In this representation, each outcome $\ketbra{d}$ in the register corresponds to the channel $\mathcal{N}_{d}$ acting on the initial state $\hat{\rho}$.

Since there is a one-to-one correspondence between a quantum channel and its Choi state, we consider the Choi state defined as
\begin{align}
    \hat{\rho}_{\mathrm{cycle}} = \mathcal{N}_{\mathrm{cycle}} \otimes \mathcal{I}(\ketbra{\Psi}\otimes\ketbra{\Psi_D}) = \frac{1}{D} \sum_{d=1}^D \hat{\rho}_d \otimes \ketbra{dd}.
\end{align}
Here, $\ketbra{\Psi}$ and $\ketbra{\Psi_D}$ are maximally entangled states for the 2-qubit and $D$-dimensional systems, respectively, and $\hat{\rho}_d$ is the Choi state of the channel $\mathcal{N}_{d}$, defined as
\begin{align}
    \hat{\rho}_d = \mathcal{N}_{d}\otimes \mathcal{I}(\ketbra{\Psi}).
\end{align}
Since the maximally entangled state $\ketbra{\Psi}$ can be represented as
\begin{align}
    \ketbra{\Psi} = \frac{1}{16} \sum_{i=0}^{15} c_i P_i\otimes P_i,
\end{align}
where $c_i = \pm1$, we can represent $\hat{\rho}_d$ as
\begin{align}
    \hat{\rho}_d 
    &= \frac{1}{16}\sum_{i=0}^{15}c_i\mathcal{N}_d(P_i)\otimes P_i \\
    &= 
    \left\{
    \begin{array}{ll}
        \frac{1}{16}\sum_{i=0}^{15}c_{i,d}  P_i\otimes P_i & (d\text{: even}) \\
        \frac{1}{16}\sum_{i=0}^{15}c_{i,d}  P_{\mathrm{CNOT}(i)}\otimes P_i & (d\text{: odd}) \\
    \end{array}
    \right.
    ,
\end{align}
where $\mathrm{CNOT}(i)$ represents the index of the Pauli operator obtained by conjugating $P_i$ with the CNOT gate, and the coefficient $c_{i,d}$ is defined as
\begin{align}
    c_{i,d} = 
    \left\{
    \begin{array}{ll}
        c_i \lambda_{iM}\lambda_{iS}(\lambda_{i}\lambda_{\mathrm{CNOT}(i)} )^{d/2} & (d\text{: even}) \\
        c_i \lambda_{\mathrm{CNOT}(i)M}\lambda_{iS}\lambda_{\mathrm{CNOT}(i)}(\lambda_{i}\lambda_{\mathrm{CNOT}(i)} )^{(d-1)/2}  & (d\text{: odd}) \\
    \end{array}
    \right.
    .
\end{align}

We first show that if $\mathrm{CNOT}(i) = i$, i.e., if the Pauli operator $P_i$ commutes with the CNOT gate, then the corresponding Pauli eigenvalue $\lambda_i$ can be {unbiasedly estimated}.
For such a parameter $\lambda_i$, we have
\begin{align}
    \lambda_{i}\partial_{\lambda_i} \hat{\rho}_{\mathrm{cycle}} = \frac{1}{D} \sum_{d=1}^D dc_{i,d} P_i\otimes P_i \otimes \ketbra{dd}.
\end{align}
It is evident that such an expression for $\lambda_{i}\partial_{\lambda_i} \hat{\rho}_{\mathrm{conc}}$ cannot be written as a linear combination of the derivatives of other parameters due to the presence of the coefficient $d$.
Indeed, a {locally} unbiased estimator of $\lambda_i$ can be obtained by first estimating $c_{i,d}$ and $c_{i,d+2}$, and then calculating $\sqrt{c_{i,d+2}/c_{i,d}}$.

Next, we show that a {locally} unbiased estimator of $\lambda_i$ cannot be obtained if $\mathrm{CNOT}(i) \neq i$, i.e., if the Pauli operator $P_i$ does not commute with the CNOT gate.
For such a parameter $\lambda_i$, we have
\begin{align}
    \lambda_{i}\partial_{\lambda_i} \hat{\rho}_{\mathrm{cylce}} = \frac{1}{D} \Bigg(&\sum_{d\text{: even}} \left(\frac{d}{2} c_{i,d} P_i\otimes P_i + \frac{d}{2} c_{j,d} P_{j}\otimes P_{j}\right) \otimes \ketbra{dd} \\
    + &\sum_{d\text{: odd}} \left(\frac{d-1}{2} c_{i,d} P_{j}\otimes P_i + \frac{d+1}{2} c_{j,d} P_{i}\otimes P_{j} \right) \otimes \ketbra{dd}
    \Bigg),
\end{align}
where we define $j = \mathrm{CNOT}(i)$.
Furthermore, we have
\begin{align}
    \lambda_{j}\partial_{\lambda_{j}} \hat{\rho}_{\mathrm{cycle}} = \frac{1}{D} \Bigg(&\sum_{d\text{: even}} \left(\frac{d}{2} c_{i,d} P_i\otimes P_i + \frac{d}{2} c_{j,d} P_{j}\otimes P_{j}\right) \otimes \ketbra{dd} \\
    + &\sum_{d\text{: odd}} \left(\frac{d+1}{2} c_{i,d} P_{j}\otimes P_i + \frac{d-1}{2} c_{j,d} P_{i}\otimes P_{j} \right) \otimes \ketbra{dd}
    \Bigg),
\end{align}
and
\begin{align}
    \lambda_{iS}\partial_{\lambda_{iS}} \hat{\rho}_{\mathrm{cycle}}
    &= \frac{1}{D} \Bigg(\sum_{d\text{: even}} \left(c_{i,d} P_i\otimes P_i\right) \otimes \ketbra{dd} +\sum_{d\text{: odd}} \left(c_{i,d} P_{j}\otimes P_i \right) \otimes \ketbra{dd} \Bigg), \\
    \lambda_{iM}\partial_{\lambda_{iM}} \hat{\rho}_{\mathrm{cycle}}
    &= \frac{1}{D} \Bigg(\sum_{d\text{: even}} \left(c_{i,d} P_i\otimes P_i\right) \otimes \ketbra{dd} +\sum_{d\text{: odd}} \left(c_{j,d} P_{i}\otimes P_j \right) \otimes \ketbra{dd} \Bigg). \\
\end{align}
Therefore, we obtain
\begin{align}
    \lambda_{i}\partial_{\lambda_i} \hat{\rho}_{\mathrm{cycle}} = \lambda_{j}\partial_{\lambda_j} \hat{\rho}_{\mathrm{cylce}} - \lambda_{iS}\partial_{\lambda_{iS}} \hat{\rho}_{\mathrm{cylce}} + \lambda_{iM}\partial_{\lambda_{iM}} \hat{\rho}_{\mathrm{cylce}},
\end{align}
which is equivalent to 
\begin{align}
    \lambda_{i}\partial_{\lambda_i} \mathcal{N}_{\mathrm{cycle}} = \lambda_{j}\partial_{\lambda_j} \mathcal{N}_{\mathrm{cycle}} - \lambda_{iS}\partial_{\lambda_{iS}} \mathcal{N}_{\mathrm{cycle}} + \lambda_{iM}\partial_{\lambda_{iM}} \mathcal{N}_{\mathrm{cycle}}.
\end{align}
Thus, by Corollary~\ref{S:corollary1}, a {locally} unbiased estimator of $\lambda_i$ for $i \neq \mathrm{CNOT}(i)$ cannot be derived from $\mathcal{N}_{\mathrm{cycle}}$. 

To summarize, the Pauli noise $\mathcal{N}$ of the CNOT gate cannot be {unbiasedly estimated} by simply concatenating the gate in the presence of SPAM errors.
More precisely, a {locally} unbiased estimator can be obtained for the Pauli eigenvalues corresponding to Pauli operators that commute with the CNOT gate, while the remaining parameters cannot be {unbiasedly estimated}.
It is worth noting that similar results were obtained in Ref.~\cite{learnability-chen2023learnability}, where graph-theoretic tools were employed.
{Here, we provide a new proof based on the locally unbiased estimability criterion.}

\end{document}